\documentclass[a4paper]{article}

\usepackage{fullpage}
\usepackage{amsthm, amsmath}
\usepackage{url}
\usepackage{times}
\usepackage[utf8]{inputenc}
\usepackage{pgfplots}
\pgfplotsset{compat=1.14}
\usepackage{xspace}
\usepackage{mathtools}
\usepackage{cleveref}
\usepackage{paralist}
\usepackage{algorithm} 
\usepackage[noend]{algorithmic}

\newtheorem{definition}{Definition}
\newtheorem{theorem}{Theorem}
\newtheorem{lemma}{Lemma}

\newtheorem{claim}{Claim}

\newcommand{\E}{\ensuremath{E}} 
\newcommand{\poly}{\ensuremath{\mathsf{poly}}}
\newcommand{\prob}{\text{Pr}\xspace}

\newcommand{\cA}{\mathcal{A}\xspace}

\title{Breaking the $\tilde\Omega(\sqrt{n})$ Barrier:\\
   Fast Consensus under a Late Adversary\thanks{This work was supported by the LMS Computer Science Scheme 7 Grant Ref No: SC7-1516-11 and by the German Research Foundation (DFG) within the Collaborative Research Center ``On-The-Fly Computing'' (SFB 901).}}

\author{
Peter Robinson\thanks{McMaster University, Hamilton, Canada. \hbox{E-mail}:~{\tt peter.robinson@mcmaster.ca}}
\and
Christian Scheideler\thanks{Paderborn University, Germany.
\hbox{E-mail}:~{\tt scheideler@mail.upb.de}}
\and
Alexander Setzer\thanks{Paderborn University, Germany.
\hbox{E-mail}:~{\tt asetzer@mail.upb.de}}
}

\begin{document}

\maketitle

\begin{abstract}
We study the consensus problem in a synchronous distributed system of $n$ nodes under an adaptive adversary that has a slightly outdated view of the system and can block all incoming and outgoing communication of a constant fraction of the nodes in each round.
Motivated by a result of Ben-Or and Bar-Joseph (1998), showing that any consensus algorithm that is resilient against a linear number of crash faults requires $\tilde \Omega(\sqrt{n})$ rounds in an $n$-node network against an \emph{adaptive} adversary, we consider a late adaptive adversary, who has full knowledge of the network state at the beginning of the \emph{previous} round and unlimited computational power, but is oblivious to the current state of the nodes.
Our main contributions are randomized distributed algorithms that achieve consensus with high probability among all except a small constant fraction of the nodes (i.e.,\ ``almost-everywhere'') against a late adaptive adversary who can block up to $\epsilon n$ nodes in each round, for a small constant $\epsilon >0$.
Our first protocol achieves binary almost-everywhere consensus and also guarantees a decision on the majority input value, thus ensuring plurality consensus.
We also present an algorithm that achieves the same time complexity for multi-value consensus.
Both of our algorithms succeed in $O(\log n)$ rounds with high probability, thus showing an exponential gap to the $\tilde\Omega(\sqrt{n})$ lower bound of Ben-Or and Bar-Joseph for strongly adaptive crash-failure adversaries, which can be strengthened to $\Omega(n)$ when allowing the adversary to block nodes instead of permanently crashing them.
Our algorithms are scalable to large systems as each node contacts only an (amortized) constant number of peers in each communication round.
We show that our algorithms are optimal up to constant (resp.\ sub-logarithmic) factors by proving that every almost-everywhere consensus protocol takes $\Omega(\log_d n)$ rounds in the worst case, where $d$ is an upper bound on the number of communication requests initiated per node in each round.
We complement our theoretical results with an experimental evaluation of the binary almost-everywhere consensus protocol revealing a short convergence time even against an adversary blocking a large fraction of nodes.
\end{abstract}

\section{Introduction} \label{sec:intro}

Reaching consensus among the members of a distributed system is a fundamental ingredient for building a reliable distributed system out of unreliable parts (c.f., \cite{Lyn96,attiyawelch}), with applications ranging from synchronization of processes to distributed commit in databases.

Given that several impossibility results in asynchronous systems \cite{FLP85}  and pessimistic lower bounds \cite{dolevstrong,FLM86} for synchronous systems have been discovered for deterministic algorithms when nodes can be subjected to failures, much of the recent work on the consensus problem focuses on using randomization to defeat the adversary (see \cite{aspnes} for a comprehensive overview).
Early randomized consensus algorithms include the work of Ben-Or \cite{benor}, which achieves consensus even in the asynchronous setting, where nodes perform steps at different speeds and messages can exhibit delays.
For the more severe case of Byzantine failures, where some nodes are directly controlled by an adversary and hence deviate from the protocol, Rabin \cite{rabin} showed that it is possible to achieve agreement in a constant number of rounds in expectation when nodes have access to a shared common coin.
The recent work of \cite{DBLP:journals/jacm/KingS11}, which assumes private communication channels between non-corrupted nodes, and \cite{DBLP:journals/jacm/MostefaouiMR15} achieved a significant reduction of the required message complexity of randomized Byzantine agreement protocols.

Since all of the classic agreement protocols mentioned above require all-to-all communication and hence do not scale well to large systems, one direction of research is to study the consensus problems in systems where the nodes are interconnected by a sparse communication network.
Considering that the adversary can isolate (small) regions of a sparse network, achieving agreement among \emph{all} nodes is impossible without strong guarantees on the network's connectivity \cite{FLM86}.
However, for many practical applications such as achieving eventual consistency in a peer-to-peer network (e.g.,\ \cite{nakamoto2008bitcoin}), it is sufficient if a common decision is reached ``almost everywhere'', i.e., among all but a small fraction of nodes.
The \emph{almost-everywhere consensus} problem was first introduced in \cite{dwork1988fault} (called almost-everywhere agreement) as a way to circumvent the classical impossibility results for  consensus and Byzantine agreement in sparse communication networks and has been studied under various failure models, such as
strongly adaptive and oblivious adversaries.
More specifically, \cite{dwork1988fault} provided an almost-everywhere agreement algorithm that can tolerate $O(n/\log n)$ Byzantine faults in a sparse network. Later, \cite{upfal1992tolerating} improved over these results by presenting a protocol that tolerates a linear number of failures. However, both of these protocols require nodes to use a polynomial number of bits of communication.
More recently, almost-everywhere consensus with Byzantine nodes was studied in the context of bounded-degree peer-to-peer networks in the work of \cite{king2006towards}.
We point out that \cite{dwork1988fault,upfal1992tolerating,king2006towards} require nodes to be equipped with a priori knowledge about the network topology that needs to be hard-coded into the distributed algorithm.

Other works consider almost-everywhere consensus in dynamically changing expander network topologies~\cite{podc13,soda12} where the nodes can be subjected to churn over time.
More specifically, \cite{podc13} provide an algorithm using logarithmic-size messages in the presence of $\tilde O(\sqrt{n})$\footnote{The notation $\tilde O(.)$ hides a polylogarithmic factor.} (obliviously-controlled) churn and Byzantine nodes, while an extension of the algorithm can tolerate a strongly adaptive adversary albeit at the cost of requiring polynomial message sizes.
When there are no Byzantine nodes, \cite{soda12} show that it is possible to tolerate even $O(n)$-churn when the adversary is oblivious.
However, none of these algorithms work against a late adversary (defined in \Cref{sec:model}) who is adaptive \emph{and} can fail a constant fraction of nodes.

\subsection{Our Main Results} \label{sec:mainresults}

In this work, we consider the almost-everywhere consensus problem under a ``late'' adaptive adversary who has a slightly outdated view of the entire system when deciding its next move. Studying late adversaries is reasonable as it is unrealistic to assume that an adversary can make instantaneous decisions based on the global system state, and our results demonstrate that it already makes a huge difference if the adversary is just late by a single communication round.

We present new distributed algorithms for almost-everywhere consensus in a system of $n$ synchronous, completely interconnected nodes that can withstand attacks on a constant fraction of the nodes in each round by a late adversary who has full knowledge of the system state at the start of the previous round.
In our protocols, every node sends and receives at most $O(\log n)$ messages per round with high probability\footnote{We say an event $E$ holds with high probability (w.h.p.) if $\prob[E] \geq 1 - n^{-\Omega(1)}$.}.
In particular we present the following results under a late adaptive adversary that can block the communication of up to $\epsilon n$ nodes in each round, for a constant $\epsilon>0$:
\begin{enumerate}
  \item[(1)] In \Cref{sec:binary}, we provide an $O(\log n)$-rounds algorithm that achieves almost-everywhere binary consensus with high probability, where each node contacts $O(1)$ other nodes per round. Moreover, the algorithm achieves plurality consensus (w.h.p.).
  \item[(2)] In \Cref{sec:multivalue}, we present an $O(\log n)$-rounds algorithm that solves almost-everywhere multi-value consensus (w.h.p.). (Instead of the plurality guarantee, this algorithm satisfies the classical validity condition of consensus; see \Cref{sec:model}.)
  \item[(3)] In \Cref{sec:lowerbound} we show a lower bound on the time complexity of any algorithm that achieves almost-everywhere agreement with high probability assuming that each node can communicate with at most $d$ nodes (chosen by the algorithm and not by the adversary), requires $\Omega(\log_d n )$ rounds in the worst case.
    This implies that our algorithm for binary consensus (cf.\ \Cref{sec:binary}) is asymptotically optimal, whereas the multi-value algorithm (cf.\ \Cref{sec:multivalue}) is optimal up to sub-logarithmic factors, due to requiring $d = \Theta(\log n)$.
  \item[(4)] In \Cref{sec:aelowerbound} we deal with another lower bound: Note that the work of \cite{barjoseph} shows that if a \emph{strongly adaptive} adversary who observes the current system state including the output of the random coinflips can fail up to $t$ of the $n$ nodes in each round, then with high probability the adversary can force any consensus algorithm to run for $\Omega(t/\sqrt{n \log n})$ rounds.
  This result holds even without restricting the communication of the nodes. 
  In fact, when considering a strongly adaptive adversary who can block a set of nodes, instead of permanently crashing them, it is possible to show that $\Omega(n)$ rounds are necessary, which we elaborate on in \Cref{thm:lb_fullyadaptive}.
  \item[(5)] Finally, in \Cref{sec:experiments} we present an experimental evaluation of the algorithm from \Cref{sec:binary} with different parameters, suggesting a good practical performance of at most $2 \log(n)$ rounds convergence time (on average) of that algorithm and variants of it, even against an adversary who can block a large number of nodes.
\end{enumerate}

\subsection{Technical Contribution}\label{sec:tc}

The main challenge that we have to overcome is the fact that the adversary has an almost up-to-date knowledge of the state of the system.
The key ingredient in our approach in handling a linear number of adversarial attacks is to limit the extent to which the previous state of a node influences the computations of the algorithms.

For the binary consensus problem, at first glance the median rule presented in \cite{doerr2011adversary} seems to be a good candidate to handle a late adversary. 
It uses a simple ``pull'' strategy in which each node requests the values from $2$ nodes chosen uniformly and independently at random, and selects the median of these two values and its own value as its new value. 
However, ``pull'' strategies, which inherently require two communication rounds to compute a new value, do not work against a late adaptive adversary, as a late adversary can anticipate the values received at the end of the response round by the end of the request round.
This essentially makes the adversary as powerful as the strongly adaptive one.
Another problem is that the value of a node depends on its previous value.
In fact, it can be shown that the median rule fails w.h.p. against a late adversary that can just cause $\Omega(\sqrt{n \log n})$ faults.
Thus, we converted the median rule into a ``push'' strategy, in which each node spreads its value in every communication round to other nodes selected uniformly at random, and picks the majority of a random subset of received values for its new value.
When using such a procedure, however, the probabilities of the nodes to receive certain values are no longer independent from each other. 
This requires a much more involved analysis, as we can no longer rely on standard Chernoff bounds and the central limit theorem for showing a drift towards one value and need to use the Paley-Zygmund inequality instead. 
Also, it can happen that nodes do not receive enough values to compute a new value. 
In that case, we need an additional undecided state for the nodes to ensure independency from the past node state, which has the effect that the number of nodes with a value may vary from round to round. 
This, in turn, causes the support of the dominating value to no longer increase in absolute terms but rather only relatively to the number of nodes that are not in an undecided state.

While the algorithm for binary almost-everywhere consensus (\Cref{sec:binary}) solely bases its computations on the values received from other nodes, in our algorithm for handling multi-value inputs (\Cref{sec:multivalue}) a node does keep its value unless it is blocked or learns about a larger value. Hence, eventually a consensus is reached on the maximum value in the system. However, as the events of nodes joining the set that is informed about the maximum value are not independent, we use the method of bounded differences and Azuma's Inequality that a consensus is reached even under a late adversary that can block a constant fraction of the nodes.

\subsection{Other Related Work} \label{sec:otherrelated}
As mentioned above, our algorithm for binary consensus is inspired by \cite{doerr2011adversary} which considers almost-stable consensus under an adversary that may adaptively change the values of up to $\sqrt{n}$ nodes after each round.
The authors derive an $O(\log n + \log k \log\log n)$ runtime bound w.h.p. for an arbitrary number of distinct initial values $k$ and $O(\log n)$ if $k=2$.
The latter bound is tight (even without an adversary) as we show in this work.
To achieve their results, the authors use the median rule. In the binary case this is equivalent to taking the majority of the received values, which resembles the 3-majority rule.
\cite{Becchetti2016stabilizing} study the 3-majority rule under an $o(\sqrt{n})$-adaptive adversary, showing that it converges in time polynomial in $k$ and $n$ w.h.p.

Our algorithm in \Cref{sec:binary} guarantees a decision on the majority input value (if any), which was studied in non-adversarial settings by \cite{BCNPST}. They present an algorithm with runtime $\Theta(k\log n)$ w.h.p. for $k$ being the number of distinct initial values, which they prove to be tight for a wide range of values of $k$ and the initial bias.
For certain values of $k$, the runtime of this algorithm is outperformed by \cite{BCNPS} in which each node may contact one neighbor at random in each round.
The authors of the latter paper consider what they call the Undecided-State Dynamics (originally introduced in \cite{AAE} as Third-State dynamics), which makes use of an additional ``undecided'' value different from all input values.
This idea has also been used in \cite{berenbrink_et_al:LIPIcs:2016:6271} and \cite{Ghaffari:2016:PGA:2933057.2933097} to solve plurality consensus with small memory overhead and an exponential improvement on the runtime compared to \cite{BCNPST}.
In contrast to these works that do not consider adversaries, our algorithms, which also use undecided values, do so to prevent the adversary from influencing the distribution of values by releasing previously blocked nodes.
Note that all consensus protocols mentioned thus far use a ``pull'' strategy, under which a late adversary is as powerful as a fully adaptive one (as was argued in Section~\ref{sec:tc}).
Thus none of the analyses of these protocols could be trivially adapted to achieve a consensus under a constant fraction of adversarial nodes in $\tilde o(\sqrt{n})$ rounds .

The idea of considering a late adversary who knows a previous state of the system has only been studied in a few papers so far (e.g., \cite{DBLP:conf/wdag/AwerbuchS07,kuhn-opodis15, DBLP:conf/algosensors/AhmadiK16, drees16lateadversary, DBLP:journals/corr/KlonowskiKM17}).
As one of the first attempts, \cite{DBLP:conf/wdag/AwerbuchS07} studies how to make an information system resilient against a blocking adversary that has knowledge of the system up to some point in time.
In \cite{kuhn-opodis15} and \cite{DBLP:conf/algosensors/AhmadiK16}, an adversary is called $\tau$-oblivious if it knows the randomness of the nodes up to round $r-\tau$ to make a decision for round $r$, in which the special case of a $0$-oblivious adversary is called a strongly adaptive adversary.
The authors of \cite{drees16lateadversary} consider a so-called $t$-late adversary that bases its decision in round $r$ on knowledge of the complete state of nodes in round $r-t$ and use this concept to maintain connectivity under an $O(\log \log n)$-late adversary who may cause churn and block nodes.
\cite{DBLP:journals/corr/KlonowskiKM17} introduces the notion of a $c$-round-delayed adversary which resembles a $c$-oblivious adversary in the notation of \cite{kuhn-opodis15}.

A failure model related to the blocking adversary is the crash-recovery model considered for asynchronous consensus in \cite{aguilera2000failure}. We point out that, in contrast to our model, in \cite{aguilera2000failure} nodes lose their entire state upon recovery and can only recover state that was written to a stable storage.

\cite{DBLP:conf/soda/GilbertK10} presents a protocol that achieves consensus against an oblivious adversary requiring only $O(n)$ message complexity, which is clearly optimal. While the overall communication of their algorithm is small, it requires an asymmetric communication load, as some nodes need to send $\Omega(\sqrt{n})$ messages.
Other work that considers consensus in models with restricted communication includes \cite{newport}, who studies agreement in the radio network model of wireless communication.

\subsection{Computing Model and Problem Definition} \label{sec:model}

We consider a system of $n$ anonymous, completely interconnected nodes that operate in synchronous rounds. 
In each round, every node first receives all messages from the previous round (that are not blocked by the adversary), then does some arbitrary finite internal computation, and then sends off messages to other nodes. 

\noindent\textbf{Adversarial Model.}
Due to the lower bound results of \cite{barjoseph,podc13} discussed in \Cref{sec:mainresults}, achieving agreement in logarithmic time is out of reach against a strongly adaptive adversary who can block a linear number of nodes per round, even when the amount of communication that a node can perform in each round is unrestricted.
This motivates us to consider a \emph{late $\epsilon$-bounded adaptive adversary}, who, at the start of each round $t\ge 2$, can observe the state of the entire system at the beginning of round $t-1$. Based on that state, the adversary can block any set $B_t$ of $\le\epsilon n$ nodes, with the effect that none of these nodes can receive or send messages in that round.
We point out that, in the very first round, the adversary knows the initial state of the system. However, it does not know the coin flips performed by nodes in this round.
Moreover, we assume that nodes know if they are currently blocked.
Notice that the adversary is still adaptive in the sense that it can tailor its next attack on the previously observed network state.

\begin{definition}
  Let $\gamma<\frac{1}{2}$ be a positive constant.
We assume that each node $u$ starts with an \emph{input value} chosen by the adversary from a domain of size polynomial in $n$ and, eventually, $u$ irrevocably decides on a value.
  We say that an algorithm solves \emph{almost-everywhere agreement in $t$ rounds with high probability and loss $\gamma n$}, if the algorithm terminates in $t$ rounds and the following properties hold w.h.p.:
\begin{compactdesc}
\item[(Almost-Everywhere Agreement)] All except at most $\gamma n$ nodes decide on the same value.
\item[(Validity)] If all nodes start with the same input value $x$, then $x$ is the only possible decision.
\end{compactdesc}
\end{definition}
In addition to the classic validity definition above, we are also interested in a stronger validity condition:\\
\noindent\textbf{(Plurality)}
If an almost-everywhere agreement algorithm $\cA$ guarantees the stronger validity condition that stipulates $x$ as the only possible decision if
the number of nodes having input $x$ exceeds the number of nodes having input $y$ for every other possible value $y \neq x$ by at least $s \geq \gamma n$. In this case, we say that \emph{$\cA$ solves almost-everywhere plurality consensus with initial bias $s$}.

\section{Binary Consensus} \label{sec:binary}
In this section, we assume that each node $u$ has a local variable $x_u$ that can hold the values 0, 1, or $\bot$ (i.e. ``undefined''). A node $u$ is called {\em undefined} if $x_u=\bot$ and otherwise {\em defined}. Consider the following protocol for some integers $k,\ell > 0$, where $k \geq \ell$ and $\ell$ odd:

\noindent{\bf $(k,\ell)$-majority algorithm.} In each round, every node $u$ acts according to the following rules:
\begin{compactitem}
\item \textbf{reset rule:} If $u$ has received less than $\ell$ values from the previous round, or it is blocked, then it sets $x_u:=\bot$ and skips the rest of this round.
\item \textbf{update rule:} Otherwise, $u$ picks $\ell$ of the received values uniformly at random, sets
 $x_u$ to the majority value of the $\ell$ values, and sends out $x_u$ to $k$ nodes chosen independently and uniformly at random.
\end{compactitem}
We assume here that if the adversary blocks $u$, then $u$ notices it. (In practice, it may simply not receive enough values, which would also have the effect in our protocol that $x_u$ is set to $\bot$.) In the rest of this section we show:
\begin{theorem} \label{thm:convergence}
For any initial assignment of values to the nodes with no undefined node and any $\epsilon \le 1/16$, $O(\log n)$ rounds of the $(6,3)$-majority rule suffice to ensure that all except a constant fraction of nodes hold the same value (w.h.p) against a late $\epsilon$-bounded blocking adversary.
\end{theorem}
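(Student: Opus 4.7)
\begin{proofsketch}
The central potential function is the bias $\Delta_t = |A_t| - |B_t|$, where $A_t, B_t$ denote the sets of defined nodes holding values $1$ and $0$ at the end of round $t$, and $D_t = A_t \cup B_t$. Assuming without loss of generality $|A_0| \ge |B_0|$, the plan is to split the analysis into three phases: (i) a symmetry-breaking phase that drives $|\Delta_t|$ to $\Omega(\sqrt{n})$ from a (nearly) balanced start; (ii) a multiplicative amplification phase that carries $|\Delta_t|$ up to $\Theta(n)$ in $O(\log n)$ rounds; and (iii) a short consolidation phase reducing the minority to $\gamma n$ nodes. The classical validity property is immediate: if every node starts with value $1$, every received value is $1$, every majority computation outputs $1$, and this invariant is preserved round by round.

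For the amplification phase, fix the state at the end of round $t$ and set $\rho_t = |A_t|/|D_t|$. Because the adversary's blocking set $B_{t+1}$ depends only on the state at the end of round $t-1$, it is independent of all random choices made during round $t$. An unblocked node $u$ receives approximately Poisson-distributed numbers of $1$-messages and $0$-messages with means $6|A_t \setminus B_{t+1}|/n$ and $6|B_t \setminus B_{t+1}|/n$; conditional on receiving at least $3$ messages (which occurs with constant probability whenever $|D_t|$ is a constant fraction of $n$), a direct calculation shows that the majority of $3$ uniformly sampled received values equals $1$ with probability $3\rho_t^2 - 2\rho_t^3 + O(\epsilon)$. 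Since $3\rho^2 - 2\rho^3 - \rho = \rho(1-\rho)(2\rho-1)$, the expected fraction of $1$-valued newly-defined nodes exceeds $\rho_t$ by $\Omega(2\rho_t - 1)$, so the bias grows by a constant factor per round as long as $\rho_t$ is bounded away from $1/2$ and from $1$. The adversary's $\epsilon n$ blocking budget shifts the effective fractions by only $O(\epsilon)$, which is absorbed by the drift for $\epsilon \le 1/16$.

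The main obstacle is concentration: the majority outcomes at distinct receivers are positively correlated because every sender broadcasts to $k=6$ recipients, so any two receivers may share common senders and a direct Chernoff bound is not available. I would instead bound $\mathrm{Var}(|A_{t+1}|)$ by expanding into per-receiver variances (each $O(1)$) plus pairwise covariances, showing the covariance between any two fixed receivers is $O(1/n)$ because they share a common sender with probability $O(1/n)$; summing yields $\mathrm{Var}(|A_{t+1}|) = O(n)$. Applying the Paley-Zygmund inequality then gives a constant-probability lower bound that $|A_{t+1}| - |B_{t+1}|$ attains a constant fraction of its expectation, which I would boost to high probability by iterating over $O(\log n)$ rounds with independent randomness (or via a martingale once the bias is already $\Omega(\sqrt{n})$). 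The same variance bound drives the symmetry-breaking phase: starting from $\Delta_0 = 0$, round $1$ produces a bias with mean $O(\epsilon n)$ and variance $\Theta(n)$, so Paley-Zygmund yields $|\Delta_1| = \Omega(\sqrt{n})$ with constant probability, after which amplification takes over. Finally, once $\rho_t \ge 1 - c$ for sufficiently small $c$, a union bound shows each node fails to be defined-and-$1$ with probability only $O((1-\rho_t)^2 + \epsilon)$, so after $O(\log n)$ further rounds all but at most $\gamma n$ nodes hold the majority value with high probability.
\end{proofsketch}
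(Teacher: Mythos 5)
Your three-phase skeleton, the $3\rho^2 - 2\rho^3$ drift computation, and the use of Paley--Zygmund to break symmetry all match the paper's proof (its Cases 3, 2 and 1, plus Lemma~\ref{lem:mindefined} controlling the number of defined nodes). However, there are genuine gaps in the concentration arguments. For the amplification phase you propose variance bounds plus Paley--Zygmund, but Paley--Zygmund is an \emph{anti}-concentration inequality: it lower-bounds the probability that a nonnegative variable exceeds a fraction of its mean, and cannot show that the minority count stays \emph{below} $(1/2-(9/8)\delta_t)n_{t+1}$, which is the per-round w.h.p.\ upper-tail statement the phase requires. A constant-probability growth guarantee per round does not compose into a w.h.p.\ $O(\log n)$-round bound, because in the complementary constant-probability event the bias can regress (possibly back below $\sqrt{n}$), so ``iterating over $O(\log n)$ rounds with independent randomness'' does not rescue the argument. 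The paper resolves this by proving that the receivers' indicators are $s$-wise independently upper bounded (Claims~\ref{cl:neg-cor2} and~\ref{cl:neg-cor3}) and invoking a generalized Chernoff bound (Lemma~\ref{lem:chernoff}), which yields exponential tails in every round; your parenthetical martingale/Azuma suggestion could in principle serve the same purpose, but it is not developed and is the crux of the whole proof.

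Even granting w.h.p.\ per-round growth for $\Delta_t \ge c\sqrt{n_t\ln n_t}$, there remains the regime $\Omega(\sqrt{n}) \le \Delta_t < c\sqrt{n_t\ln n_t}$, where the per-round failure probability $e^{-\Theta(\delta_t^2 n_t)}$ is merely a constant and the bias can repeatedly collapse. The paper bridges this with a stochastic-domination lemma (Lemma~\ref{lem:dom}) reducing to the worst case $\Delta_t=0$ together with a Markov-chain dichotomy result imported from Doerr et al.\ (Claim~\ref{cl:dichotomy}); your sketch offers no substitute for this step. A further omission: applying Paley--Zygmund to $|\Delta_{t+1}|^2$ requires bounding $\E[|\Delta_{t+1}|^4] = O(n^2)$ despite the dependencies among receivers, which is the most laborious computation in the paper's Lemma~\ref{lem:constantjump}; the second moment alone does not suffice. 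Your validity argument and the quadratic-decay consolidation phase are fine and agree with the paper's Case~1.
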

On a high-level view, the proof works as follows:
Let $X_t$ be the number of nodes with value 0 at the end of round $t$, let $Y_t$ be the number of nodes with value 1 at the end of round $t$, and let $n_t := X_t + Y_t$ be the number of defined nodes at the end of round $t$.
First of all, observe that there will be at least $3n/4$ defined nodes at the end of round 1 since the adversary can block at most $n/16$ many nodes.
In Lemma~\ref{lem:mindefined} we show that this holds after every round.
In the remaining part, we assume without loss of generality that $X_t \le Y_t$ and define the {\em imbalance} at round $t$ by $\Delta_t = (Y_t-X_t)/2$ (which is non-negative by our assumption). 
Based on the imbalance $\Delta_t$ we distinguish between three cases. 
First, we show that if the imbalance $\Delta_t$ exceeds $n_t/4$, i.e., there is a value that at least $3/4$ of the defined nodes have, then this value will be spread to almost every node within $O(\log \log n)$ rounds w.h.p.
Second, we show that if $c\sqrt{n_t \ln n_t} \le \Delta_t < n_t/4$ for a sufficiently large constant $c$, the number of nodes having the dominating value will grow within $O(\log n)$ rounds w.h.p. such that it is held by at least $3/4$ of all defined nodes.
While the overall structure of the proof in the first two cases follows \cite{doerr2011adversary}, here we face additional complications due to dependencies among the random variables.
Note that these two cases directly imply that  the $(k,\ell)$-majority algorithm solves the plurality consensus problem for initial bias $s \geq \sqrt{n(1 + \epsilon) \log (n(1 + \epsilon))}$.
In the third case, we show that for very low imbalances, with high probability one of the two values will take over a sufficiently large number of the nodes such that the requirements of the second case are fulfilled.

To begin with the formal analysis, we introduce some results from probability theory.
\begin{definition}
Let $X_1,\ldots,X_n$ be binary random variables. We say that $X_1,\ldots,X_n$ are {\em $s$-wise independently upper bounded by} $p_1,\ldots,p_n$ if for any subset $S \subset \{1,2,\ldots,n\}$ with $|S| \le s$ and any $i \in \{1,2,\ldots,n\} \setminus S$ it holds that
$
  \Pr[X_i=1 \mid \bigwedge_{j \in S} X_j = 1] \le p_i.
$
\end{definition}
For these random variables the following variant of the Chernoff bounds can be shown:
\begin{lemma} \label{lem:chernoff}
Let $X_1,\ldots,X_n$ be binary random variables that are $k$-wise independently upper bounded by $p_1,\ldots,p_n$ and $X=\sum_{i=1}^n X_i$. Also, let $\mu = \sum_{i=1}^n p_i$. Then it holds for all $\delta > 0$ that
\begin{align*}
  \Pr[X \ge (1+\delta)\mu] & \le \left( \frac{e^{\delta}}{(1+\delta)^{1+\delta}}
  \right)^\mu
  \le e^{-\delta^2 \mu/(2(1+\delta/3))} \le e^{-\min\{\delta^2, \delta\} \mu/3}
\end{align*}
as long as $k \ge (1+\delta)\mu$.
\end{lemma}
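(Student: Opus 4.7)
The plan is to mimic the classical Chernoff-bound proof via factorial moments, with the $k$-wise upper bound hypothesis playing the role of full independence. First I would lift the one-variable conditional bound from the definition to a joint bound on small intersections: applying the chain rule,
\[
  \Pr\!\left[\bigwedge_{i \in T} X_i = 1\right] = \prod_{j=1}^{|T|} \Pr\!\left[X_{i_j}=1 \mid \bigwedge_{\ell<j} X_{i_\ell}=1\right],
\]
and each factor on the right can be bounded by $p_{i_j}$ as long as the conditioning set has size at most $k$, which is guaranteed whenever $|T|\le k+1$. Hence $\Pr[\bigwedge_{i \in T} X_i = 1] \le \prod_{i \in T} p_i$ for all such $T$.

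Since each $X_i\in\{0,1\}$, the algebraic identity $\binom{X}{m} = \sum_{T \subseteq [n],\,|T|=m}\prod_{i \in T}X_i$ holds. Taking expectations and combining this with the previous step together with Newton's inequality $e_m(p_1,\dots,p_n)\le\mu^m/m!$ for the $m$-th elementary symmetric polynomial yields the factorial moment bound
\[
  \E\!\left[\binom{X}{m}\right] \le \frac{\mu^m}{m!} \qquad \text{for every } m\le k+1.
\]
Now I would apply Markov's inequality directly to $\binom{X}{m}$ for an optimally chosen integer $m$ close to $\delta\mu$; since $\delta\mu<(1+\delta)\mu\le k$, such an $m$ lies safely within the admissible range. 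Concretely,
\[
  \Pr[X \ge (1+\delta)\mu] \le \frac{\E[\binom{X}{m}]}{\binom{\lceil(1+\delta)\mu\rceil}{m}} \le \frac{\mu^m/m!}{\binom{\lceil(1+\delta)\mu\rceil}{m}},
\]
and a Stirling estimate collapses the right-hand side to $\left(e^\delta/(1+\delta)^{1+\delta}\right)^\mu$, matching the first claimed form. The remaining two inequalities follow from the routine calculus facts $\delta-(1+\delta)\ln(1+\delta)\le -\delta^2/(2(1+\delta/3))$ for $\delta>0$ and $\delta^2/(2+2\delta/3)\ge\min\{\delta^2,\delta\}/3$.

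I expect the main technical hurdle to be executing the Stirling step cleanly enough that the subpolynomial prefactors coming from $m!$ and from the falling factorial $(\lceil(1+\delta)\mu\rceil)_m$ cancel and leave precisely the sharp Chernoff exponent. Differentiating $m\ln\mu - \ln(((1+\delta)\mu)_m)$ with respect to $m$ locates the optimum at $m\approx\delta\mu$, which is exactly why the hypothesis is phrased as $k\ge(1+\delta)\mu$: the optimal moment order is then guaranteed to lie in the range where the factorial moment bound applies. In the degenerate regime $\delta\mu<1$ the claimed bound is at most $1$ and no analysis is needed.
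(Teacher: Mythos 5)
Your proposal is correct and is essentially the proof the paper points to: the paper does not prove the lemma itself but cites Theorem 3.52 of [Sch00], which rests on the $k$-th binomial-moment method of [SSS95] — exactly your chain-rule bound on $\Pr[\bigwedge_{i\in T}X_i=1]$, the Maclaurin bound $e_m(p_1,\dots,p_n)\le\mu^m/m!$, and Markov applied to $\binom{X}{m}$ with $m=\lceil\delta\mu\rceil$. The "Stirling step" you flag is carried out in [SSS95, Thm.~1] and does yield the exact Chernoff form, so there is no gap.
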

This lemma directly follows from Theorem 3.52 in \cite{Sch00} which itself is based on results in \cite{SSS95}.

Certainly, there will be at least $3n/4$ defined nodes at the end of round 1 since the adversary can block at most $n/16$ many nodes. For all subsequent rounds it holds:

\begin{lemma}\label{lem:mindefined}
For any assignment of values to the nodes with at least $3n/4$ defined nodes at the end of round $t$, there will also be at least $3n/4$ defined nodes at the end of round $t+1$ w.h.p.
\end{lemma}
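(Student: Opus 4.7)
The plan is to count, at the end of round $t+1$, the undefined nodes, which come in two types: (i) nodes blocked in round $t+1$, of which there are at most $\epsilon n \le n/16$, and (ii) nodes that receive fewer than $\ell = 3$ of the messages the defined nodes sent during round $t$. Because the adversary is late, its choice of blocked set $B_{t+1}$ depends only on the state at the start of round $t$ and is therefore independent of the random destinations chosen in round $t$; so adding the two counts gives a valid upper bound, and only type (ii) requires probabilistic analysis.

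For the expectation, each of the $n_t \ge 3n/4$ defined nodes sends $k = 6$ messages to destinations chosen independently and uniformly at random, so the number $M_u$ of messages received by any fixed node $u$ stochastically dominates a $\mathrm{Bin}(\lceil 4.5 n \rceil, 1/n)$ variable with mean $\mu \ge 4.5$. A direct calculation (via the Poisson limit with $O(1/n)$ corrections) yields $\Pr[M_u \le 2] \le e^{-4.5}\bigl(1 + 4.5 + 4.5^2/2\bigr) + o(1) \approx 0.174 + o(1)$, so the expected number of type-(ii) nodes is at most roughly $0.175\,n$. Together with the $n/16 = 0.0625\,n$ blocked nodes, the expected number of undefined nodes is at most about $0.237\,n$, comfortably below the $n/4$ target with an absolute slack of about $0.013\,n$.

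For concentration I would apply McDiarmid's bounded-differences inequality, viewing the count $L$ of type-(ii) nodes as a function of the $6 n_t$ independent destination choices. Reassigning a single message from a source bin to a target bin can turn the source bin from non-light to light (only if its count was exactly $3$, contributing $+1$) and can turn the target bin from light to non-light (only if its count was exactly $2$, contributing $-1$); a short case check shows the net change lies in $\{-1, 0, +1\}$, so the Lipschitz constant is $1$. McDiarmid then gives $\Pr\bigl[|L - \expect[L]| \ge t\bigr] \le 2\exp\bigl(-t^2/(3n_t)\bigr)$, which is $n^{-\Omega(1)}$ already for $t = \Theta(\sqrt{n \log n}) = o(n)$. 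Combining, the number of undefined nodes at the end of round $t+1$ is at most $n/16 + (0.174 + o(1))\,n + o(n) < n/4$ w.h.p., giving at least $3n/4$ defined nodes as required.

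The main obstacle is the dependence structure among the $M_u$'s: they arise from a balls-into-bins process, so the indicators $\mathbf{1}[M_u < 3]$ are correlated across $u$ and Lemma~\ref{lem:chernoff} does not apply directly to $L = \sum_u \mathbf{1}[M_u < 3]$. McDiarmid's inequality bypasses this by relying only on independence across the sent messages (one coordinate per message) rather than independence across receivers; alternatively, one could Poissonise the destinations and apply Lemma~\ref{lem:chernoff} after verifying the appropriate $k$-wise upper bound.
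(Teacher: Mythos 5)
Your proposal is correct, and it matches the paper's proof in its decomposition (undefined nodes are either blocked, at most $\epsilon n\le n/16$ of them, or receive fewer than $\ell=3$ of the $6n_t\ge 4.5n$ messages) and in the expectation estimate (your Poisson-limit value $\approx 0.174$ is consistent with the paper's bound of $2/11$ on the per-node probability of receiving at most two values). Where you genuinely diverge is the concentration step, which is the real content of the lemma given that the receive counts are dependent. The paper proves Claim~\ref{cl:neg-cor}: conditioned on any set $S$ of at most $n/4$ nodes being undefined, the probability that a further node is undefined is still at most $2/11$; this ``$s$-wise independently upper bounded'' structure then feeds into the tailored Chernoff variant of Lemma~\ref{lem:chernoff}. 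You instead treat the light-bin count as a function of the $6n_t$ independent destination choices, check a Lipschitz constant of $1$, and apply McDiarmid's bounded-differences inequality, getting a deviation bound of $2\exp(-t^2/(3n_t))$ that is $n^{-\Omega(1)}$ already at deviation $\Theta(\sqrt{n\log n})=o(n)$, which is ample given the constant-fraction slack. Both routes are sound; yours is the more standard balls-into-bins argument and avoids the conditioning computation of Claim~\ref{cl:neg-cor}, while the paper's choice keeps this section's machinery uniform, since the same $s$-wise upper-bounding technique is reused in Claims~\ref{cl:neg-cor2} and \ref{cl:neg-cor3} where a bounded-differences argument would not directly yield the multiplicative drift needed there (the paper reserves the martingale method for \Cref{sec:multivalue}). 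One cosmetic note: you overload $t$ as both the round index and the deviation parameter in the McDiarmid bound; rename one of them.
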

\begin{proof}
Recall that $n_t$ is the number of defined nodes at the end of round $t$. 
For any non-blocked node $u$ in round $t+1$ it holds that
$  \Pr[u \mbox{ receives no value}]  = (1-1/n)^{6n_t} \le e^{-6n_t/n}$ and
\begin{align*}
  \Pr[u \mbox{ receives one value}] & = \binom{6n_t}{1} (1/n) \cdot (1-1/n)^{6n_t-1}  \le \frac{6n_t}{n} \cdot e^{-(6n_t-1)/n},\\
  \Pr[u \mbox{ receives two values}] & = \binom{6n_t}{2} (1/n)^2 \cdot (1-1/n)^{6n_t-2} 
     \le \frac{1}{2} \left( \frac{6n_t}{n} \right)^2 \cdot e^{-(6n_t-2)/n}.
\end{align*}
Hence,
\[
  \Pr[u \mbox{ receives $\le 2$ values}] \le \left( 1+ \frac{6n_t}{n} + \frac{1}{2} \left( \frac{6n_t}{n} \right)^2 \right) \cdot e^{-(6n_t-2)/n}.
\]
If $n_t \ge 3n/4$ then $\Pr[u$ receives $\le 2$ values$] \le 2/11$ if $n$ is large enough. Since the adversary can block at most $n/16$ nodes, we get
$
  \E[ \mbox{number of undefined nodes} ] \le (2/11)(1-1/16)n + n/16 \le n/4.4.
$
Unfortunately, the probabilities of the nodes to be undefined are not independent, but they are $n/4$-wise independently upper bounded by $2/11$ as shown by the following claim.

\begin{claim} \label{cl:neg-cor}
For any subset $S$ of non-blocked nodes with $|S|\le n/4$ and any non-blocked node $u \not\in S$ it holds that
$
  \Pr[ u \mbox{ undefined} \mid \mbox{all nodes in $S$ undefined} ] \le 2/11.
$
\end{claim}
\begin{proof}
Let $s=|S|$. Suppose that exactly $k$ of the $6n_t$ values have been sent to nodes in $S$. Certainly, if all nodes in $S$ are undefined, then $k \le 2s$. It is easy to check that if $n$ is sufficiently large, then for any $n_t \ge 3n/4$ and any $k \le 2s$,
$(6n_t-k)/(n-s) \ge 6n_t/n$. Moreover, the functions $e^{-x}$, $x \cdot e^{-x}$ and $x^2 \cdot e^{-x}$ are monotonically decreasing for any $x \ge 2$. Hence, given that all nodes in $S$ are undefined, we get:
\begin{align*}
  \Pr[u & \mbox{ receives no value}]\\
   & = (1-1/(n-s))^{6n_t-k} \le e^{-(6n_t-k)/(n-s)} \le e^{-6n_t/n} \\
  \Pr[u & \mbox{ receives one value}] \\
   & = \binom{6n_t-k}{1} (1/(n-s)) \cdot (1-1/(n-s))^{6n_t-k-1} \\
   & \le \frac{6n_t-k}{n-s} \cdot e^{-(6n_t-k-1)/(n-s)} \le \frac{6n_t}{n} \cdot e^{-6n_t/n} \cdot e^{1/(n-s)} \\
   \Pr[u & \mbox{ receives two values}] \\
  & = \binom{6n_t-k}{2} (1/(n-s))^2 \cdot (1-1/(n-s))^{6n_t-k-2} \\
   & \le \frac{1}{2} \left( \frac{6n_t-k}{n-s} \right)^2 \cdot e^{-(6n_t-k-2)/(n-s)} \\
   & \le \frac{1}{2} \left( \frac{6n_t}{n} \right)^2 \cdot e^{-6n_t/n} \cdot e^{2/(n-s)}
\end{align*}
Hence, given that all nodes in $S$ are undefined,
\begin{align*}
  \Pr[u & \mbox{ receives $\le 2$ values}] \\
  & \le \left( 1+ \frac{6n_t}{n} + \frac{1}{2} \left( \frac{6n_t}{n} \right)^2 \right) \cdot e^{-6n_t/n} \cdot e^{2/(n-s)} \le 2/11
\end{align*}
as long as $|S| \le n/4$ and $n$ is sufficiently large.
\end{proof}

Therefore, we can make use of the Chernoff bounds in Lemma~\ref{lem:chernoff} to show that the number of undefined nodes is at most $n/4$ w.h.p.
This completes the proof of the lemma.
\end{proof}

In the remaining part of the analysis, we carry out the aforementioned case distinction based on the value of $\Delta_t$.

\subsubsection*{Case 1: {\boldmath $\Delta_t \ge n_t/4$}}
\newcommand{\lemcaseone}{
If there is a round $t_0$ with $\Delta_{t_0} \ge n_{t_0}/4$, then there is a round
$t_1=t_0+O(\log \log n)$ at which we reach an almost-everywhere consensus, w.h.p.
}
\begin{lemma} \label{lem:caseone}
\lemcaseone
\end{lemma}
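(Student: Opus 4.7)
The plan is to track the minority fraction $q_t := X_t/n_t$, which by the hypothesis $\Delta_{t_0} \ge n_{t_0}/4$ is at most $1/4$ at round $t_0$, and to show that one round of the $(6,3)$-majority rule contracts it quadratically, driving $q_t$ below $n^{-\Omega(1)}$ after $T = O(\log\log n)$ rounds. Throughout these $T$ rounds I would maintain the invariant $n_t \ge 3n/4$ by iteratively applying \Cref{lem:mindefined} with a union bound over all rounds of interest.

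For the one-step analysis, fix round $t$, condition on the value assignment at its end, and consider a non-blocked node $u$ in round $t+1$. Let $M_0$ and $M_1$ denote the number of values $0$ and $1$ received by $u$; these are sums of indicators that are \emph{independent across senders}, with expectations $6 X_t / n$ and $6 Y_t / n$. A Chernoff bound on $M_0$ and $M_1$ shows that with probability $1 - n^{-\Omega(1)}$, $M := M_0 + M_1 \ge 3$ and $M_0/M \le q_t(1+o(1))$. Given the received multiset, the probability that $u$'s new value is $0$ equals
\[
  \frac{\binom{M_0}{3} + \binom{M_0}{2}\binom{M_1}{1}}{\binom{M}{3}} = \left(\frac{M_0}{M}\right)^{\!2}\!\left(3 - 2\,\frac{M_0}{M}\right) + O(1/M),
\]
which is at most $(3 - 2 q_t + o(1))\, q_t^2$. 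Summing over the non-blocked $u$ yields $\E[X_{t+1}] \le (3 - 2 q_t + o(1))\, q_t^2 \, n$.

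To promote this to a high-probability bound I would apply \Cref{lem:chernoff} to the sum $X_{t+1} = \sum_u \mathbf{1}[u \text{ becomes } 0]$. Following the template of \Cref{cl:neg-cor}, I would verify that these indicators are $\Theta(n)$-wise upper-bounded by $(3 - 2 q_t + o(1))\, q_t^2$: conditioning on an arbitrary set $S$ of $|S| \le n/4$ other nodes simultaneously becoming $0$ removes only a few messages from each sender's six outgoing ones and only mildly biases the distribution of the values that reach a different receiver, so the conditional per-node probability inflates by at most a factor $1 + o(1)$. This yields $X_{t+1} \le (3 - 2 q_t + o(1))\, q_t^2 \, n$ w.h.p., and dividing by $n_{t+1} \ge 3n/4$ gives the recursion $q_{t+1} \le \beta \, q_t^2$ for a constant $\beta < 4$; the strict inequality at $q_t = 1/4$ comes from the $-2 q_t$ correction. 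Since $\beta q_{t_0} < 1$, iterating this doubly-exponentially contracting recursion $T = O(\log\log n)$ times (with a union bound over rounds) drives $q_{t_1} \le 1/n$ at $t_1 := t_0 + T$. The total number of nodes not holding value $1$ at round $t_1$ is then at most $X_{t_1} + (n - n_{t_1}) + \epsilon n \le o(n) + n/4 + \epsilon n \le \gamma n$, establishing almost-everywhere consensus on value $1$.

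The hardest step is the $k$-wise upper-bounding in the concentration argument: whereas in \Cref{cl:neg-cor} the event ``node undefined'' depends only on how few messages reach it, the event ``node becomes $0$'' depends intricately on the \emph{composition} of the received multiset, so conditioning on $|S|$ simultaneous $0$-outputs biases not only the destination counts but also the received-value distribution of other receivers. Quantifying this bias tightly enough that the multiplicative inflation stays within the $1 + o(1)$ slack needed to keep $\beta < 4$ -- particularly at the critical starting point $q_t = 1/4$, where the recursion only just contracts -- is the delicate part of the proof.
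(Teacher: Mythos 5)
Your overall architecture matches the paper's: track the minority fraction $p_t = X_t/n_t$, show a quadratic contraction per round, handle the dependence among the indicators via an $s$-wise upper-bounding claim feeding into \Cref{lem:chernoff}, and iterate $O(\log\log n)$ times. However, your per-node step contains a genuine gap. Each node receives only $O(1)$ messages in expectation (about $6n_t/n \le 6$), so $M_0$ and $M_1$ have \emph{constant} means. A Chernoff bound therefore cannot give $M_0/M \le q_t(1+o(1))$ with probability $1-n^{-\Omega(1)}$; in fact, with constant probability a node receives a multiset in which zeros form half or more of the values even when $q_t = 1/4$. For the same reason the error term $O(1/M)$ in your hypergeometric expansion is $\Theta(1)$, not $o(1)$, which destroys the bound $\le (3-2q_t+o(1))q_t^2$ — at the critical point $q_t=1/4$ the main term is only about $0.156$, so a constant additive error is fatal. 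The conditioning-on-the-received-multiset route simply does not concentrate at the level of a single node.

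The paper sidesteps this entirely: by exchangeability of the $6n_t$ sent values, the $\ell=3$ values a defined node inspects are distributed as a uniform $3$-subset of the population of $6X_t$ zeros and $6Y_t$ ones, so $\Pr[x_u=0\mid u\text{ defined}]$ is computed in closed form as $\binom{3}{2}\frac{6X_t}{6n_t}\frac{6X_t-1}{6n_t-1}\frac{6Y_t}{6n_t-2}+\frac{6X_t}{6n_t}\frac{6X_t-1}{6n_t-1}\frac{6X_t-2}{6n_t-2}\le 3p_t^2$, with no per-node concentration step at all. The $s$-wise conditioning (Claim~\ref{cl:neg-cor2}) is then handled by observing that conditioning on $s$ nodes outputting $0$ removes at least $2s$ zeros from the population, which only helps, while the denominator shift costs a factor absorbed into $3.5p_t^2$. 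You correctly flag the $s$-wise step as delicate, but the step that actually fails in your write-up is the earlier single-node bound; you would need to replace the Chernoff-on-$M_0$ argument with the exchangeability computation (or an explicit averaging over the distribution of the received multiset) before the rest of your outline can go through.
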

\begin{proof}
If $\Delta_t \ge n_t/4$ then $X_t \le n_t/4$. Since $(x-k)/(y-k) \le x/y$ for any $k \ge 0$ and any $x,y >k$ with $y \ge x$, it holds for round $t+1$ that
\begin{eqnarray*}
  \Pr[ x_u=0 \mid u \mbox{ defined}] & = & \binom{3}{2} \frac{6X_t}{6n_t} \cdot \frac{6X_t-1}{6n_t-1} \cdot \frac{6Y_t}{6n_t-2} 
   +\ \frac{6X_t}{6n_t} \cdot \frac{6X_t-1}{6n_t-1} \cdot \frac{6X_t-2}{6n_t-2} \\
  & \le &
  \binom{3}{2} p_t^2 \cdot \frac{6Y_t}{6n_t-2} + p_t^3
\end{eqnarray*}
Moreover,
\[
  \frac{6Y_t}{6n_t-2} = (1+O(1/n_t))\frac{n_t-X_t}{n_t} = (1+O(1/n_t))(1-p_t)
\]
Hence,
\[
  \Pr[ x_u=0 \mid u \mbox{ defined}] \le 3 p_t^2 (1-p_t) \cdot(1+O(1/n_t))  + p_t^3 \le 3p_t^2
\]
as long as $X_t = \Omega(\log n)$ and $n$ is sufficiently large, which implies that
\[
  \E[ X_{t+1} \mid X_t, n_t, n_{t+1}] \le 3p_t^2 n_{t+1}
\]
For any defined node $v$ let the binary random variable $Z_v$ be 1 if and only if $x_v=0$, and let $N_t$ denote the set of all defined nodes. Then $X_{t+1}$ can be represented as
\[
  X_{t+1} = \sum_{v \in N_t} Z_v
\]
Note that the $Z_v$'s are not independent. However, the following claim can be shown.

\begin{claim} \label{cl:neg-cor2}
For any subset $S$ of defined nodes with $|S|\le n_t/4$ and any defined node $u \not\in S$ it holds for round $t+1$ that
\[
  \Pr[ x_u=0 \mid \bigwedge_{v \in S} x_v=0 ] \le 3.5 p_t^2
\]
\end{claim}
\begin{proof}
Consider any event where $x_v$ is set to 0 for all $v \in S$ and the nodes in $S$ inspected $k$ $0$s and $\ell$ $1$s for their decisions. Let $s=|S|$. Under the assumption that $u$ is defined,
\begin{eqnarray*}
  \Pr[ x_u \mbox{ is set to 0}] & = & \binom{3}{2} \frac{6X_t-k}{6n_t-3s} \cdot \frac{6X_t-k-1}{6n_t-3s-1} \cdot \frac{6Y_t-\ell}{6n_t-3s-2}  
   +\ \frac{6X_t-k}{6n_t-3s} \cdot \frac{6X_t-k-1}{6n_t-3s-1} \cdot \frac{6X_t-k-2}{6n_t-3s-2} \\
  & \le &
  \binom{3}{2} p_t^2 \cdot \frac{6Y_t-\ell}{6n_t-3s-2} + p_t^3
\end{eqnarray*}
In order to show the last inequality, we first note that $k+\ell=3s$ and $k \ge 2\ell$ since otherwise there must be a node $v\in S$ that saw at least two $1$s, which cannot happen. Hence, $k \ge 2s$. Because $X_t \le n_t/4$, this implies that $(6X_t-k-i)/(6n_t-3s-i) \ge (6X_t-i)/(6n_t-i)$ for any $i \in \{0,1,2\}$. On the other hand, since $s \le n_t/4$,
\[
  \frac{6Y_t-\ell}{6n_t-3s-2} \le 1.15 \frac{6Y_t}{6n_t-2} = (1.15 + O(1/n_t)) \frac{6Y_t}{6n_t}
\]
Hence,
\[
  \Pr[ x_u \mbox{ is set to 0}] \le \binom{3}{2} p_t^2 \cdot (1.15+O(1/n_t))(1-p_t) + p_t^3
  \le 3.5 p_t^2
\]
if $n_t$ is large enough, which proves the claim.
\end{proof}

Therefore, we can make use of the Chernoff bounds in Lemma~\ref{lem:chernoff} to show that $X_{t+1} \le 4 p_t^2 n_{t+1}$ w.h.p., as long as $\E[X_{t+1}] = \Omega(\log n)$. We know that $p_{t_0} \le 1/4$. Hence, for any $i \ge 0$ with $\E[X_{t_0+i}] = \Omega(\log n)$, $p_{t_0+i} \le (2p_{t_0})^{2^i}$, which means after $O(\log \log n)$ rounds we are down to just $X_t = O(\log n)$. Thus, we have reached an almost-everywhere consensus, and this will be preserved as our arguments imply that also in polynomially many subsequent rounds, $X_t = O(\log n)$ w.h.p.
\end{proof}

\subsubsection*{Case 2: {\boldmath $c \sqrt{n_t \ln n_t} \le \Delta_t <
n_t/4$} for a sufficiently large constant {\boldmath $c$}}
\newcommand{\lemcasetwo}{
If there is a round $t_0$ with $c \sqrt{n_{t_0} \ln n_{t_0}} \le \Delta_{t_0} \leq n_{t_0}/4$ for a sufficiently large constant $c$, then there is a round $t_1=t_0+O(\log n)$ with $\Delta_{t_1} \geq n_t/4$ w.h.p.
}
\begin{lemma} \label{lem:casetwo}
\lemcasetwo
\end{lemma}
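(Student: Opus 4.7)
My plan is to show a geometric growth of the relative imbalance $\delta_t := \Delta_t/n_t$ per round, exactly as in the standard median-rule analysis, but with two technical modifications forced by our ``push'' setting: (a) we must work with the number of defined nodes $n_{t+1}$, which fluctuates from round to round but stays $\ge 3n/4$ by Lemma~\ref{lem:mindefined}, and (b) we need a $k$-wise dependency bound analogous to Claim~\ref{cl:neg-cor2} so that Lemma~\ref{lem:chernoff} still applies.

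First, I would compute the drift. Write $p_t = X_t/n_t \le 1/2$ and $\delta_t = 1/2 - p_t = \Delta_t/n_t$. For a defined non-blocked node $u$ in round $t+1$, the same counting as in the proof of Lemma~\ref{lem:caseone} gives
\[
  \Pr[x_u = 0 \mid u \text{ defined}] \;\le\; 3 p_t^2(1-p_t) + p_t^3 + O(1/n_t)
  \;=\; f(p_t) + O(1/n_t),
\]
where $f(p) = 3p^2 - 2p^3$. A direct expansion shows $f(1/2-\delta) = 1/2 - \tfrac{3}{2}\delta + 2\delta^3$, so for $\delta_t \le 1/4$ we obtain $\Pr[x_u = 0 \mid u \text{ defined}] \le 1/2 - \tfrac{11}{8}\delta_t + O(1/n_t)$. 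Hence, conditioning on $n_{t+1}$,
\[
  \mathbb{E}[X_{t+1} \mid X_t, n_t, n_{t+1}] \;\le\; \left(\tfrac{1}{2} - \tfrac{11}{8}\delta_t\right) n_{t+1} + O(1).
\]

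Next, I would establish that the indicators $Z_v = \mathbb{1}[x_v=0]$ for defined nodes $v$ are $(n_t/4)$-wise independently upper bounded by $f(p_t) + O(1/n_t)$. The argument mirrors Claim~\ref{cl:neg-cor2}: conditioning on $s = |S| \le n_t/4$ other defined nodes having decided $0$ means that at most $3s$ messages have been "consumed", and among those the ratio of $0$'s to $1$'s consumed is at least $2:1$, which only decreases the conditional density of $0$'s among the remaining pool for any further node $u \notin S$. This yields the claim up to a $(1 + O(1/n_t))$ factor, which we absorb into the $O(1/n_t)$ slack. Lemma~\ref{lem:chernoff} then gives, with $\mu \le (1/2 - \tfrac{11}{8}\delta_t)\,n_{t+1}$, the high-probability upper bound
\[
  X_{t+1} \;\le\; \left(\tfrac{1}{2} - \tfrac{11}{8}\delta_t\right)n_{t+1} + O\!\left(\sqrt{n_{t+1}\log n}\right).
\]
Dividing by $n_{t+1}$ and using $n_{t+1} \ge 3n/4$ yields $\delta_{t+1} \ge \tfrac{11}{8}\delta_t - O(\sqrt{(\log n)/n})$, and since we are in the regime $\delta_t \ge c\sqrt{(\ln n_t)/n_t}$, choosing $c$ sufficiently large makes the additive error at most $\tfrac{1}{8}\delta_t$, so that
\[
  \delta_{t+1} \;\ge\; \tfrac{5}{4}\,\delta_t \qquad \text{w.h.p.}
\]
I would symmetrically show $\delta_{t+1} \le 1$ trivially, so $\delta$ does not overshoot out of the regime of interest.

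Finally, iterating the geometric recursion $\delta_{t+1} \ge \tfrac{5}{4}\delta_t$ shows that after at most $t_1 - t_0 = O(\log n)$ rounds we have $\delta_{t_1} \ge 1/4$, i.e.\ $\Delta_{t_1} \ge n_{t_1}/4$. A union bound over these $O(\log n)$ rounds, together with Lemma~\ref{lem:mindefined} to maintain $n_{t'} \ge 3n/4$ throughout, preserves the w.h.p.\ guarantee. The main obstacle I foresee is the dependency claim in the second step: even though intuitively conditioning on more nodes going to $0$ should \emph{decrease} the probability that further nodes go to $0$, the adversary's choice of $B_{t+1}$ is informed by round-$t-1$ state and can interact subtly with the fluctuations of $n_{t+1}$; handling this cleanly requires conditioning on $B_{t+1}$ first and then running the dependency/Chernoff argument on the remaining $\ge (1-\epsilon)n$ non-blocked nodes, which must still use messages drawn from the $6n_t$ i.i.d.\ destinations.
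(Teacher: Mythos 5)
Your proposal follows essentially the same route as the paper's proof: the identical drift computation giving $\Pr[x_u=0]\le 1/2-\tfrac{11}{8}\delta_t$ for $\delta_t\le 1/4$, the same conditional upper bound on $\Pr[x_u=0\mid\bigwedge_{v\in S}x_v=0]$ via the $2{:}1$ consumption ratio (the paper's Claim~\ref{cl:neg-cor3}), the same application of Lemma~\ref{lem:chernoff} using $\delta_t=\Omega(\sqrt{(\ln n_t)/n_t})$ to absorb the deviation, and the same geometric iteration. The only differences are cosmetic (per-round growth factor $5/4$ versus the paper's $9/8$, and independence parameter $n_t/4$ versus $n_t/2$), so this is correct and matches the paper's argument.
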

\begin{proof}
Suppose that $\Delta_{t_0} \ge c \sqrt{n_{t_0} \log n_{t_0}}$. Let $\delta_t=\Delta_t/n_t$. Then it holds that $X_t = n_t/2-\Delta_t = n_t(1/2-\delta_t)$. Similar to the previous case, for any defined node $u$,
\begin{eqnarray*}
  \Pr[ x_u = 0] & = & \binom{3}{2} \frac{6X_t}{6n_t} \cdot \frac{6X_t-1}{6n_t-1} \cdot \frac{6Y_t}{6n_t-2} 
   +\ \frac{6X_t}{6n_t} \cdot \frac{6X_t-1}{6n_t-1} \cdot \frac{6X_t-2}{6n_t-2} \\
  & \le &
  \binom{3}{2} (1/2-\delta_t)^2 \cdot \frac{6Y_t}{6n_t-2} + (1/2-\delta_t)^3 \\
  & = & 3 (1/2-\delta_t)^2 (1/2+\delta_t) \cdot (1+O(1/n_t))
   +\ (1/2-\delta_t)^3 \\
  & = & 1/2 - (3/2)\delta_t +2\delta_t^3 + O(1/n_t)
\end{eqnarray*}
Since $c \sqrt{(\ln n_t)/n_t} \le \delta_t \le 1/4$, it follows that
\[
  \E[ X_{t+1} \mid X_t, n_t, n_{t+1}] \le (1/2-(11/8)\delta_t) n_{t+1}
\]
Like in the case of $\Delta_t \ge n_t/4$, $X_{t+1}$ can be represented as
\[
  X_{t+1} = \sum_{v \in N_t} Z_v
\]
for some binary random variables $Z_v$ that, unfortunately, are not independent. However, the following claim holds:

\begin{claim} \label{cl:neg-cor3}
For any subset $S$ of defined nodes with $|S|\le n_t/2$ and any defined node $u \not\in S$ it holds that
\[
  \Pr[ x_u=0 \mid \bigwedge_{v \in S} x_v=0 ] \le 1/2-(11/8)\delta_t
\]
\end{claim}
\begin{proof}
Consider any event $E(k,\ell)$ where $x_v$ is set to 0 for all $v \in S$ and the nodes in $S$ inspected $k$ Os and $\ell$ 1s for their decisions. Let $s=|S|$. Recall that $k+\ell=3s$ and $k \ge 2\ell$ and therefore, $k \ge 2s$. This means that the ratio between the number of remaining 0s and the remaining 1s is lower for $u$ than it was initially, so it should lower the probability that $x_u=0$. In fact, it already follows from the proof of Claim~\ref{cl:neg-cor2} that the function
\[
  f(s) = \frac{6X_t-k}{6n_t-3s} \cdot \frac{6X_t-k-1}{6n_t-3s-1} \cdot \frac{6X_t-k-2}{6n_t-3s-2}
\]
is monotonically decreasing for $s \ge 0$ if $k \ge 2s$. Via tedious calculations one can also show that the function
\[
  g(s) = \frac{6X_t-k}{6n_t-3s} \cdot \frac{6X_t-k-1}{6n_t-3s-1} \cdot \frac{6(n_t-X_t)-(3s-k)}{6n_t-3s-2}
\]
is monotonically decreasing for $s \ge 0$ if $k \ge 2s$, which implies that we can upper bound
$\Pr[ x_u=0 \mid \bigwedge_{v \in S} x_v=0 ]$ in the same way as for the isolated case $\Pr[x_u=0]$.
\end{proof}

Hence, it follows from the Chernoff bounds that $X_{t+1} \le (1/2-(9/8)\delta_t)n_{t+1}$ w.h.p. (by setting $\delta=3\delta_t/16$ in Lemma~\ref{lem:chernoff}) as long as $c \sqrt{n_t \ln n_t} \le \Delta_t < n_t/4$, i.e., $c \sqrt{(\ln n_t)/n_t} \le \delta_t \le 1/4$. This implies that $\delta_{t+1} \ge (9/8) \delta_t$ w.h.p. Thus, there is a round $t_1=t_0+O(\log n)$ with $\delta_{t_1} \ge 1/4$, w.h.p., which completes the proof.
\end{proof}

\subsubsection*{Case 3: {\boldmath $|\Delta_t| < c \sqrt{n_t \ln n_t}$}}

Up to this point we assumed that $X_t \le Y_t$ w.l.o.g. because once the difference between the number of $0$s and $1$s is at least $c \sqrt{n_t \ln n_t}$, the majority value will win w.h.p. However, if $|\Delta_t| < c \sqrt{n_t \ln n_t}$, it may not be clear any more who will win. First, we show that with constant probability it just takes a constant number of rounds until
$|\Delta_t| \ge c' \sqrt{n_t}$, and then we use that insight together with insights from Case 2 and a result about Markov chains to argue that within $O(\log n)$ rounds, $|\Delta_t| \ge c \sqrt{n_t \ln n_t}$ w.h.p. Before that, we need two technical lemmas.

We say that a random variable $Z$ {\em stochastically dominates} a random variable
$Z'$, and write $Z \succeq Z'$, if $\Pr[Z \ge x] \ge \Pr[Z' \ge x]$ for any $x$.
Recall that $\Delta_t = (Y_t-X_t)/2$, where $X_t$ is the number of $0$s and $Y_t$ is the number of $1$s. For simplicity we assume that $n_t$ is even so that $\Delta_t$ is an integer.

\newcommand{\lemdom}{
For any two imbalances $\Delta_t$ and $\Delta'_t$ with $\Delta_t \ge
\Delta'_t \ge 0$ it holds that $\Delta_{t+1} \succeq \Delta'_{t+1}$.
}
\begin{lemma}\label{lem:dom}
\lemdom
\end{lemma}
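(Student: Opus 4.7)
My plan is to establish the lemma by a coupling argument: reduce the general claim to a chain of single-value flips, and for each flip exhibit a coupling under which the imbalance of the ``larger'' configuration dominates the ``smaller'' one pointwise. I will work under the natural assumption that both configurations live on the same underlying set of defined nodes (so that $n_t = n'_t$), which is the setting used in Case~3; the configuration $C$ underlying $\Delta_t$ is then obtained from the configuration $C'$ underlying $\Delta'_t$ by flipping $\Delta_t - \Delta'_t$ defined nodes from value $0$ to value $1$, so by transitivity it suffices to treat a single flip $v^* : 0 \mapsto 1$.

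The coupling I intend to use pairs the two systems as follows: every node $v$ sends its (possibly modified) value to the \emph{same} six destinations drawn once and for all; each non-blocked receiver $u$ chooses the same random $3$-subset of its arriving messages, determined by a priority order over arrival slots that is fixed \emph{before} the senders' values are revealed; and the adversary blocks the same set of nodes. Under this coupling no receiver's count of incoming messages depends on the values being sent, so the blocked/undefined partition at the end of round $t+1$ is identical in the two systems and in particular $n_{t+1}$ agrees. The only nodes whose outputs can differ between the two systems are those receivers that happen to include $v^*$'s message in their chosen $3$-subset: such a receiver computes the majority of $\{a,b,\star\}$ with $\star = 0$ before the flip and $\star = 1$ after, and enumerating $(a,b) \in \{0,1\}^2$ shows that the majority either stays the same or flips from $0$ to $1$, but never from $1$ to $0$.

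Consequently $Y_{t+1}$ weakly increases and $X_{t+1}$ weakly decreases when passing from $C^{(i)}$ to $C^{(i+1)}$, so $\Delta_{t+1}$ increases pointwise under the coupling; iterating along the chain from $C'$ to $C$ and passing from the pointwise inequality to stochastic domination yields $\Delta_{t+1} \succeq \Delta'_{t+1}$. The main subtlety I anticipate lies in coupling the ``$3$ out of the received values'' selection so that $v^*$'s message is included in exactly the same set of receivers in both systems, independently of the particular value $v^*$ sends; fixing each receiver's random priority order over arrival slots before revealing the payloads resolves this cleanly. A minor secondary point is that $n_t \ne n'_t$ is not explicitly excluded by the statement, but the coupling above assumes matched defined sets; a further reduction (treating ``extra'' defined nodes in one system as corresponding to undefined nodes in the other and noting that the analogous monotonicity holds for the undefined$\to$defined transition) would cover the general case.
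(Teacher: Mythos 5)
Your proposal is correct and matches the paper's argument: the paper likewise reduces to a single flip ($\Delta'_t = \Delta_t - 1$) and couples the two configurations on a common outcome space $\Omega$ that fixes both the six destinations of every sender and each receiver's choice of which three values to use, then observes that under a fixed outcome a node deciding $0$ under the larger imbalance also decides $0$ under the smaller one (and symmetrically for $1$), giving pointwise monotonicity of $\Delta_{t+1}$. Your explicit remark that the $3$-subset selection must be fixed before the payloads are revealed makes precise what the paper's definition of $\Omega$ leaves implicit, but it is the same proof.
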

\begin{proof}
We show stochastic domination for any two imbalances $\Delta_t$ and
$\Delta'_t=\Delta_t-1$. The rest follows by induction. Let $z = n/2-\Delta'_t$.
Without loss of generality, we assume that nodes 1 to $z-1$ have value 0 in both $\Delta_t$ and $\Delta'_t$, and nodes $z+1,\ldots,n$ have value 1 in both $\Delta_t$ and $\Delta'_t$. Node $z$ has value 1 in $\Delta_t$ and value 0 in $\Delta'_t$.

Let $\Omega$ be the space of all possible outcomes of the random experiment of sending 6 copies of each defined value to nodes chosen independently and uniformly at random and making decisions at those nodes that receive more than 3 values which of these 3 to pick. Consider any such outcome $w \in \Omega$.

Any node that sets its value to 0 in $w$ assuming $\Delta$ also sets its value to 0 in $w$ assuming $\Delta'$. On the other hand, any node that sets its value to 1 in $w$ assuming $\Delta'$ also sets its value to 1 in $w$ assuming $\Delta$. So $X_{t+1}|_{\Delta}$ dominates $X_{t+1}|_{\Delta'}$ and $Y_{t+1}|_{\Delta'}$ dominates $Y_{t+1}|_{\Delta}$. Hence, $(Y_{t+1}-X_{t+1})|_\Delta$ dominates $(Y_{t+1}-X_{t+1})|_{\Delta'}$, which proves the lemma.
\end{proof}

The main lemma of this section is the following:
\newcommand{\lemconstantjump}{
For any $|\Delta_t| \ge 0$, $\Pr[|\Delta_{t+1}| \ge \sqrt{n_t/16}] \ge \alpha$ for some constant $\alpha>0$, provided $n_t$ is large enough.
}
\begin{lemma} \label{lem:constantjump}
\lemconstantjump
\end{lemma}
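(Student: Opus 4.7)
The plan is to apply the Paley-Zygmund anti-concentration inequality to $Z := \Delta_{t+1}^2$, after using Lemma~\ref{lem:dom} together with the symmetry between the labels $0$ and $1$ to reduce to the perfectly balanced configuration $\Delta_t = 0$. Because swapping $0 \leftrightarrow 1$ is a distributional symmetry of the protocol, $\Delta_{t+1}$ is symmetric about $0$ when $\Delta_t = 0$; combined with the stochastic dominance of Lemma~\ref{lem:dom} applied to $|\Delta_t|$, this gives
\[
  \Pr\bigl[\,|\Delta_{t+1}| \geq \sqrt{n_t/16}\,\bigr] \;\geq\; \tfrac{1}{2}\,\Pr\bigl[\,|\Delta_{t+1}| \geq \sqrt{n_t/16} \,\bigm|\, \Delta_t = 0\,\bigr],
\]
so it suffices to prove a constant lower bound in the balanced case.

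In the balanced case, write $\Delta_{t+1} = \tfrac{1}{2}\sum_{u} D_u$ where $D_u := \mathbf{1}\{x_u = 1\} - \mathbf{1}\{x_u = 0\} \in \{-1,0,+1\}$, so $\E[D_u] = 0$ and $\E[Z] = \mathrm{Var}(\Delta_{t+1})$. Paley-Zygmund yields
\[
  \Pr\bigl[Z \geq \theta\,\E[Z]\bigr] \;\geq\; (1-\theta)^2 \cdot \frac{\E[Z]^2}{\E[Z^2]},
\]
so it is enough to show (i) $\mathrm{Var}(\Delta_{t+1}) \geq c_1\,n_t$ for a constant $c_1 > 1/16$ and (ii) $\E[\Delta_{t+1}^4] \leq c_2\,n_t^2$ for a constant $c_2$; setting $\theta := (n_t/16)/\E[Z]$ then forces the right-hand side to a positive constant $\alpha$.

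For (i), expand $\mathrm{Var}(\sum_u D_u) = \sum_u \E[D_u^2] + \sum_{u \neq v}\mathrm{Cov}(D_u, D_v)$. The diagonal sum equals $\E[n_{t+1}]$, and by the reasoning of Lemma~\ref{lem:mindefined} at least a $3/4$-fraction of nodes are defined in expectation, so this sum is $\Theta(n_t)$. The crucial structural observation for the off-diagonal terms is that the $6n_t$ copies sent in round $t+1$ are independent uniform throws into the $n$ receiver bins, so conditional on the full ball-to-bin assignment the per-node decisions are mutually independent. Hence $\mathrm{Cov}(D_u, D_v)$ is driven only by the weak dependence in the joint in-degree vector $(R_u^0, R_u^1, R_v^0, R_v^1)$, for which a standard balls-into-bins computation gives $\mathrm{Cov}(R_u^b, R_v^{b'}) = O(1/n)$; propagating through the ``pick $3$ and take majority'' rule yields $\mathrm{Cov}(D_u, D_v) = O(1/n)$, whose sum over all $O(n^2)$ pairs contributes only $O(n)$. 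Thus $\mathrm{Var}(\Delta_{t+1}) \geq c_1 n_t$ with $c_1$ arbitrarily close to $3/16 > 1/16$ for $n$ large enough. For (ii), an analogous but heavier computation of $\E[(\sum_u D_u)^4]$ splits tuples by the pattern of repetitions: tuples with a repeated index contribute $O(n_t^2)$ via $|D_u|\leq 1$ and the variance bound from (i), while four-distinct-index tuples factor under the conditional independence and yield a contribution of $O(n_t^4 / n^2) = O(n_t^2)$.

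The main obstacle is the quantitative propagation of the $O(1/n)$ in-degree correlation through the randomized majority rule to obtain the covariance bound $\mathrm{Cov}(D_u, D_v) = O(1/n)$, and the corresponding, more involved bookkeeping for the fourth moment; care is also required to verify exact symmetry of $\Delta_{t+1}$ about $0$ in the balanced case via the $0\leftrightarrow 1$ relabeling coupling. Once these moment bounds are in hand, Paley-Zygmund closes the argument in one line.
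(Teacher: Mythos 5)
Your overall strategy coincides with the paper's: reduce to the balanced case $\Delta_t=0$ via \Cref{lem:dom} and the $0\leftrightarrow 1$ symmetry, establish a second-moment lower bound and a fourth-moment upper bound for $\Delta_{t+1}$, and apply Paley--Zygmund (\Cref{lem:paley}) to $\Delta_{t+1}^2$. However, your justification of step (i), the variance lower bound, has a genuine gap. You argue that each off-diagonal covariance is $O(1/n)$ and that the sum over the $O(n^2)$ pairs therefore ``contributes only $O(n)$'', concluding that $\mathrm{Var}(\Delta_{t+1})\ge c_1 n_t$ with $c_1$ arbitrarily close to $3/16$. But the off-diagonal sum is of the \emph{same order} as the diagonal sum $\Theta(n_t)$, and it is negative; an unquantified $O(1/n)$ bound on each covariance is therefore useless for a \emph{lower} bound on the variance --- with a large enough implied constant the off-diagonal sum could cancel the diagonal term entirely. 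The paper resolves this by computing the covariance exactly: conditioning on the triple of values that $v$ inspects, it shows $\mathrm{cov}(Z_v,Z_w) = -\tfrac{3}{32 n_t}\pm O(1/n_t^2)$, so the off-diagonal sum is about $-\tfrac{3 n_{t+1}^2}{32 n_t}$, which for $n_{t+1}/n_t$ as large as $4/3$ removes up to half of the diagonal contribution and leaves $V[\Delta_{t+1}]\ge n_{t+1}/8\pm O(1)$. In particular your constant near $3/16$ is not attainable by this route; only the weaker (but still sufficient, since $n_{t+1}/8\ge 3n_t/32>n_t/16$) bound survives. The explicit constant in the covariance, which you defer as routine ``propagation'' through the majority rule, is thus not optional bookkeeping --- it is the crux of the lemma.

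The fourth-moment step is more forgiving: there an $O(1/n^2)$ bound with unspecified constant on the four-point correlations of distinct indices would indeed suffice for an upper bound of order $n_{t+1}^2$, and your repetition-pattern decomposition is consistent with the paper's explicit expansion (which the authors carry out with computer-algebra assistance). The reduction to $\Delta_t=0$ and the final application of Paley--Zygmund match the paper.
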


To prove this lemma, we use the Paley-Zygmund inequality (\Cref{lem:paley}) defined as follows:
\begin{lemma}[Paley-Zygmund] \label{lem:paley}
If $X \ge 0$ is a random variable with finite variance, then for any $0 \le \epsilon \le 1$,
\[
  \Pr[X > \epsilon \E[X] ] \ge (1-\epsilon)^2 \frac{\E[X]^2}{\E[X^2]}
\]
\end{lemma}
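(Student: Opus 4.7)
The plan is to prove the Paley--Zygmund inequality by combining a simple truncation argument with the Cauchy--Schwarz inequality. First, I would split the expectation $\E[X]$ along the event $A = \{X > \epsilon\, \E[X]\}$, writing
\[
\E[X] = \E[X \cdot \mathbf{1}_{A^c}] + \E[X \cdot \mathbf{1}_A].
\]
Since $X \ge 0$ and on $A^c$ we have $X \le \epsilon\, \E[X]$, the first summand is at most $\epsilon\, \E[X]$. Rearranging yields the crucial one-sided bound
\[
\E[X \cdot \mathbf{1}_A] \ge (1 - \epsilon)\, \E[X].
\]

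Next, I would apply the Cauchy--Schwarz inequality to the two random variables $X$ and $\mathbf{1}_A$:
\[
\E[X \cdot \mathbf{1}_A]^2 \le \E[X^2] \cdot \E[\mathbf{1}_A^2] = \E[X^2] \cdot \Pr[A],
\]
using that $\mathbf{1}_A^2 = \mathbf{1}_A$ and that $\Pr[A] = \E[\mathbf{1}_A]$. The finite-variance hypothesis ensures $\E[X^2]$ is finite, so the application is legitimate.

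Combining the two steps gives
\[
(1 - \epsilon)^2\, \E[X]^2 \le \E[X \cdot \mathbf{1}_A]^2 \le \E[X^2] \cdot \Pr[X > \epsilon\, \E[X]],
\]
and dividing through by $\E[X^2]$ yields the claim. There is essentially no obstacle here: the key idea is simply the truncation bound $\E[X \cdot \mathbf{1}_{A^c}] \le \epsilon\, \E[X]$, which isolates the ``large'' contribution, after which Cauchy--Schwarz converts this $L^1$ estimate into one involving $L^2$ and the probability of $A$. A minor degenerate case to check is $\E[X] = 0$ (equivalently $X = 0$ almost surely), where both sides of the claimed inequality vanish and the bound holds trivially; similarly, if $\E[X^2] = 0$ then $X \equiv 0$ and the inequality is vacuous, so the division in the final step is unproblematic whenever the right-hand side is defined.
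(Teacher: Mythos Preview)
Your proof is correct and is exactly the standard textbook argument for the Paley--Zygmund inequality. The paper does not actually prove this lemma; it merely states it as a classical tool and then applies it, so there is no ``paper's own proof'' to compare against. Your truncation-plus-Cauchy--Schwarz argument is the canonical one and would be a perfectly acceptable addition.
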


We now prove Lemma~\ref{lem:casethree}:
\begin{proof}
We only need to prove the lemma for $\Delta_t = 0$ because the general case follows from stochastic domination (see Lemma~\ref{lem:dom}).

Since for $\Delta_t = 0$ every defined node $v$ in round $t+1$ has the same probability of having $x_v=0$ or $x_v=1$, $\E[\Delta_{t+1}] = 0$. For any defined node $v$ in round $t+1$ let the binary random variable $Z_v$ be 1 if and only if $x_v$ is set to 0 in round $t+1$. In this case, given $N_{t+1}$,
\begin{eqnarray*}
  \Delta_{t+1} & = & \frac{1}{2} \left( \left( n_{t+1} - \sum_{v \in N_{t+1}} Z_v \right) - \sum_{v \in N_{t+1}} Z_v \right) 
    =   n_{t+1}/2 - \sum_{v \in N_{t+1}} Z_v
\end{eqnarray*}
From the definition of the variance it follows that
\[
  V[\Delta_{t+1} \mid N_{t+1}] = \sum_{v \in N_{t+1}} V[Z_v] + \sum_{v,w \in N_{t+1}, v\not=w} cov(Z_v,Z_w)
\]
For any $v \in N_{t+1}$, $V[Z_v] = \E[Z_v^2] - \E[Z_v]^2$. Furthermore, $\E[Z_v] = \Pr[Z_v=1] = 1/2$ and $\E[Z_v^2] = \Pr[Z_v=1] = 1/2$ as well. Hence, $V[Z_v] = 1/2-1/4 = 1/4$. Moreover,
\begin{align*}
  cov(Z_v,Z_w) & = \E[(Z_v-\E[Z_v])(Z_w-\E[Z_w])] = \E[Z_v \cdot Z_w] - \E[Z_v] \cdot \E[Z_w]
\end{align*}
Certainly,
\begin{align*}
  \E[Z_v \cdot Z_w] & = \Pr[Z_v=Z_w=1] = \Pr[Z_v=1] \cdot \Pr[Z_w=1 \mid Z_v=1]  = \frac{1}{2} \Pr[Z_w=1 \mid Z_v=1]
\end{align*}
but obtaining a good estimate of $\Pr[Z_w=1 \mid Z_v=1]$ is difficult, so we first note that
\[
  \Pr[v \mbox{ sees } (0,0,1)] = \frac{1}{8} + \frac{1}{48 n_t} \pm O \left( \frac{1}{n_t^2} \right)
\]
and
\[
  \Pr[v \mbox{ sees } (0,0,0)] = \frac{1}{8} - \frac{1}{16 n_t} \pm O \left( \frac{1}{n_t^2} \right)
\]
Next, we estimate $\Pr[Z_w=1]$ for the case that $v$ sees $(0,0,1)$ or $(0,0,0)$. First of all,
\begin{align*}
  \Pr[Z_w=1 \mid v \mbox{ sees } (0,0,1)] = & \binom{3}{2} \frac{6X_t-2}{6n_t-3} \cdot \frac{6X_t-3}{6n_t-4} \cdot \frac{6Y_t-1}{6n_t-5} 
   + \frac{6X_t-2}{6n_t-3} \cdot \frac{6X_t-3}{6n_t-4} \cdot \frac{6X_t-4}{6n_t-5}
\end{align*}
Furthermore, for $X_t=Y_t=n_t/2$,
$(6X_t-2)/(6n_t-3) = (6X_t)/(6n_t) - 1/(12n_t)) \pm O(1/n_t^2)$,
$(6X_t-3)/(6n_t-4) = (6X_t-1)/(6n_t-1) - 1/(12n_t)) \pm O(1/n_t^2)$,
$(6X_t-4)/(6n_t-5) = (6X_t-2)/(6n_t-2) - 1/(12n_t)) \pm O(1/n_t^2)$, and
$(6Y_t-1)/(6n_t-5) = (6Y_t)/(6n_t-2) + 1/(12n_t)) \pm O(1/n_t^2)$.
Thus,
\begin{align*}
  \Pr[&Z_w=1 \mid v \mbox{ sees } (0,0,1)] \\
  = & \Pr[Z_w = 1] - \binom{3}{2} \cdot \left( 2 \left( \frac{1}{2} \right)^2 \left( -\frac{1}{12n_t} \right) + \left( \frac{1}{2} \right)^2 \frac{1}{12n_t} \right)
  - 3 \cdot \left( \frac{1}{2} \right)^2 \cdot \frac{1}{12n_t} \pm O \left( \frac{1}{n_t^2} \right) \\
  = & \frac{1}{2} - \frac{1}{8 n_t} \pm O \left( \frac{1}{n_t^2} \right)
\end{align*}
On the other hand,
\begin{align*}
  \Pr[Z_w=1 \mid v \mbox{ sees } (0,0,0)] = & \binom{3}{2} \frac{6X_t-3}{6n_t-3} \cdot \frac{6X_t-4}{6n_t-4} \cdot \frac{6Y_t}{6n_t-5}
   + \frac{6X_t-3}{6n_t-3} \cdot \frac{6X_t-4}{6n_t-4} \cdot \frac{6X_t-5}{6n_t-5}
\end{align*}
and for $X_t=Y_t=n_t/2$,
$(6X_t-3)/(6n_t-3) = (6X_t)/(6n_t) - 1/(4n_t) \pm O(1/n_t^2)$,
$(6X_t-4)/(6n_t-4) = (6X_t-1)/(6n_t-1) - 1/(4n_t) \pm O(1/n_t^2)$,
$(6X_t-5)/(6n_t-5) = (6X_t-2)/(6n_t-2) - 1/(4n_t) \pm O(1/n_t^2)$, and
$(6Y_t)/(6n_t-5) = (6Y_t)/(6n_t-2) + 1/(4n_t) \pm O(1/n_t^2)$.
Thus, in this case,
\begin{align*}
  \Pr[Z_w=1 \mid v \mbox{ sees } (0,0,0)] & = \Pr[Z_w = 1] - \binom{3}{2} \cdot \frac{1}{4} \cdot \frac{1}{4n_t} - 3 \cdot \frac{1}{4} \cdot \frac{1}{4n_t} \pm O \left( \frac{1}{n_t^2} \right) \\
  & = \frac{1}{2} - \frac{3}{8 n_t} \pm O \left( \frac{1}{n_t^2} \right)
\end{align*}
Hence, altogether,
\begin{align*}
  \E[Z_v \cdot Z_w] &= 3 \cdot \left(  \frac{1}{8} + \frac{1}{48 n_t} \right)
  \left( \frac{1}{2} - \frac{1}{8 n_t} \right) 
  + \left( \frac{1}{8} - \frac{1}{16 n_t} \right) \left( \frac{1}{2} - \frac{3}{8 n_t} \right) \pm \ O \left( \frac{1}{n_t^2} \right) 
  &= \frac{1}{4} - \frac{3}{32 n_t} \pm O \left( \frac{1}{n_t^2} \right)
\end{align*}

and therefore,
\[
  cov(Z_v,Z_w) = - \frac{3}{32 n_t} \pm O \left( \frac{1}{n_t^2} \right)
\]
Plugging this into the formula for the variance, and using the fact that w.h.p. $n_t \in [3n/4,n]$ and therefore, $n_{t+1}/n_t \in [3/4,4/3]$, we get
\begin{align*}
  V[\Delta_{t+1} \mid N_{t+1}] & = \frac{n_{t+1}}{4} - \frac{3n_{t+1}(n_{t+1}-1)}{32 n_t} \pm O \left( \frac{n_{t+1}^2}{n_t^2} \right) \\
  & \ge \frac{n_{t+1}}{8} \pm O(1)
\end{align*}
Moreover, it follows from the definition of the variance that
\[
  V[\Delta_{t+1}] = \E[\Delta_{t+1}^2] - \E[\Delta_{t+1}]^2
\]
and since $\E[\Delta_{t+1}]=0$,
\[
  \E[|\Delta_{t+1}|^2] = \E[\Delta_{t+1}^2] \ge \frac{n_{t+1}}{8} \pm O(1)
\]

To calculate $\E[|\Delta_{t+1}|^4]$, note that
\begin{align*}
\E[|\Delta_{t+1}|^4] = & \E[(\frac{n_{t+1}}{2} - \sum_{v \in N_{t+1}}Z_v)^4] \\
 = & \left(\frac{n_{t+1}}{2}\right)^4 - 4\left(\frac{n_{t+1}}{2}\right)^3 \sum_{v \in N_{t+1}}\E[Z_v] \\
  & + 6\left(\frac{n_{t+1}}{2}\right)^2 \E[(\sum_{v \in N_{t+1}}Z_v)^2]  \\
  & - 4\cdot \frac{n_{t+1}}{2}\E[(\sum_{v \in N_{t+1}}Z_v)^3] + \E[(\sum_{v \in N_{t+1}}Z_v)^4]
\end{align*}
Furthermore,

\begin{eqnarray*}
 \sum_{v \in N_{t+1}}\E[Z_v] & = & \frac{n_{t+1}}{2} \\
 \E[(\sum_{v \in N_{t+1}}Z_v)^2] & = & \sum_{v \in N_{t+1}}\E[Z_v] + \sum_{v, w \in N_{t+1}, v \ne w} E[Z_v \cdot Z_w] \\
 \E[(\sum_{v \in N_{t+1}}Z_v)^3] & = & \sum_{v \in N_{t+1}}\E[Z_v] \\
   & & +\ \binom{3}{2} \sum_{v, w \in N_{t+1}, v \ne w} E[Z_v \cdot Z_w] \\
   & & +\ \sum_{u,v,w \in N_{t+1}, u \ne v \ne w} E[Z_u \cdot Z_v \cdot Z_w] \\
 \E[(\sum_{v \in N_{t+1}}Z_v)^4] & = & \sum_{v \in N_{t+1}}\E[Z_v] \\
   & & +\ \left(\binom{4}{3} + \frac{4!}{2!\cdot2!\cdot2}\right)\cdot \sum_{v, w \in N_{t+1}, v \ne w} \E[Z_v \cdot Z_w] \\
    & & +\ \binom{4}{2}\sum_{u,v,w \in N_{t+1}, u \ne v \ne w} E[Z_u \cdot Z_v \cdot Z_w] \\
    & & +\ \sum_{u,v,w,x \in N_{t+1}, u \ne v \ne w \ne x} E[Z_u \cdot Z_v \cdot Z_w \cdot Z_x]
\end{eqnarray*}
Putting this together and using a CAS, one can show that the maximum of $\E[|\Delta_{t+1}|^4]$ is achieved for $n_{t+1}/n_t = 4/3$ and in this case it is upper bounded by:
\[
 \E[|\Delta_{t+1}|^4] \leq \frac{1}{12} {n_{t+1}}^{2}+{\frac {5}{216}}n_{t+1}+{\frac {77}{5184}}+{\frac {371}{31104 \cdot n_{t+1}}} + O\left(\frac{1}{n_{t+1}^2}\right)
\]

As an intuition why $\E[|\Delta_{t+1}|^4] = O(n_{t+1}^2)$ holds, note that if the $Z_v$'s are independent, then one can make use of the formulas for higher moments of binomial distributions (see, e.g., \cite{JK77}) to show that $\E[|\Delta_{t+1}|^4] = O(n_{t+1}^2)$. In addition to that, as indicated by the negative covariance, seeing a majority of $0$s (resp. $1$s) in some nodes implies seeing fewer $0$s (resp. $1$s) for subsequent nodes, so one expects higher moments in our case to be less than in the independent case. In fact, in the independent case, $\E[|\Delta_{t+1}|^2] = n_{t+1}/4$.

Hence, it follows from the Payley-Zygmund inequality that
\[
  \Pr[|\Delta_{t+1}|^2 \ge \frac{n_{t+1}}{16}] \ge \alpha
\]
for some constant $\alpha>0$, which completes the proof of the lemma.
\end{proof}

Moreover, from the proof of Lemma~\ref{lem:casetwo} and the Chernoff bounds in Lemma~\ref{lem:chernoff} we get:
\begin{lemma} \label{lem:constantdrift}
For any $\delta_{t} = \Omega(1/\sqrt{n_t})$,
$
  \Pr[ \delta_{t+1} \geq (9/8) \delta_t ]
  \geq 1 - \exp \left(- \Theta( \delta_{t}^2 n_{t+1} )  \right).
$
\end{lemma}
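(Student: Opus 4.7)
The plan is to recycle the mean computation and the negative-correlation structure already established in the proof of Lemma~\ref{lem:casetwo} and then pick the Chernoff slack carefully so that the resulting failure exponent comes out as $\Theta(\delta_t^2 n_{t+1})$ rather than merely the $\Omega(\log n)$ that is needed inside Case~2. Without loss of generality take $X_t$ to be the minority count; then from that earlier proof one has
\[
\E[X_{t+1} \mid X_t, n_t, n_{t+1}] \le \bigl(1/2 - (11/8)\delta_t\bigr)\, n_{t+1},
\]
and Claim~\ref{cl:neg-cor3} certifies that the indicators $(Z_v)_{v \in N_{t+1}}$ summing to $X_{t+1}$ are $k$-wise upper bounded by this same $(11/8)$-shifted probability for every $k \le n_t/2$.

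I would then apply Lemma~\ref{lem:chernoff} with the specific slack
\[
  \delta \;:=\; \frac{(1/4)\delta_t}{1/2 - (11/8)\delta_t},
\]
chosen so that $(1+\delta)\mu = (1/2 - (9/8)\delta_t)\, n_{t+1}$ by construction. A short expansion shows $\delta = \Theta(\delta_t)$ uniformly over $\delta_t \in (0, 1/4]$, and in the same range $\mu = \Theta(n_{t+1})$, so the Chernoff exponent evaluates to $\delta^2 \mu/3 = \Theta(\delta_t^2\, n_{t+1})$. Hence
\[
\Pr\bigl[X_{t+1} > (1/2 - (9/8)\delta_t)\, n_{t+1}\bigr] \;\le\; \exp\bigl(-\Theta(\delta_t^2\, n_{t+1})\bigr),
\]
and on the complementary event $Y_{t+1} - X_{t+1} = n_{t+1} - 2 X_{t+1} \ge (9/4)\delta_t\, n_{t+1}$, giving $\delta_{t+1} = (Y_{t+1} - X_{t+1})/(2 n_{t+1}) \ge (9/8)\delta_t$, which is the desired conclusion.

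The hypothesis $\delta_t = \Omega(1/\sqrt{n_t})$ does not enter the algebra but is precisely what guarantees $\delta_t^2 n_{t+1} = \Omega(1)$, so that the tail bound is non-vacuous when invoked inside the Case~3 argument. The only genuine care point is the admissibility condition $k \ge (1+\delta)\mu$ in Lemma~\ref{lem:chernoff}: one has to argue, exactly as in the original proof of Lemma~\ref{lem:casetwo}, that $(1+\delta)\mu \le n_{t+1}/2$ stays within the $k$-wise bounded-dependence range of Claim~\ref{cl:neg-cor3} (invoking $n_t \ge 3n/4$ via Lemma~\ref{lem:mindefined}). I expect this to be the only spot requiring attention to constants; everything else is a direct reuse of the expectation and covariance computations already carried out in Case~2.
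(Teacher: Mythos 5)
Your proof is correct and follows essentially the same route as the paper, which derives this lemma directly from the expectation bound $\E[X_{t+1}]\le(1/2-(11/8)\delta_t)n_{t+1}$ and Claim~\ref{cl:neg-cor3} established in the proof of Lemma~\ref{lem:casetwo}, then applies Lemma~\ref{lem:chernoff} with slack $\Theta(\delta_t)$ (the paper picks $\delta=3\delta_t/16$ rather than your exactly-calibrated choice, an immaterial difference). Your remarks on why the exponent is $\Theta(\delta_t^2 n_{t+1})$ and on the admissibility condition $k\ge(1+\delta)\mu$ are exactly the points that need checking, and they go through.
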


Now we can finish Case 3.

\newcommand{\lemcasethree}{
If at a round $t_0$ we have $|\Delta_{t_0}| < c \sqrt{n_{t_0} \ln n_{t_0}}$ for the value of
$c$ needed by Lemma \ref{lem:casetwo}, then there is a round $t_1=t_0+O(\log n)$ with $|\Delta_{t_1}| \geq c \sqrt{n_t \ln n_t}$ w.h.p.
}
\begin{lemma} \label{lem:casethree}
\lemcasethree
\end{lemma}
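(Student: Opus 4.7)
The plan is to reduce via Lemma~\ref{lem:dom} to the worst case $\Delta_{t_0}=0$, and then combine the renewal bound of Lemma~\ref{lem:constantjump} with the multiplicative drift of Lemma~\ref{lem:constantdrift} in a random-walk style argument on a log-scaled potential. By symmetry between $0$ and $1$ and the stochastic monotonicity of $\Delta_{t+1}$ in $\Delta_t$ given by Lemma~\ref{lem:dom}, it suffices to establish the conclusion from $\Delta_{t_0}=0$: this implies it for every $|\Delta_{t_0}|<c\sqrt{n_{t_0}\ln n_{t_0}}$.

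Introduce the discrete potential
\[
\Phi_t \;=\; \bigl\lfloor \log_{9/8}\!\bigl(|\Delta_t|\sqrt{16/n_t}\,\bigr)\bigr\rfloor,
\]
with the convention $\Phi_t := -1$ whenever $|\Delta_t| < \sqrt{n_t/16}$, so that the target $|\Delta_{t_1}|\ge c\sqrt{n_{t_1}\ln n_{t_1}}$ corresponds to $\Phi_{t_1}\ge M$ with $M=\Theta(\log\log n)$. The input lemmas then read as Markov-chain style bounds valid for every history: \emph{(R)} from any state, $\Pr[\Phi_{t+1}\ge 0]\ge \alpha$ by Lemma~\ref{lem:constantjump}; and \emph{(D)} whenever $\Phi_t = i\ge 0$, $\Pr[\Phi_{t+1}\ge i+1]\ge q_i$ with $q_i = 1-\exp(-\Theta((9/8)^{2i}))$ by Lemma~\ref{lem:constantdrift}. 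Because $1-q_i$ decays doubly exponentially in $i$, the infinite product $q^\star := \prod_{i\ge 0} q_i$ is a positive constant.

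The core observation is that an \emph{attempt}, consisting of a successful renewal step followed by $M$ consecutive successful drift steps, carries $\Phi$ from $-1$ all the way to $M$ in exactly $M+1$ rounds and thus reaches the target. By the conditional-on-any-history form of (R) and (D), each such attempt succeeds with probability at least $\alpha q^\star$, a positive constant independent of $n$. If a drift step fails inside an attempt, we simply abort the attempt, declare $\Phi=-1$, and begin a fresh attempt in the next round; the unconditional nature of (R) makes this restart sound. Running $\Theta(\log n)$ disjoint attempts therefore succeeds with probability $1-n^{-\Omega(1)}$. Compressing the naive $O(M\log n) = O(\log n\log\log n)$ round count to the claimed $O(\log n)$ can be achieved by a standard drift theorem for Markov chains (e.g.\ Hajek's inequality) applied to a supermartingale of the form $e^{-\lambda(\Phi_t \vee 0)}$ for a sufficiently small $\lambda>0$: combining the constant positive expected single-step increment of $\Phi$ in the region $\Phi\ge 0$ with the unconditional renewal (R) that controls downward excursions yields an exponential tail bound on the hitting time of $\Phi=M$, giving $O(M+\log n)=O(\log n)$ rounds w.h.p.

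\textbf{Main obstacle.} The principal technical subtlety is exactly the failure mode of Lemma~\ref{lem:constantdrift}: it provides no lower bound on $\delta_{t+1}$ when the drift step fails, so a single failed drift can collapse $\Phi$ all the way to $-1$ (or even correspond to a sign flip of $\Delta$). Absorbing such collapses by coupling them to the unconditional renewal bound (R) and showing that the resulting chain still concentrates within $O(\log n)$ rounds---rather than the naive $O(\log n\log\log n)$ from disjoint blocks---is where the ``result about Markov chains'' alluded to earlier in the paper becomes essential.
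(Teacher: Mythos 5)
Your overall strategy is the same as the paper's: reduce to the worst case $\Delta_{t_0}=0$ by the stochastic domination of Lemma~\ref{lem:dom}, use Lemma~\ref{lem:constantjump} as an unconditional renewal step and Lemma~\ref{lem:constantdrift} as a multiplicative drift step, and analyse the resulting process as a Markov chain on a logarithmic scale. The paper packages exactly this chain analysis into Claim~\ref{cl:dichotomy}, imported from \cite{doerr2011adversary}, whose two hypotheses are precisely your (R) and (D) and which directly yields a hitting-time bound of $O(\log q)=O(\log\log n)$ for the top level $q=\Theta(\sqrt{\ln n})$. So the identification of the right potential and the right two single-step bounds is correct and matches the paper.

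The gap is in the last step. The part of your argument that is actually carried out --- $\Theta(\log n)$ disjoint attempts, each of length $M+1=\Theta(\log\log n)$ --- only gives $O(\log n\log\log n)$ rounds, which does not prove the lemma; the compression to $O(\log n)$ is the entire content of Claim~\ref{cl:dichotomy} and you only assert it. Moreover, the specific route you propose (Hajek's inequality applied to $e^{-\lambda(\Phi_t\vee 0)}$) does not obviously go through: when a drift step fails, Lemma~\ref{lem:constantdrift} gives no lower bound on $\delta_{t+1}$, so $\Phi$ can jump down by $i+1$ from level $i$, and near the bottom of the chain the one-step success probability is only $q_0=1-e^{-\Theta(1)}$, which the lemma does not guarantee to exceed $1/2$; hence $\E[e^{-\lambda(\Phi_{t+1}-\Phi_t)}\mid \Phi_t=0]\approx 1+\lambda(1-2q_0)$ need not be below $1$ for any $\lambda>0$, and the proposed process need not be a supermartingale where it matters most. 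You should either invoke the dichotomy result of \cite{doerr2011adversary} as the paper does, or close the gap within your own framework by a sharper accounting of failed attempts: since the probability that an attempt dies at drift level $j$ is at most $1-q_j=\exp(-\Theta((9/8)^{2j}))$, a failed attempt has length $O(1)$ in expectation with doubly-exponential tails, so the total length of all $\Theta(\log n)$ failed attempts is $O(\log n)$ w.h.p.\ by a standard concentration bound for adapted sums with uniformly sub-exponential conditional tails, and only the single successful attempt costs $M+1$ rounds. With that accounting your construction does give $O(\log n+M)=O(\log n)$ without any drift theorem.
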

\begin{proof}
Lemma \ref{lem:constantjump} implies that the expected number of steps until
we are in the hypothesis of Lemma \ref{lem:constantdrift} is $O(1)$. That
is, $|\Delta_{t}| = \gamma \sqrt{n_t}$ for some constant $\gamma>0$. Assume w.l.o.g. that $\Delta_t>0$, and consider $\Upsilon_t = \lfloor \delta_t \sqrt{n_t}/\gamma \lfloor$ and let $q=\lfloor c\sqrt{\ln n_t}/\gamma \rfloor$ denote the maximum value of $\Upsilon_t$. To continue, we need the following technical result from \cite{doerr2011adversary}.

\begin{claim}\label{cl:dichotomy}
Let $(X_t)_{t=1}^{\infty}$ be a Markov Chain with state space $\{0, \ldots,
q\}$ that has the following properties:
\begin{itemize}
 \item there are constants $c_1 > 1$ and $c_2>0$, such that for any $t \ge 1$,
$ \Pr[X_{t+1} \geq \min\{c_1 X_{t},q\} ] \geq 1 - e^{-c_2 X_t}$,
 \item $X_{t}=0 \Rightarrow X_{t+1} \ge 1$ with probability $c_3$ which is a constant greater than $0$.
\end{itemize}
Let $c_4 > 0$ be an arbitrary constant and $T:= \min \{ t \ge 1 \colon X_t
\geq c_4 \log q \} $. Then for every constant $c_6 > 0$ there is a constant
$c_5=c_5(c_4,c_6) > 0$ such that
\[
  \Pr[ T \leq c_5 \cdot \log q + \log_{c_1} (c_4 \log q) ] \geq 1-q^{-c_6}.
\]
\end{claim}

By this claim, $O(\log q)$ rounds suffice to achieve $\Upsilon_t \geq c_4
\log q$, or $\Delta_{t+\tau - 1}\geq c\sqrt{n_{t+\tau - 1}} \cdot c_4 \log q$, w.h.p. for
any constant $c_4>0$, which finishes the proof.
\end{proof}

\noindent\textbf{Deciding on a consensus value.}
Note that while the $(k,\ell)$-majority algorithm converges to a common plurality value within $O(\log n)$ rounds, the nodes cannot determine whether an almost-everywhere consensus has already been reached.
We address this by executing the \textbf{deciding $(k,\ell)$-majority algorithm} in which each node $u$ executes both the reset rule and the update rule of the original $(k,\ell)$-majority algorithm and, additionally, the following rule:
\begin{compactitem}
\item \textbf{decision rule:} If there is some $y \in \{0,1\}$ so that for the past $\alpha \ln n$ rounds, $x_u \in \{y,\bot\}$ and for at least half of them, $x_u=y$, then $u$ outputs $y$.
\end{compactitem}
Here, $\alpha > 0$ is a constant determined in the proofs of the following two lemmas. 

\newcommand{\thmextensionone}{
For any initial assignment of values to the nodes with no defined node and any $\epsilon \le 1/16$, there is a constant $\beta>0$ so that every node that is blocked in at most a fifth of $\beta \ln n$ rounds will output a value within these rounds, w.h.p.
}
\begin{lemma}\label{thm:extension1}
\thmextensionone
\end{lemma}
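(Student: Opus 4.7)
The plan is to split the $\beta \ln n$ rounds into a convergence phase of length $c_0 \ln n$ and a decision window of length $\alpha \ln n$, where $\beta = c_0 + \alpha$ and $c_0$ is chosen so that Theorem~\ref{thm:convergence} and the proof of Lemma~\ref{lem:caseone} guarantee $X_t = O(\log n)$ for all rounds in the polynomial-horizon good-state event (w.h.p.); WLOG the surviving value is $y = 1$. It then suffices to show that within the decision window the decision rule fires at $u$ with $y = 1$.

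Conditioning on the persistent good-state event, I would verify the two requirements of the decision rule separately. First, that $x_u \ne 0$ throughout the window: in any non-blocked window round the calculation from the proof of Lemma~\ref{lem:caseone} with $p_t = X_t / n_t = O(\log n / n)$ gives $\Pr[x_u = 0 \mid u \text{ defined}] \le 3 p_t^2 = O((\log n / n)^2)$, so a union bound over the $\alpha \ln n$ rounds places $x_u \in \{1, \bot\}$ throughout the window w.h.p. Second, that $x_u = 1$ in at least half the window rounds: the hypothesis allows at most $\beta \ln n / 5$ blocked rounds in total, leaving at least $(\alpha - \beta/5)\ln n$ non-blocked rounds inside the window. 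In each such round, Claim~\ref{cl:neg-cor} applied with $|S| = 0$ yields $\Pr[u \text{ defined}] \ge 9/11$ on any compatible past, and combining with the previous bullet gives $\Pr[x_u = 1 \mid \mathcal{F}_{t-1}] \ge p$ for a constant $p \ge 4/5 > 1/2$. Since this per-round bound is uniform in the admissible history, the indicators stochastically dominate i.i.d.\ Bernoulli$(p)$ variables, and a Chernoff bound then gives at least $(1-o(1))\,p\,(\alpha - \beta/5)\ln n$ rounds with $x_u = 1$ w.h.p. Choosing $\alpha$ a sufficiently large constant multiple of $c_0$, e.g.\ $\alpha = 2 c_0$ so that the strict inequality $p(\alpha - \beta / 5) > \alpha/2$ has slack, makes this count exceed $\alpha \ln n / 2$, and the rule fires.

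The main obstacle is the cross-round dependence: $X_t$, $n_t$, and the adversary's blocking set all evolve jointly with the per-round sampling randomness, so the indicators ``$x_u = 1$ in round $t$'' are not independent. The clean resolution is to condition on the polynomial-horizon good-state event already granted w.h.p.\ by Lemma~\ref{lem:caseone}, and to exploit the fact that on this event the per-round conditional probability $\Pr[x_u = 1 \mid \mathcal{F}_{t-1}] \ge p$ holds uniformly over every admissible history; this is exactly the hypothesis needed to couple with i.i.d.\ Bernoulli$(p)$ variables and invoke a Chernoff tail bound. Because the hypothesis externally caps $u$'s blocked rounds, there is no residual correlation between the adversary's blocking choices and the sampling randomness inside the window that needs to be controlled separately.
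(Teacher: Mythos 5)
Your argument is correct and follows the same core approach as the paper: wait for convergence, then verify separately that (i) $u$ essentially never adopts the minority value in a post-convergence window (per-round probability $O((\log n/n)^2)$ since $X_t=O(\log n)$ persists), and (ii) $u$ is defined, hence holds $y$, in more than half of the window rounds, via the $\Pr[u\text{ undefined}]\le 2/11$ bound from Claim~\ref{cl:neg-cor} plus a Chernoff/domination argument over the non-blocked rounds. The one structural difference is that the paper does not use a single terminal window: it cuts the post-convergence phase into $\Theta(1)$ disjoint candidate intervals of length $\alpha\ln n$, shows each ``good'' interval succeeds with probability $\ge 1-1/n$, and amplifies across intervals to drive the per-node failure probability down to $n^{-c/3}$ for an arbitrarily large constant. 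This amplification is needed under the paper's stated per-interval bound of $1/n$, which would not survive the union bound over all $n$ qualifying nodes; your single-window version survives that union bound only because you use the sharper per-window estimate $\alpha\ln n\cdot O((\log n/n)^2)=\tilde O(n^{-2})$ for event (i), yielding $n^{-1+o(1)}$ overall, which still meets the paper's definition of w.h.p.\ but with a weaker exponent. It would be worth stating that union bound over nodes explicitly, since the lemma quantifies over every node that is blocked sufficiently rarely; otherwise the two proofs are interchangeable.
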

\begin{proof}
We know that there is a constant $\beta'>0$ so that the original $(k,\ell)$-majority algorithm reaches an almost-everywhere consensus on some value $y \in \{0,1\}$ in at most $\beta' \ln n$ rounds, w.h.p. Consider a node $u$ that is blocked in at most a fifth of $\beta \ln n$ rounds, where $\beta \ge 5 \beta'$. Then $u$ is blocked in at most a fourth of the last $(\beta-\beta') \ln n$ rounds, since otherwise it would be blocked too often over all of the $\beta \ln n$ rounds. Next, assume that $\beta$ is large enough so that the last $(\beta-\beta')\ln n$ rounds can be cut into $c$ intervals of $\alpha \ln n$ rounds each. For at most $2c/3$ of these intervals $u$ is blocked at least $1/3$ of the rounds since otherwise it would again be blocked too often. Let $S$ be the set of the remaining $c/3$ intervals. Consider some fixed interval $I \in S$. Since $u$ is blocked at most a third of the rounds in $I$ and for any round in which $u$ is non-blocked, $\Pr[u$ is undefined$] \le 2/11$, it follows from standard Chernoff bounds that there are at least half of the rounds in $I$ where $u$ is defined w.h.p. (if $\alpha>0$ and $n$ are large enough). Thus, any interval $I \in S$ is a candidate where $u$ may output a value. For any round $t$ in such an $I$ in which $u$ is defined, the probability that $u$ sets its variable to the non-consensus value is at most
\[
  \binom{3}{2} \left( \frac{\gamma \log n}{3n/4} \right)^2 + \left( \frac{\gamma \log n}{3n/4} \right)^3 \le \left( \frac{3\gamma \log n}{n} \right)^2
\]
Hence, the probability that $u$ sets its variable to the non-consensus value in some round in $I$ is at most
\[ 
  \alpha \ln n \cdot \left( \frac{3\gamma \log n}{n} \right)^2 < \frac{1}{2n}
\]
if $n$ is sufficiently large, and therefore, $\Pr[u$ does not output $y$ at the end of $I] \le 1/n$.
Thus, the probability that there is no $I \in S$ at which $u$ outputs $y$ at the end of $I$ is at most $n^{-c/3}$, which completes the proof.
\end{proof}

Note that a $1/16$-bounded adversary can cause at most $1/3$ of the nodes to be blocked more than $1/5$ of the rounds, which implies that at least $2/3$ of the nodes will produce an output w.h.p. Finally, we show that it cannot happen that two nodes output different values, so nodes eventually decide such that almost-everywhere consensus holds w.h.p.

\newcommand{\thmextensiontwo}{
There is a constant $\alpha > 0$ such that if a node $u$ outputs a value $y \in \{0, 1\}$ in the deciding $(k,\ell)$-majority algorithm, then no node ever outputs a value $z \ne y$ w.h.p.
}
\begin{lemma}\label{thm:extension2}
\thmextensiontwo
\end{lemma}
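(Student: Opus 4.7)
The plan is to combine the convergence guarantee from Theorem~\ref{thm:convergence} (used in the proof of Lemma~\ref{thm:extension1}) with a Chernoff-type concentration bound applied to the decision rule. The high-level idea is that, once the algorithm has converged almost-everywhere to some value $y^* \in \{0,1\}$ and the minority value $1-y^*$ is held by only $O(\log n)$ nodes, it is super-polynomially unlikely for any single node $u$ to accumulate the $\alpha \ln n / 2$ ``wrong'' states required by the decision rule to output $1-y^*$.

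More concretely, I would first invoke Theorem~\ref{thm:convergence} together with the persistence argument inside the proof of Lemma~\ref{lem:caseone}: with high probability there is a constant $\beta' > 0$ and a value $y^* \in \{0,1\}$ such that within $\beta' \ln n$ rounds the $(6,3)$-majority algorithm reaches almost-everywhere consensus on $y^*$, and at most $c_1 \log n$ nodes hold the value $1-y^*$ in every one of the next $\poly(n)$ rounds. Let $\mathcal{E}$ denote this global event; by tuning parameters we have $\Pr[\mathcal{E}^c] \le n^{-c_0}$ for any desired constant $c_0$. Now fix the constant in the lemma to be $\alpha := 2\beta' + 1$, i.e.\ strictly greater than $2\beta'$.

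Suppose for contradiction that a node $u$ outputs $y := 1 - y^*$ at some round $t$. The decision rule forces $x_u \in \{y, \bot\}$ throughout the window $W := [t-\alpha \ln n + 1, t]$ and $x_u = y$ in at least $\alpha \ln n / 2$ of these rounds. Split $W = W_{\mathrm{pre}} \cup W_{\mathrm{post}}$, where $W_{\mathrm{pre}} := W \cap [1, \beta' \ln n]$. Since $|W_{\mathrm{pre}}| \le \beta' \ln n < \alpha \ln n / 2$, at least $m := \lceil \alpha \ln n / 2 \rceil - \beta' \ln n = \Omega(\log n)$ rounds in $W_{\mathrm{post}}$ must satisfy $x_u = y$. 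Conditioning on $\mathcal{E}$ and any realization of the history up to round $s-1$, the fraction $p$ of $y$-values in the pool at round $s \in W_{\mathrm{post}}$ is at most $O(\log n / n)$, so
\[
  \Pr[\,x_u = y \text{ at round } s \mid \mathcal{E},\ \text{history}\,] \le 3 p^2 + p^3 = O\!\left( \frac{\log^2 n}{n^2} \right).
\]
Chaining this bound across any fixed set of $m$ target rounds (the bound survives conditioning on earlier $x_u = y$ events because under $\mathcal{E}$ the state in every round still has $O(\log n)$ many $y$-valued nodes) and summing over all $\binom{|W_{\mathrm{post}}|}{m}$ such sets shows that the conditional probability of this many ``wrong'' states is at most $\binom{|W_{\mathrm{post}}|}{m} \bigl(O(\log^2 n / n^2)\bigr)^{m} \le n^{-\Omega(\log n)}$. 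Adding $\Pr[\mathcal{E}^c] \le n^{-c_0}$ and taking a union bound over all $n$ nodes and all $\poly(n)$ possible output rounds $t$ yields that w.h.p.\ no node ever outputs $1-y^*$, which is exactly what the lemma claims.

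The main obstacle is handling windows $W$ that reach back into the pre-convergence phase, because there the value distribution is essentially unconstrained and a Chernoff bound over the whole window would be vacuous. The resolution is precisely the choice $\alpha > 2\beta'$: this ``gives away'' every pre-convergence round to the adversary for free and still leaves $\Omega(\log n)$ post-convergence rounds that must produce $x_u = y$, which the concentration argument above then rules out.
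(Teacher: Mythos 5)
Your proof is correct in its essentials, but it follows a genuinely different route from the paper's. The paper never fixes the global consensus value up front. Instead it argues locally from the decision event: (i) if $u$ outputs, then some round in its window must have had imbalance $|\Delta_t|\ge n_t/4$ --- this step exploits the clause of the decision rule that you never use, namely that $x_u\in\{y,\bot\}$ for the \emph{entire} window: in a balanced state a defined node adopts either value with constant probability $\ge 1/8$ per round, so it cannot avoid one of the two values over $\Omega(\log n)$ defined rounds; (ii) once such an imbalance exists, the majority value keeps support $\ge n_t/4$, so every outputter (the first one and all later ones) must have seen that value at least once in its window and hence must output it. You instead anchor on the convergence event of \Cref{thm:convergence}, pick $\alpha$ explicitly as a function of the convergence constant $\beta'$, and use the ``at least half the window has $x_u=y$'' clause together with the per-round probability $O(\log^2 n/n^2)$ of adopting the minority value post-convergence (the same bound the paper computes inside \Cref{thm:extension1}). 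The trade-offs: your version yields a concrete choice of $\alpha$ and a stronger $n^{-\Omega(\log n)}$ failure bound per node, but it needs the stronger persistence claim $X_t=O(\log n)$ from Case~1 rather than just $\Delta_t\ge n_t/4$, and the chaining step should be phrased by conditioning only on the prefix of $\mathcal{E}$ up to round $s-1$ (conditioning on all of $\mathcal{E}$, which constrains future rounds, could in principle bias the round-$s$ sampling); this is a standard fix and is no less rigorous than the paper's own treatment. Both arguments are valid proofs of the lemma.
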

\begin{proof}
First, we show that if a node $u$ outputs a value $y \in \{0,1\}$, then there must have been a round $t$ in the past $\alpha \ln n$ rounds with $|\Delta_t| \ge n_t/4$. Suppose that there was no such round. Then for every round $t$ of these $\alpha \ln n$ rounds there are at least $n_t/4$ nodes $v \in V$ with $x_v=0$ and at least $n_t/4$ nodes $v \in V$ with $x_v=1$. Let $T$ be the set of the past $\alpha \ln n$ rounds in which $u$ was defined. We know that if $u$ outputs a value then $|T| \ge (\alpha/2) \ln n$. For any round $t \in T$, 
\[
  \Pr[ x_u=0 \mid \mbox{$u$ is defined} ] \ge \binom{3}{2} \left( \frac{1}{4} \right)^2 \left( 1- \frac{1}{4} \right) \pm \left( \frac{1}{n_t} \right) \ge \frac{1}{8}
\]
and the same lower bound also applies to $\Pr[x_u = 1 \mid u$ is defined $]$. Thus, for any $z \in \{0,1\}$,
\[
  \Pr[ x_u \not=z \mbox{ for all } t \in T] \le (1-1/8)^{(\alpha/2)\ln n}
  \le e^{-(\alpha/16)\ln n} = n^{-\alpha/16}
\]
This, however, means that if $\alpha>0$ is a sufficiently large constant, then w.h.p. $u$ would not output a value. Hence, if $u$ outputs a value, there must have been a round $t_0$ in the past $\alpha \ln n$ rounds with $|\Delta_{t_0}| \ge n_{t_0}/4$, w.h.p. 

Next we show that the value $y$ output by $u$ must be the majority value at round $t_0$. This holds because from $t_0$ there will always be at least $n_t/4$ of the nodes with value $y$, w.h.p., so the probability that $x_u$ is never equal to $y$ in the past $\alpha \ln n$ rounds is still polynomially small in $n$.

Suppose now that $u$ is the first node that outputs a value $y$. Then there cannot be a node $v$ that subsequently outputs a different value for the same reasoning as for $u$ since also for $v$ there must have been at least $n_t/4$ of the nodes with value $y$ in each of the past $\alpha \ln n$ rounds, completing the proof.
\end{proof}

Combining \Cref{thm:extension1} and \Cref{thm:extension2} with \Cref{thm:convergence}, and observing that Cases~1 and 2 imply the plurality property we have shown the following result:

\begin{theorem} \label{lem:binaryconsensus}
  The $(k,\ell)$-majority algorithm together with the decision rule  achieves binary almost-everywhere consensus (w.h.p.) in $O(\log n)$ rounds, against a late adversary who can block up to $\epsilon n$ nodes in each round, for $\epsilon < \tfrac{1}{16}$. 
  Moreover, the decision value will satisfy the plurality property (w.h.p.).
\end{theorem}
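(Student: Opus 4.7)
The plan is to chain together the three results already in hand: \Cref{thm:convergence} for value-level convergence of the plain $(6,3)$-majority rule, together with \Cref{thm:extension1} and \Cref{thm:extension2}, which upgrade the algorithm so that nodes produce consistent outputs. I would fix a constant $\beta' > 0$ from \Cref{thm:convergence} so that after $\beta' \ln n$ rounds all but a constant fraction of nodes hold a common value $y \in \{0,1\}$ with high probability, and then choose $\beta \ge 5\beta'$ large enough for $\beta \ln n$ to satisfy the hypotheses of \Cref{thm:extension1}, giving the claimed $O(\log n)$ running time.

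Next I would verify that a constant fraction of the nodes actually produce an output. Since the $\epsilon$-bounded adversary blocks at most $\epsilon n$ nodes per round, the total number of (node, round) blocking events over $\beta \ln n$ rounds is bounded by $\epsilon \beta n \ln n$. A simple counting argument then shows that at most $5 \epsilon n \le 5n/16 < n/3$ nodes can have been blocked in more than a fifth of those rounds. Hence at least $2n/3$ nodes satisfy the hypothesis of \Cref{thm:extension1} and each of them outputs a value w.h.p., while \Cref{thm:extension2} certifies that all of these outputs coincide. This yields almost-everywhere agreement with a constant loss $\gamma n$. Validity is immediate from the update rule: if every node starts with input $x$, then every defined node only ever samples values equal to $x$ and thus retains $x$ (or is reset to $\bot$ if blocked), so $x$ is the unique value that can ever appear and hence the only value that can ever be output.

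For the plurality guarantee I would revisit the case analysis in the proof of \Cref{thm:convergence}. Case~3 is the only one in which the identity of the eventual consensus value is determined by randomness rather than by the existing imbalance, and it is entered only when $|\Delta_t| < c \sqrt{n_t \ln n_t}$. For an initial bias $s \ge \sqrt{n(1+\epsilon)\log(n(1+\epsilon))}$ one already has $\Delta_0 \ge s/2$ above the Case~3 threshold, so the execution starts in Case~1 or Case~2 and stays there, driving the imbalance monotonically towards the initial majority value w.h.p. The main bookkeeping task is aligning the constants $\alpha$, $\beta'$, and $\beta$ (and their associated round counts) so that the decision rule fires only after value-level consensus has actually been reached and not earlier; beyond this, no further substantive argument is needed past what the three component lemmas already provide.
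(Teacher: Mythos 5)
Your proposal is correct and follows essentially the same route as the paper: the theorem is obtained by combining \Cref{thm:convergence} with \Cref{thm:extension1} and \Cref{thm:extension2}, using the same counting argument that a $1/16$-bounded adversary can block at most a third of the nodes in more than a fifth of the rounds (so at least $2/3$ of the nodes output), and observing that Cases~1 and~2 of the convergence analysis yield the plurality guarantee for initial bias $s \ge \sqrt{n(1+\epsilon)\log(n(1+\epsilon))}$. Your added remarks on validity and constant alignment are consistent with, and slightly more explicit than, the paper's one-line justification.
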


\section{Multi-Value Consensus} \label{sec:multivalue}
 
We now describe an algorithm for the case where nodes start out with arbitrary input values. 
While the algorithm achieves the same runtime bound as the majority rule algorithm presented in \Cref{sec:binary}, we point out that it does not guarantee the plurality property, due to its use of a maximum-rule for spreading a common value.

We first give a high-level overview and present the detailed pseudocode in Algorithm~\ref{alg:multivalue}. 
Initially, each node $u$ has a local variable $x_u$ that is set to an input value chosen by the adversary from a fixed domain of size $\poly(n)$. 
We say that a node is \emph{active in the current round}, if it initiates communication with other nodes and call it \emph{inactive} otherwise.
In the first round, each node becomes {active} with probability $\Theta(\frac{\log n}{n})$ and every active node sends its input value to a set of $\Theta(\log n)$ nodes chosen uniformly at random.
Only initially active nodes retain their input value; all other nodes (including the ones that are blocked) reset their value to a default value $\bot$, which is considered to be the minimum value. 

From this point onward, the algorithm performs $\Theta(\log n)$ iterations of information spreading: 
We start an iteration by instructing nodes that have received messages to mark themselves as active.
Then, every active node computes its new value as the maximum of the received values including its current value, 
sends its updated value to $2$ nodes $v$ and $w$ chosen uniformly at random.
In the next iteration, $v$ and $w$ in turn become active and participate in spreading their newly computed values.
After $\Theta(\log n)$ iterations, all nodes decide on their current value. 

To understand why the first round requires active nodes to communicate with $\Theta(\log n)$ nodes each, it is instructive to consider the ``uniform'' algorithm where active nodes always contact $2$ peers.  
Since the adversary can block up to $\epsilon n$ nodes in each round, it has constant probability of blocking \emph{all} nodes that currently hold the maximum value $x^*$, assuming it targets a randomly chosen subset of nodes during the first few rounds.  
If successful, the adversary can slowdown the spreading of value $x^*$ among the nodes in the network.
The end result of this scenario is that a large constant fraction of the nodes never receive $x^*$ and thus end up deciding on some other value $x'<x^*$ instead---a violation of almost-everywhere agreement.

\begin{algorithm}
\begin{algorithmic}[1]
\STATE Every node $u$ initalizes $x_u$ with its input value and, with probability $\tfrac{c_1\log n}{n}$, joins the \emph{active} nodes; otherwise it marks itself \emph{inactive}. \COMMENT{ Note that $c_1$, $c_2$, $c_3>0$ are constants.}
\STATE Each active node $u$ sends $x_u$ to a set of $\lceil c_2\log n\rceil$ uniformly at random chosen nodes.
\STATE Every inactive or blocked node $v$ sets $x_v = \bot$; (we define $\bot\le x$ for all values $x$).
\FOR{$t=1,\dots,\lceil c_3\log n\rceil$ iterations}
  \FOR{every node $v$}
    \STATE $v$ updates $x_v = \max(R \cup \{x_v\})$, where $R$ is the set of newly received values.
    \STATE \textbf{if} $x_v \ne \bot$ \textbf{then} $v$ joins \emph{active} nodes.
    \IF{$t<c_3\log n$ and $v$ is active} 
      \STATE Node $v$ sends $x_v$ to $2$ unif.\ at random chosen nodes.
    \ENDIF
  \ENDFOR
\ENDFOR
\STATE Every node $u$ decides on $x_u$ and terminates.
\end{algorithmic}
\caption{Multi-value consensus against a late adversary}
\label{alg:multivalue}
\end{algorithm}

\begin{theorem} \label{thm:multivalue}
Consider the late adversary that can block up to $\epsilon n$ nodes in each round, for $\epsilon\le\tfrac{1}{10}$, and suppose that nodes start with input values chosen by the adversary. 
Then, for any positive constant $\delta<1$, there is an algorithm that achieves consensus among $(1-\tfrac{\epsilon}{\delta})n$ nodes in $O(\log n)$ rounds with high probability, while sending $O(n\log n)$ messages in total. 
\end{theorem}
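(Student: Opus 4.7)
The plan is to track the number $Y_t$ of nodes holding the distinguished value
$x^* := \max\{x_v : v \text{ is initially active and unblocked in round 1}\}$
at the end of round $t$. Since the iterative update replaces $x_v$ by $\max(R \cup \{x_v\})$ and blocked nodes do not reset $x_v$ to $\bot$ inside the for-loop, once a node holds $x^*$ it never loses it, so $Y_t$ is non-decreasing. The goal is to show $Y_T \ge (1-\epsilon/\delta)n$ within $T = O(\log n)$ rounds; every $x^*$-holder then decides $x^*$ in the final step, giving almost-everywhere agreement.

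I would first handle the seeding phase (rounds 1--2). A Chernoff bound gives $\Theta(\log n)$ initially active nodes w.h.p. The late adversary in round 1 sees only the initial input values, not the activation coin flips, and its block set $B_2$ is determined by the initial state and is therefore independent of the $c_2\log n$ uniformly random recipients selected in round 1 by the $x^*$-holders. A standard concentration argument then yields $Y_2 \ge (1-\epsilon-o(1))\, c_2\log n = \Omega(\log n)$ w.h.p.

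For the growth phase I would prove that $Y_t$ grows by a constant factor greater than $1$ per round until $Y_t \ge n/2$. The key observation exploiting lateness is that when the adversary chooses $B_{t+1}$, it knows only the state at the start of round $t$, in particular only the set $K_{t-1}$ of $x^*$-holders from \emph{two} rounds earlier, and \emph{not} the newly infected set $U_t := K_t \setminus K_{t-1}$ created in round $t$. Since the recipient choices that produced $U_t$ are uniform in $V\setminus K_{t-1}$ and independent of $B_{t+1}$, the intersection $|U_t \cap B_{t+1}|$ is hypergeometrically dominated with mean at most $\epsilon|U_t|$ and concentrates once $|U_t|=\Omega(\log n)$. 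The resulting $(1-\epsilon-o(1))|U_t|$ unblocked new holders each send $x^*$ to two uniformly random peers, and for $|K_t|=o(n)$ and $\epsilon\le 1/10$ this gives $\E[U_{t+1}] \ge 2(1-\epsilon)^2(1-o(1))|U_t| > |U_t|$, i.e., a constant-factor multiplicative growth of the new frontier. Once $|K_{t-1}| > \epsilon n$, the adversary can no longer block all known senders, so the entire set $K_t$ contributes $\Omega(|K_t|)$ unblocked senders per round and $Y_t$ grows roughly by a factor of two per round until it exceeds $n/2$.

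The main technical obstacle is establishing concentration for $U_{t+1}$ in the presence of strongly dependent random variables: two senders may collide on the same recipient, and the adversarial choice of $B_{t+1}$ interacts with the uniform choices from round $t$. I would handle this via the method of bounded differences hinted at in \Cref{sec:tc}: conditioned on the history through the start of round $t$ (which fixes $B_{t+1}$ and the cardinality $|U_t|$), I expose the $2|K_t\setminus B_{t+1}|$ uniform recipient choices in round $t$ one at a time as a Doob martingale, observe that a single re-randomization changes the count of new unblocked recipients by at most $2$, and invoke Azuma's inequality to conclude $U_{t+1} \ge \tfrac{1}{2}\E[U_{t+1}]$ w.h.p. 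Finally, for coverage of the last $O(n)$ nodes once $Y_t \ge n/2$, any non-holder unblocked in a given round receives $x^*$ with constant probability, and a Markov-type bound on the total block budget $\epsilon n\cdot c_3\log n$ shows that at most $\epsilon n/\delta$ nodes can be blocked in more than a $\delta$-fraction of the $c_3\log n$ iterations; every other node is unblocked in $\Omega(\log n)$ rounds and therefore receives $x^*$ w.h.p.\ by a union bound, establishing agreement on $(1-\epsilon/\delta)n$ nodes. The $O(n\log n)$ message bound follows from $\Theta(\log^2 n)$ messages in round 1 plus $O(n)$ messages per subsequent round over $O(\log n)$ iterations.
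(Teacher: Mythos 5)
Your proposal follows essentially the same route as the paper's proof: seed $\Theta(\log n)$ holders of $x^*$ in round~1 using the adversary's ignorance of the activation coin flips; exploit lateness by observing that the block set for the next round is fixed before the current round's uniformly random recipient choices, so the newly infected ``frontier'' is hidden from the adversary; establish constant-factor growth via a balls-into-bins count whose concentration is obtained from a Doob martingale and Azuma's inequality (the variables are dependent, exactly as the paper notes); and finish with the same budget-counting argument showing that all but $\tfrac{\epsilon}{\delta}n$ nodes are unblocked in a constant fraction of the last $\Theta(\log n)$ rounds and hence receive $x^*$. The only structural difference is cosmetic: the paper pessimistically assumes senders ``forget'' their value after sending, so its tracked set $S_t$ \emph{is} the frontier, whereas you keep the cumulative set and argue monotonicity separately.

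Two quantitative points need repair. First, your stated concentration target $U_{t+1}\ge\tfrac12\E[U_{t+1}]$ does not close the induction: with $\E[U_{t+1}]\approx 2(1-\epsilon)^2|U_t|\approx 1.62\,|U_t|$, half of that is $0.81\,|U_t|$, a shrinking frontier. What Azuma actually yields is $U_{t+1}\ge\E[U_{t+1}]-O\bigl(\sqrt{|U_t|\log n}\bigr)$, and you must choose the seeding constant large enough that this additive deviation is a small fraction of $|U_t|$ (the paper does exactly this, degrading the growth rate from $1+\beta/10$ to $1+\beta/20$). Second, the claim that $Y_t$ ``grows roughly by a factor of two per round until it exceeds $n/2$'' is too optimistic once $|K_t|=\Theta(n)$, since a constant fraction of messages then land on already-infected or colliding bins; you need an explicit estimate such as $1-e^{-(1+\beta)r}\ge(1+\beta/10)r$ valid for $r$ up to some constant $\rho>2\epsilon$, and then hand off to the last-mile argument at $\rho n$ rather than $n/2$. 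Both fixes are routine and do not change the architecture of your argument.
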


We first recall some concentration bounds. 

\begin{definition}[Martingale~\cite{mitzenmacher}] \label{def:martingale}
A sequence of random variables $Z_0,Z_1,\dots$ is a martingale with respect to the sequence $X_0,X_1,\dots$ if, for all $t \ge 0$, the following conditions hold:
\begin{compactenum}
\item $Z_t$ is a function of $X_0,\dots,X_t$.
\item $\E[Z_t] < \infty$.
\item $\E[ Z_{t+1} \mid X_0,\dots,X_t] = Z_t$.
\end{compactenum}
\end{definition}

\begin{definition}[Lipschitz Condition \cite{mitzenmacher}] \label{def:lipschitz}
Let $f(X_1,\dots,X_n)$ be a function to the reals. We say that $f$ satisfies the \emph{Lipschitz condition} with bound $c$ if, for any set of values $x_1,\dots,x_n$ and $y_i$, 
\[
  |f(x_1,\dots,x_{i-1},x_i,x_{i+1},\dots,x_n) - 
  f(x_1,\dots,x_{i-1},y_i,x_{i+i},\dots,x_n)| \le c.
\]
\end{definition}

\begin{theorem}[Azuma-Hoeffding Inequality \cite{mitzenmacher}] \label{thm:azuma}
  Let $X_0,\dots,$ be a martingale sequence such that for each $k$, $|X_k - X_{k-1}|\le c_k$. Then, for all $t\ge 0$ and all $\lambda>0$, 
  \[
    \prob\left[ |X_t - X_0| \ge \lambda \right]  \le 2\exp\left( - \frac{\lambda^2}{2\sum_{k=1}^t c_k^2} \right).
  \]
\end{theorem}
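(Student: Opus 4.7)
The plan is to establish the Azuma--Hoeffding inequality by the standard exponential-moment (Chernoff-style) argument, applied to the martingale difference sequence. First I would introduce the increments $Y_k := X_k - X_{k-1}$, so that $X_t - X_0 = \sum_{k=1}^{t} Y_k$. By the martingale property, $\E[Y_k \mid X_0,\dots,X_{k-1}] = 0$, and by hypothesis $|Y_k| \le c_k$. I would prove the one-sided tail bound $\Pr[X_t - X_0 \ge \lambda] \le \exp(-\lambda^2/(2\sum_{k=1}^t c_k^2))$ and then apply it symmetrically to the martingale $-X_k$ (which has the same increment bounds) to pick up the factor of $2$ in the two-sided statement.

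The main step is Hoeffding's lemma: if $Y$ is a random variable with $\E[Y]=0$ and $|Y|\le c$, then for every $\alpha>0$, $\E[e^{\alpha Y}] \le e^{\alpha^2 c^2/2}$. I would prove this by convexity of $e^{\alpha y}$ on $[-c,c]$, writing
\[
  e^{\alpha y} \le \tfrac{c-y}{2c}\,e^{-\alpha c} + \tfrac{c+y}{2c}\,e^{\alpha c},
\]
taking expectations (the linear term vanishes because $\E[Y]=0$), and then bounding the resulting function of $\alpha c$ by $e^{\alpha^2 c^2/2}$ via a Taylor expansion of its logarithm. This lemma will be used conditionally: for each $k$, applied to $Y_k$ given $\mathcal{F}_{k-1} := \sigma(X_0,\dots,X_{k-1})$, it yields $\E[e^{\alpha Y_k}\mid \mathcal{F}_{k-1}] \le e^{\alpha^2 c_k^2/2}$ pointwise.

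Next I would iterate using the tower property. Writing
\[
  \E\bigl[e^{\alpha(X_t - X_0)}\bigr] = \E\bigl[e^{\alpha(X_{t-1}-X_0)}\,\E[e^{\alpha Y_t}\mid \mathcal{F}_{t-1}]\bigr] \le e^{\alpha^2 c_t^2/2}\,\E\bigl[e^{\alpha(X_{t-1}-X_0)}\bigr],
\]
a straightforward induction on $t$ gives $\E[e^{\alpha(X_t - X_0)}] \le \exp\bigl(\tfrac{\alpha^2}{2}\sum_{k=1}^t c_k^2\bigr)$. Combining with Markov's inequality, $\Pr[X_t - X_0 \ge \lambda] \le e^{-\alpha\lambda}\,\E[e^{\alpha(X_t - X_0)}]$, yields an upper bound $\exp\bigl(-\alpha\lambda + \tfrac{\alpha^2}{2}\sum_k c_k^2\bigr)$ for every $\alpha>0$.

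Finally I would optimize: setting $\alpha = \lambda/\sum_{k=1}^t c_k^2$ minimizes the exponent and gives $\Pr[X_t - X_0 \ge \lambda] \le \exp\bigl(-\lambda^2/(2\sum_{k=1}^t c_k^2)\bigr)$. Repeating the argument with $-X_k$ in place of $X_k$ and taking a union bound yields the two-sided bound claimed in the theorem. The only real obstacle is Hoeffding's lemma itself; the rest is a clean Chernoff-style optimization, and since this is a textbook result (the paper cites Mitzenmacher), the writeup would be brief and standard.
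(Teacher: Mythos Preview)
Your proposal is correct and is precisely the standard textbook proof of the Azuma--Hoeffding inequality. The paper itself does not prove this theorem at all: it is stated as a cited result from \cite{mitzenmacher} and used as a black box in the proof of \Cref{lem:vt}, so there is no ``paper's own proof'' to compare against beyond noting that your argument is exactly what one finds in the cited reference.
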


\noindent\textbf{Proof of \Cref{thm:multivalue}.}
Since the late adversary can block a constant fraction of nodes, we cannot rely on existing results on information spreading to show almost-everywhere agreement.
For round $t$, let $B_t$ be the set of blocked nodes and define $A$ to be the set of active nodes in round $1$. 
Let $x^*$ be the maximum value of all input values of nodes in $A \setminus B_1$ and let $S_0 \subseteq A \setminus B_1$ be the set of non-blocked active nodes\footnote{Recall that the nodes that the adversary blocks are affected \emph{after} they have performed their local computation.} starting with $x^*$.

For rounds $t\ge 1$, we define $S_t$ to be the set of nodes whose value is $x^*$ at the end of round $t$ and define $\sigma_t = |S_t|$.
The following two lemmas show that $\sigma_t$ grows by a constant factor until it reaches a size of $\Omega(n)$.

\newcommand{\lemvone}{
  $\sigma_1 \ge \Omega(\log n)$.
}
\begin{lemma} \label{lem:v1}
  \lemvone
\end{lemma}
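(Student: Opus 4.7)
The plan is to lower-bound $\sigma_1$ by examining the initial broadcast from $S_0$ in round~1. The key observation is that the late adversary, when choosing the block set $B_1$, knows only the initial input values and node identities, but is oblivious to the coin flips determining the set $A$ of initially active nodes and to the destinations chosen by each active node. Thus $B_1$ is independent of both $A$ and the broadcast targets.

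First I would show that $|A \setminus B_1| = \Omega(\log n)$ w.h.p. Since each node independently joins $A$ with probability $c_1\log n/n$, a Chernoff bound gives $|A| = (1 \pm o(1))\, c_1\log n$ w.h.p. Because $B_1$ is determined independently of $A$, the quantity $|A \cap B_1|$ is a sum of at most $|B_1| \le \epsilon n$ independent Bernoulli$(c_1\log n/n)$ indicators, so another Chernoff bound yields $|A \cap B_1| \le 2\epsilon\, c_1\log n$ w.h.p. For $\epsilon \le \tfrac{1}{10}$ this leaves $|A \setminus B_1| \ge (1 - 2\epsilon - o(1))\, c_1\log n = \Omega(\log n)$; in particular $x^*$ is well defined and $|S_0| \ge 1$.

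Next I fix an arbitrary $u_0 \in S_0$. In round~1, $u_0$ sends $x^*$ to $\lceil c_2\log n\rceil$ destinations drawn i.i.d.\ uniformly from $[n]$. Since these draws are independent of $B_1$, each destination lies in $B_1$ independently with probability at most $\epsilon$, and a Chernoff bound yields at least $(1-\epsilon-o(1))\,c_2\log n$ non-blocked destination slots w.h.p. Moreover, with $c_2\log n = o(\sqrt n)$ the expected number of colliding choices is $o(1)$, so the number of \emph{distinct} non-blocked recipients is still $\Omega(\log n)$ w.h.p. Each such recipient $v$ then applies $x_v := \max(R \cup \{x_v\})$ and ends up with $x_v = x^*$: indeed every value currently present in the system is either $\bot$ (for inactive or blocked nodes after the reset step) or an input value of a node in $A \setminus B_1$, all of which are $\le x^*$ by the definition of $x^*$. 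Together with $u_0$ itself, this gives $\sigma_1 \ge \Omega(\log n)$.

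The only delicate point is the bookkeeping of the adversary's knowledge: once one verifies that in round~1 the adversary must commit to $B_1$ before seeing either the activation set $A$ or the random destination choices of the active nodes, the rest reduces to standard Chernoff concentration.
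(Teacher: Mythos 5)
Your proposal is correct and follows essentially the same route as the paper: both exploit that $B_1$ is independent of the round-1 coin flips to get $|A\setminus B_1|=\Omega(\log n)$ via Chernoff, then argue that the $\Theta(\log n)$ random destinations of a non-blocked active node holding $x^*$ are mostly non-blocked and adopt $x^*$ under the max rule. Your added care about collisions among destinations and about why every recipient's max evaluates to $x^*$ only makes explicit steps the paper leaves implicit.
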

\begin{proof}
By description of the algorithm, the expected number of active nodes is $\ge c_1\log n$ and, since the adversary is oblivious to the current coin flips of the nodes, the probability of blocking an active node is at most $\epsilon$, resulting in at most $\epsilon \cdot c_1 \log n$ blocked active nodes in round $1$ in expectation. 
As nodes become active independently, a standard Chernoff bound ensures that, w.h.p., 
\[
  |A \setminus B_1| \ge \tfrac{1}{2}\left(1 - \epsilon\right)c_1 \log n = \Omega(\log n).
\] 
Since $x^*$ is defined as the maximum of the input values in $A \setminus B_1$, at least $c_1\log n$ randomly chosen peers are contacted by at least one active non-blocked node in $A\setminus B_1$ that has $x^*$ as its input value. 
Similarly to above, we use a Chernoff bound to argue that at least $\tfrac{1}{2}(1 - \epsilon)c_1\log n$ of the contacted peers are non-blocked in the current round and hence adopt $x^*$, thus proving $\sigma_1 = \Omega(\log n)$.
\end{proof}

\begin{lemma} \label{lem:vt}
Let $r = r(t) = \tfrac{\sigma_{t-1}}{n}$ and let $\rho<1$ be a positive constant determined in the proof.
There is a constant $\alpha>0$ such that, for any $t$ where $r < \rho$, we have
$  \sigma_t \ge \max\left\{(1+\alpha)\sigma_{t-1},\Theta(\log n)\right\}.$
\end{lemma}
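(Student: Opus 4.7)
The plan is to bound below the number $N_t := \sigma_t - \sigma_{t-1}$ of nodes that newly adopt the maximum value $x^*$ in round $t$, and then apply a concentration bound. Recall that values are monotonically non-decreasing along an execution (each update is a max over received values together with the current value), so $S_{t-1}\subseteq S_t$, and every non-blocked node $v\in S_{t-1}$ sends $x_v=x^*$ to two uniformly random peers during round $t-1$; these messages arrive in round $t$, and if the recipient is not in $B_t$, it updates its value to $x^*$ and thus joins $S_t$.

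First I would show that the number of effective senders $|S_{t-1}\setminus B_{t-1}|$ is at least a constant fraction of $\sigma_{t-1}$. The late adversary choosing $B_{t-1}$ sees only the state at the beginning of round $t-2$, so it cannot specifically target nodes that first became $x^*$-holders in round $t-2$ or $t-1$; together with the hypothesis $r<\rho$ for a small constant $\rho$ and with $\epsilon\le\tfrac{1}{10}$, a routine argument (leveraging the previous iterate of this lemma inductively when $\sigma_{t-2}$ is comparable to $\sigma_{t-1}$) yields $|S_{t-1}\setminus B_{t-1}|\ge c\,\sigma_{t-1}$ for some constant $c>0$.

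Next I would compute $\E[N_t]$. Each of the $2|S_{t-1}\setminus B_{t-1}|$ messages is routed to a uniformly random destination, independently of $B_t$ (the adversary fixed $B_t$ at the start of round $t$ based on the state at the beginning of round $t-1$, which is prior to the destination draws). The probability that a fixed node $v\notin S_{t-1}\cup B_t$ receives at least one of these messages is $1-(1-\tfrac{1}{n})^{2|S_{t-1}\setminus B_{t-1}|}$, and the number of such candidate nodes is at least $(1-\rho-\epsilon)n$, a positive constant fraction of $n$. Using $r=\sigma_{t-1}/n<\rho$ and linearity of expectation, a direct calculation gives $\E[N_t]\ge \alpha'\sigma_{t-1}$ for a constant $\alpha'>0$, provided $\rho+\epsilon$ is kept sufficiently below $1$.

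For concentration I apply the method of bounded differences using \Cref{thm:azuma}. Viewing $N_t$ as a function of the $2|S_{t-1}\setminus B_{t-1}|$ independent uniform destination choices, changing any single choice alters $N_t$ by at most $1$, so the associated Doob martingale satisfies the Lipschitz condition with $c_k=1$. Hence
\[
  \prob\!\left[N_t \le \tfrac{\E[N_t]}{2}\right] \le 2\exp\!\left(-\frac{(\E[N_t]/2)^2}{4|S_{t-1}\setminus B_{t-1}|}\right) = \exp\bigl(-\Theta(\sigma_{t-1})\bigr),
\]
which is $n^{-\Omega(1)}$ since $\sigma_{t-1}\ge \sigma_1 = \Omega(\log n)$ by \Cref{lem:v1} and monotonicity of $\sigma_t$. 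Setting $\alpha := \alpha'/2$ then delivers $\sigma_t \ge (1+\alpha)\sigma_{t-1}$ w.h.p., and the $\Omega(\log n)$ floor follows immediately from the same bound on $\sigma_{t-1}$. The main obstacle is the first step: controlling the interplay between the growing random set $S_{t-1}$ and the late adversary's lagged view, so that $|S_{t-1}\setminus B_{t-1}|$ is guaranteed to be a constant fraction of $\sigma_{t-1}$; the remaining steps are Chernoff/Azuma computations of a fairly standard flavor.
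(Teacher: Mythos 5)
Your overall architecture (lower-bound the expectation via a balls-into-bins computation, then concentrate via a Doob martingale over the destination choices and Azuma--Hoeffding) matches the paper's. But the step you defer as ``a routine argument'' --- that $|S_{t-1}\setminus B_{t-1}|\ge c\,\sigma_{t-1}$ for a constant $c>0$ --- is precisely the crux of the lemma, and as sketched it has a genuine gap. Because you keep $S_{t-1}$ monotone, all but the last round or two of its members appear in the state snapshot the adversary sees, so the adversary can block \emph{every} ``old'' member of $S_{t-1}$ outright (when $\sigma_{t-1}\le\epsilon n$, which is the regime of interest). The only senders you can salvage are the recent additions, and by your own inductive hypothesis these number only about $\bigl(1-(1+\alpha)^{-2}\bigr)\sigma_{t-1}\approx 2\alpha\,\sigma_{t-1}$. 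Hence $c=\Theta(\alpha)$, which makes your argument self-referential: $c$ depends on $\alpha$, $\alpha'$ depends on $c$, and you need $\alpha'/2\ge\alpha$ for the induction to close. Tracing the constants ($m\ge 2c\sigma_{t-1}$ messages, a $(1-\rho-\epsilon)$ fraction of candidate bins, the $(1-O(\rho))$ loss from $1-e^{-x}\ge x(1-x/2)$, and a Chernoff slack on the unblocked fresh senders) gives $\alpha'\approx 3.5\,\alpha(1-\epsilon)(1-\rho-\epsilon)(1-\rho)$, which for $\epsilon=1/10$ and any $\rho\ge 2\epsilon$ (as needed by \Cref{lem:final-phase}) is \emph{below} $2\alpha$ --- so with your choice $\alpha:=\alpha'/2$ the induction does not close. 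It can be rescued by concentrating to within, say, $10\%$ of the mean instead of $50\%$, but you would have to carry out this fixed-point check explicitly; it is not routine.

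The paper avoids this circularity entirely with a small pessimistic modification: it assumes senders ``forget'' their value after forwarding it, so that $S_{t-1}$ consists \emph{only} of nodes freshly hit in round $t-1$. Then the entire set $S_{t-1}$ is determined by coin flips the adversary has not seen, a single Chernoff bound gives $|U|\ge\tfrac{7}{8}(1-\epsilon)\sigma_{t-1}$ unblocked senders (a constant fraction independent of $\alpha$), and the growth factor $1+\beta/10$ comes directly from comparing $1-e^{-(1+\beta)r}$ with $(1+\beta/10)r$ on $[0,\rho]$. Your expectation computation, the independence-of-$B_t$-from-the-destination-draws observation, and the Azuma step all align with the paper; the missing piece is a non-circular lower bound on the effective senders, for which the forgetting device (or a verified fixed-point argument) is needed.
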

\begin{proof}
We proceed by induction over $t$. 
To simplify our analysis, we assume that after having sent it to $2$ randomly chosen peers, active nodes ``forget'' their current value and adopt value $\bot$. Clearly, this assumption can only worsen the runtime of the algorithm.
In other words, $\sigma_{t}$ only accounts for nodes that have received a message with $x^*$ in round $t$.

The base case $t=1$ follows from \Cref{lem:v1}. 
Now consider any round $t>1$. By the inductive hypothesis, $S_{t-1}$ (i.e.\ the set of nodes currently holding $x^*$) contains at least $\max\left\{(1+\alpha)\sigma_{t-2},\Theta(\log n)\right\}$ nodes unknown to the adversary. 
Let $U \subseteq S_{t-1}$ be the subset of non-blocked nodes in round $t$ that hold $x^*$ and let 
\begin{align} \label{eq:beta}
\beta = \tfrac{7}{8}(1 - \epsilon).
\end{align}
Since $\sigma_{t-1} = |S_{t-1}| \ge c_1\log n$ for a sufficiently large constant $c_1>0$ (see \Cref{lem:v1}), we can apply a standard Chernoff bound to show that 
$
  |U| \ge \beta\cdot\sigma_{t-1}
$
with high probability. 
Let $m = 2|U|$ denote the number of messages sent by nodes in $U$.
Since $\epsilon<\tfrac{1}{2}$, the event $M$, which we define as
\begin{align} \label{eq:m_bound}
  m \ge (1 + \beta)\sigma_{t-1},
\end{align}
happens with high probability.

For the purpose of our analysis, we consider each of the $m$ messages as a \emph{ball} and the possible destinations of these messages as \emph{bins}.
Let $X_i$ be the indicator random variable that the $i$-th bin is hit by at least $1$ ball.
From \eqref{eq:m_bound} and the fact that $r = \tfrac{\sigma_{t-1}}{n}$, we get
\begin{align} 
  \label{eq:xi_prob_bound1}
  \prob[ X_i \!=\! 1 \mid M ] 
    &= 1 - \left(1 - \tfrac{1}{n}\right)^m 
    \ge 1 - e^{-(1 + \beta)r}. 
\end{align}
Consider the functions
  $f(r) = 1 - e^{-(1 + \beta)r}$ and
  $g(r) = \left(1 + \tfrac{\beta}{10}\right)r$,
and their respective derivatives $f'(r) = (1 + \beta)e^{-(1+\beta)r}$ and $g'(r) = 1 + \tfrac{\beta}{10}$. 
Let 
\begin{align} \label{eq:rho}
  \rho = \log\left( \frac{1 + \beta}{1+{\beta}/{10}}\right)/(1 + \beta)>\tfrac{1}{5}\ge 2\epsilon,
\end{align}
where we have used the assumption that $\epsilon\le\tfrac{1}{10}$ and \eqref{eq:beta}.
Since $f(0) = g(0)$ and $f'(r) > g'(r)$ for all $r \in [0,\rho]$, we know that $f$ is bounded from below by $g$. 
It follows from \eqref{eq:xi_prob_bound1} that
  $\prob\left[ X_i \!=\! 1 \mid M \right] \ge \left(1 + \tfrac{\beta}{10}\right)r.$
By definition, $\sigma_t = \sum_{i=1}^n X_i$, hence 
\begin{align}
  \label{eq:expect_bound}
  \E[\sigma_t \mid M] 
    &\ge \left(1 + \tfrac{\beta}{10}\right)\sigma_{t-1}. 
\end{align}
As the random variables $X_i$ are not independent, we cannot apply a simple Chernoff Bound argument to obtain a high concentration bound. 
Instead, we will set up a martingale and use the method of bounded differences (cf. Chap~12 in \cite{mitzenmacher}).
Let $Y_i$ be the bin chosen by the $i$-th ball and let $Z=h(Y_1,\dots,Y_n)$ be the number of nonempty bins.
For $1 \le i \le n$, we define
$
  Z_i = \E[ h(Y_1,\dots,Y_n) \mid Y_1, \dots, Y_{i}, M],
$
and let $Z_0 = \E[ h(Y_1,\dots,Y_n) \mid M]$.
To see that the sequence $Z_0,Z_1,\dots,Z_n$ is a martingale (cf.\ Def.~\ref{def:martingale}), note that, by definition of $Z_i$, we have
\begin{align*}
  \E&\left[Z_i \mid Y_1, \dots, Y_{i-1}, M \right] 
   = \E\left[\ \E\left[ Z \mid Y_1, \dots, Y_{i}, M \right] 
                \mid Y_1, \dots, Y_{i-1}, M\right]  
   =  \E\left[ Z \mid Y_1, \dots, Y_{i-1}, M \right] 
  = Z_{i-1}.
\end{align*}
As changing the destination of any ball can affect the number of nonempty bins by at most $1$, we know that the function $h$ satisfies the Lipschitz condition (see Def.~\ref{def:lipschitz}) with bound $1$.
Thus we can apply the Azuma-Hoeffding inequality (see \Cref{thm:azuma}) with parameter 
$\lambda = \sqrt{c'\ \sigma_{t-1}\log n}$, for a sufficiently large constant $c'>0$, to derive that
\begin{align*}
  \prob\left[\sigma_t \le \left(1 + \tfrac{\beta}{20}\right)\sigma_{t-1} \mid M\right]
  &\le 
  \prob\left[\sigma_t \le \left(1 + \tfrac{\beta}{10}\right)\sigma_{t-1} - \lambda \mid M\right] \\
  &\le 
  \prob\left[\sigma_t \le \E[\sigma_{t} \mid M] - \lambda \mid M\right] \tag{by \eqref{eq:expect_bound}}\\
  &\le 2\exp\left( - \frac{c'\ \sigma_{t-1}\log n}{2m} \right), \tag{by \Cref{thm:azuma}}
\end{align*}
which is $n^{-\Omega(1)}$ because of \eqref{eq:m_bound}.
Recalling that event $M$ happens w.h.p., we can remove the conditioning and conclude that
\begin{align*}
  \prob\left[\sigma_t \!\ge\! \left(1 + \tfrac{\beta}{20}\right)\sigma_{t-1} \right]
  & \ge \prob\left[\sigma_t \!\ge\! \left(1 + \tfrac{\beta}{20}\right)\sigma_{t-1} \mid M\right]\cdot \prob\left[M\right] 
   \ge  1 - n^{-\Omega(1)}. 
\end{align*}
\end{proof}

Consider iterations $t=1,\dots,\Theta(\log n)$ of the algorithm. 
As long as $r\le \rho$, \Cref{lem:vt} tells us that the current set of nodes that are adopting $x^*$ keeps growing by a constant factor with high probability in each iteration. We can guarantee that this happens (w.h.p.) by taking a union bound over the $\Theta(\log n)$ iterations of the algorithm until $r>\rho$.
In the following lemma, we handle the ``last mile'' of spreading value $x^*$ to (almost) all of the remaining $<(1 - \rho)n$ nodes.

\newcommand{\lemFinalPhase}{%
  Let $t_1$ be a round where $r = \frac{\sigma_{t_1}}{n}>\rho$. 
  Then, after additional $\Theta(\log n)$ iterations of the algorithm, all except at most $(1 - \tfrac{\epsilon}{\delta})n$ nodes adopt value $x^*$ with high probability, where $\delta$ is any constant $<1$. This holds even if the adversary is strongly adaptive.
}
\begin{lemma} \label{lem:final-phase}
  \lemFinalPhase
\end{lemma}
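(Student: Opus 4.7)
The plan is to leverage the monotonicity of $\sigma_t$ together with a pigeonhole argument on the adversary's blocking budget and a geometric concentration bound for nodes the adversary cannot afford to keep isolated. The first observation is that because each node updates $x_v \gets \max(R \cup \{x_v\})$, once a node holds $x^*$ it retains $x^*$ forever; consequently $\sigma_t$ is non-decreasing for $t \ge t_1$, so $\sigma_t \ge \rho n$ throughout the final phase.

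Next, I would lower bound the probability that any fixed non-blocked node $v$ receives $x^*$ in a single round $t \ge t_1 + 1$. At least $(\rho - \epsilon) n$ non-blocked nodes hold $x^*$ (since $\sigma_t \ge \rho n$ and at most $\epsilon n$ are blocked), and each such node sends two messages to uniformly random destinations. A balls-into-bins computation then yields
\[
  \prob[v \text{ receives } x^* \mid v \text{ not blocked in round } t] \;\ge\; 1 - (1 - 1/n)^{2(\rho - \epsilon) n} \;\ge\; p,
\]
where $p := 1 - e^{-2(\rho - \epsilon)}$ is a positive constant by the choice of $\rho$ in~\eqref{eq:rho}, which guarantees $\rho > 2\epsilon$.

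I would then run the algorithm for $k = \Theta(\log n)$ additional rounds and apply a budget argument: the adversary has at most $\epsilon n k$ blocks to distribute, so at most $\epsilon n / \delta$ nodes can be blocked in more than $\delta k$ of these rounds (else the total block count would exceed $\delta k \cdot \epsilon n / \delta = \epsilon n k$). Call the remaining $\ge (1 - \epsilon/\delta) n$ nodes \emph{well-exposed}; each is non-blocked in at least $(1 - \delta) k$ rounds. For any such $v$, the random destinations chosen in each of these rounds are fresh and uniform, so the receive events can be stochastically lower bounded by independent Bernoulli trials with success probability $p$, giving failure probability at most $(1 - p)^{(1 - \delta) k} = n^{-\Omega(1)}$ for a sufficiently large constant $c_3$. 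A union bound over the at most $n$ well-exposed nodes then shows that all of them hold $x^*$ after the final phase w.h.p., proving the lemma.

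The main obstacle will be making the independence claim precise against a strongly adaptive adversary that can observe the current round's coin flips before committing to its block set: such an adversary can correlate blocking choices with receive events. The remedy is that the adversary's \emph{per-round} budget is only $\epsilon n$, so over the $(1-\delta)k$ non-blocked rounds of a well-exposed $v$ it cannot aim all of its blocks at $v$'s reception events; one can formalise the geometric tail bound by conditioning on the realised sequence of block sets and exploiting that the message destinations drawn in each round are fresh and independent of the past, so the conditional probability that $v$ avoids $x^*$ in a non-blocked round is at most $1 - p$ regardless of the adversary's strategy.
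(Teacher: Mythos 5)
Your proposal is correct and follows essentially the same route as the paper's proof: a constant per-round probability of receiving $x^*$ once $\Omega(n)$ messages carry it, a pigeonhole argument on the adversary's total blocking budget showing that all but $\tfrac{\epsilon}{\delta}n$ nodes are unblocked in a constant fraction of the remaining $\Theta(\log n)$ rounds, geometric decay of the failure probability for each such node, and a union bound. Your explicit treatment of the independence issue against the strongly adaptive adversary is a welcome refinement of a step the paper's proof handles only implicitly.
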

\begin{proof}
Once the number of nodes that adopted $x^*$ is already $\rho n>\epsilon n$, we do not need to make use of the lateness of the adversary in the proof; in fact, we assume that the adversary is strongly adaptive.
Conditioned on event $M$ and assuming that the adversary blocks up to $\epsilon n$ nodes, we know from \eqref{eq:m_bound} that the number of messages carrying $x^*$ is at least
\[ 
  (1 + \beta)\rho n - 2\epsilon n \ge \beta \epsilon n = \Omega(n),
\]
where we have used the fact that $\rho\ge 2\epsilon$ (cf.\ \eqref{eq:rho}).
Let $X_{i,t}=1$ if and only if 
bin $i$ is nonempty at the end of round $t$ and let $B_{i,t}$ be the indicator random variable that node $i$ is blocked by the late adversary in round $t$. 
Clearly, for any $t\ge t_1$, 
\begin{align} \label{eq:adopt}
  \prob[ X_{i,t}\!=\! 1 \mid B_{i,t}\!=\!0, M, X_{i,t-1}\!=\!0 ] 
    \ge 1 - \left( 1 - \frac{1}{n} \right)^{2\beta \epsilon n}
    \ge 1 - e^{-2\beta \epsilon}, 
\end{align}
which is a constant.
Since the adversary can block at most $\epsilon\cdot c_4\cdot n \log n$ nodes in total during $c_4\log n$ rounds, a simple counting argument tells us that, for any $\delta<1$, at least $(1 - \tfrac{\epsilon}{\delta})n$ nodes are unblocked during at least $(1 - \delta)c_4\log n$ rounds.
By choosing $c_4$ to be a sufficiently large constant depending on $\epsilon$, $\alpha$, and $\delta$, it follows from \eqref{eq:adopt} that the probability of bin $i$ remaining empty in all of the next $c_4\log n$ iterations is at most $n^{-\Omega(1)}$.
Finally, we can take a union bound over the $O(n)$ empty bins to show that at least $(1 - \tfrac{\epsilon}{\delta})n$ nodes adopt the value $x^*$ with high probability after the next $\Theta(\log n)$ iterations.
\end{proof}

From \Cref{lem:final-phase} it follows that value $x^*$ reaches all but a small constant fraction of the nodes. Since $x^*$ is the maximum value in the network after round 1 (cf.\ Line 3), the algorithm satisfies almost-everywhere agreement and the validity property as claimed, thus completing the proof of \Cref{thm:multivalue}.

\section{A Time Lower Bound for Agreement with Limited Communication} \label{sec:lowerbound}

We now study the impact of limited communication on the ability of nodes to solve almost everywhere consensus. In particular, we use the time-tested method of indistinguishable executions to provide a trade-off between the amount of communication available and the time required for termination correctly. 
We point out that these results hold even if the adversary does not block any nodes. 
Consequently, this means that increasing the lateness of the  adversary cannot improve the time complexity bounds of solving almost-everywhere consensus. 

\begin{theorem} \label{thm:lowerbound}
Suppose that each node can send a message to up to $d$ nodes in each round.
There is no randomized Monte Carlo algorithm that achieves almost-everywhere consensus with probability at least $1 - O(\tfrac{1}{n})$ among $>(1 - \epsilon')n$ nodes and terminates in $o(\log_{d} n)$ rounds, where $0<\epsilon'<\tfrac{1}{2}$. 
This holds even if no nodes are blocked by the adversary and each node has a unique identifier known to all other nodes.
\end{theorem}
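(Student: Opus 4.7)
The plan is to run a standard indistinguishability/hybrid argument, with one twist to cope with the adaptivity of the communication pattern. The first step is to fix any coin tape $r$ and consider two input vectors $I,I'$ that differ only in the input assigned to one node $u$; I define $A_t$ to be the set of nodes whose view at the end of round $t$ differs between the two executions under the same $r$. The key lemma I would prove is $|A_t|\le(1+2d)^t$, by induction on $t$: starting from $A_0=\{u\}$, a node $v\notin A_{s-1}$ can enter $A_s$ only by receiving a message in round $s$ from some $w\in A_{s-1}$, either in the execution on $I$ or in the execution on $I'$, and each such $w$ has out-degree at most $d$ per round in each execution, giving at most $2d\,|A_{s-1}|$ new entries.

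Next I would order the nodes by their unique identifiers and consider the hybrid chain $I_0,I_1,\dots,I_n$, where $I_k$ assigns $1$ to the first $k$ nodes and $0$ to the remaining $n-k$; consecutive hybrids differ in exactly one bit. Suppose for contradiction that the algorithm runs in $t=o(\log_d n)$ rounds and achieves almost-everywhere agreement on every input with probability $\ge 1-c/n$ for a sufficiently small constant $c>0$. A union bound over the $n+1$ hybrids shows that with positive probability over the coin flips, the algorithm simultaneously reaches almost-everywhere agreement on all $I_k$; I fix any coin tape $r$ in this event.

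The contradiction then comes from tracing the consensus value along the chain: validity pins it to $0$ on $I_0$ and to $1$ on $I_n$, so there exists an index $k^{*}\in\{1,\dots,n\}$ at which it flips, with $I_{k^{*}-1}$ having at least $(1-\epsilon')n$ nodes decide $0$ and $I_{k^{*}}$ having at least $(1-\epsilon')n$ nodes decide $1$. By inclusion-exclusion, at least $(1-2\epsilon')n$ nodes must flip their decision between these two consecutive hybrids, which differ in exactly one bit. Applying the bound from the first step with $u$ equal to this flipped coordinate yields $(1+2d)^t\ge(1-2\epsilon')n$, and rearranging gives $t\ge\log((1-2\epsilon')n)/\log(1+2d)=\Omega(\log_d n)$, contradicting $t=o(\log_d n)$; note that the hypothesis $\epsilon'<\tfrac{1}{2}$ is precisely what makes $(1-2\epsilon')>0$ so that the contradiction is nontrivial.

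The main obstacle is justifying the bound $|A_t|\le(1+2d)^t$ cleanly, because the adaptivity of the algorithm means that the communication graph itself depends on the input and so the ``causal past'' of a node is not an input-independent object. The way around this is to compare the two executions in parallel and observe that a node joins the affected set only via edges present in at least one of the two executions, so the $d$-out-degree bound applies on both sides and yields the factor $(1+2d)$ per round. A minor secondary point is that the union bound over the hybrid chain requires the constant hidden in the $O(1/n)$ failure probability to be small enough; this can be assumed without loss of generality, either by interpreting the theorem statement quantitatively or by a standard constant-factor amplification that does not change the asymptotic round bound.
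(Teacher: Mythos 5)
Your overall strategy is the same as the paper's -- an indistinguishability/hybrid argument in which the number of nodes whose view can depend on a changed input is bounded by (roughly) $d^t$ -- and several of your steps are actually cleaner than the paper's: the explicit two-execution bound $|A_t|\le(1+2d)^t$ handles the input-dependence of the communication pattern carefully, and the inclusion-exclusion count showing that $>(1-2\epsilon')n$ nodes must flip between the two hybrids where the majority value changes is a tidy way to reach the contradiction.

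However, there is one genuine gap, and it sits exactly where you flagged a ``minor secondary point.'' Your hybrid chain $I_0,\dots,I_n$ has length $n+1$, so the union bound only shows that some execution in the chain fails with probability at least $\tfrac{1}{n+1}$. This does \emph{not} contradict a success probability of $1-O(\tfrac{1}{n})$, since the hidden constant may be $\ge 1$; the theorem as stated must rule out, say, failure probability $\tfrac{5}{n}$. Neither of your proposed patches closes this: ``interpreting the theorem quantitatively'' proves a strictly weaker statement (only algorithms with failure probability below $\tfrac{1}{n+1}$ are excluded), and ``standard constant-factor amplification'' is not standard for consensus -- running $k$ independent copies and taking a per-node majority can fail or degrade the agreement parameter when different successful copies converge a.e.\ to different values, so this would need a real argument and additional constraints on $\epsilon'$. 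The missing idea is the paper's choice of hybrid granularity: instead of flipping one input at a time, flip blocks of $k=\frac{(1-\epsilon')n}{d^t}$ nodes at a time (in your notation, $k\approx\frac{(1-2\epsilon')n}{(1+2d)^t}$), so that each step of the chain still affects fewer than $(1-2\epsilon')n$ nodes while the chain length drops to $O(d^t)=n^{o(1)}$. Then some execution must fail with probability $n^{-o(1)}=\omega(\tfrac{1}{n})$, which genuinely contradicts the $1-O(\tfrac{1}{n})$ hypothesis. With that single modification your argument goes through and matches the paper's proof.
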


Before proving \Cref{thm:lowerbound}, we consider its impact on the complexity bounds of our algorithms.
Since the majority-rule algorithm presented in \Cref{sec:binary} has $d = O(1)$, its time complexity bound is optimal. 
For the maximum propagation algorithm in \Cref{sec:multivalue} we have $d = O(\log n)$, thus applying \Cref{thm:lowerbound} tells us that the bound is optimal up to a factor of $O(\log\log n)$.


The overall structure of our proof follows the classical results of \cite{dolevstrong,CMS89} and a more recent generalization of these techniques to the asynchronous setting presented in \cite{AH2010}, with the important difference that here we consider almost everywhere agreement instead of consensus.
We first introduce some technical machinery.
We recall that an \emph{execution} of an algorithm is a sequence of rounds that is fully determined by the input value assignment of the nodes and $n$ (infinite) bit strings representing the nodes' private sources of randomness.
We use the notation $\alpha\overset{P}{\sim}\alpha'$ to say that executions $\alpha$ and $\alpha'$ are \emph{indistinguishable for a set of nodes $P$}, which means that each $u \in P$ goes through the same sequence of local state transitions in both executions. 
Intuitively speaking, this means that node $u$ cannot tell the difference between $\alpha$ and $\alpha'$.
A necessary and sufficient condition for this is that $u$ observes the same sequence of random bits and messages in both executions, and starts with the same input value.
We define an \emph{indistinguishability chain} as a sequence of executions $(\alpha_1,\alpha_2,\dots)$, where there are nonempty sets of nodes $P_1,P_2,\dots$ such that, for all $i$, it holds that $\alpha_i \overset{P_i}{\sim} \alpha_{i+1}$.
We refer the reader to Chapter 5 in \cite{attiyawelch} for a more complete definition.

We are now ready to prove \Cref{thm:lowerbound}.
For the sake of a contradiction, suppose that there is an algorithm $\cA$ that terminates in $t = o(\log_d n)$ rounds and achieves almost-everywhere consensus among all except $<\epsilon' n$ nodes with high probability.
For parameter
\begin{align} \label{eq:k}
  k = \frac{(1 - \epsilon')n}{d^t},
\end{align}
we consider a partitioning of the set $V$ of $n$ nodes into $\{S_1,\dots,S_{\lceil n/k\rceil}\}$ such that $|S_i|\le k$. 
For $0 \le i \le n/k$, we consider an execution $\alpha_i$ and assign the nodes in $\bigcup_{j=1}^i S_j$ input value $1$, where we follow the convention that $\bigcup_{j=1}^0S_j =\emptyset$.
All other nodes start with input value $0$ in $\alpha$.
This ensures that all nodes start with $0$ in $\alpha_0$ whereas all nodes start with $1$ in $\alpha_{\lceil n/k \rceil}$.
We will now show that $(\alpha_0,\dots,\alpha_{\lceil n/k \rceil})$ forms an indistinguishability chain with respect to large sets of nodes. 

\begin{lemma} \label{lem:indist1}
Suppose that each node is equipped with the same random bits in every $\alpha_i$.
For all $0\le i \le \lceil n/k\rceil$, there exists a set $P_i$ of size $\ge \epsilon' n$, such that $\alpha_i  \overset{P_i}{\sim} \alpha_{i+1}$ during the first $t = o(\log_d n)$ rounds of the algorithm.
\end{lemma}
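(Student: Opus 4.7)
The plan is a standard information-propagation argument tailored to the pull/push communication model with fan-out $d$. Let me define $D_r \subseteq V$ to be the set of nodes whose local state (including message history) at the end of round $r$ is not identical in $\alpha_i$ and $\alpha_{i+1}$. Then setting $P_i := V \setminus D_t$ immediately gives $\alpha_i \overset{P_i}{\sim} \alpha_{i+1}$ through round $t$, so the task reduces to proving $|D_t| \le (1-\epsilon')n$.

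First I would exploit the hypothesis that every node is equipped with the same random bits in both executions: any node $u$ whose state through round $r-1$ coincides in $\alpha_i$ and $\alpha_{i+1}$ must use identical randomness and identical input-dependent data in round $r$, and so sends identical messages to the same set of at most $d$ recipients. Consequently a node $v \notin D_{r-1}$ can only enter $D_r$ if it receives at least one round-$r$ message from some $u \in D_{r-1}$. Since the nodes in $D_{r-1}$ collectively transmit at most $d\cdot|D_{r-1}|$ messages in round $r$, this yields the clean recursion $|D_r| \le (d+1)\,|D_{r-1}|$.

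The base case is $D_0 \subseteq S_{i+1}$ since the inputs of $\alpha_i$ and $\alpha_{i+1}$ coincide outside $S_{i+1}$, so $|D_0|\le k$. Iterating the recursion $t$ times gives $|D_t| \le (d+1)^t\,k$, and substituting the value of $k$ from \eqref{eq:k} yields $|D_t| \le (1-\epsilon')n\cdot (1+1/d)^t$. For $d\ge 2$ and $t = o(\log_d n)$ the factor $(1+1/d)^t$ is only $n^{o(1)}$, so by absorbing this sub-polynomial slack into $k$ (equivalently, by strengthening \eqref{eq:k} to $k = (1-\epsilon')n/(d+1)^t$, which still leaves the number of partition classes $\lceil n/k \rceil = n^{o(1)}$) we obtain $|D_t| \le (1-\epsilon')n$, hence $|P_i| \ge \epsilon' n$ as required.

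The main delicate point is precisely this bookkeeping: the natural inductive growth factor is $d+1$ rather than $d$, so one must either tighten $k$ or trim $t$ by a sub-polynomial amount. This is a minor constant-level adjustment that does not affect the asymptotic bound $\Omega(\log_d n)$ in \Cref{thm:lowerbound}, but it needs to be acknowledged so that the subsequent hybrid argument can take a union bound over the $\lceil n/k\rceil$ adjacent pairs of executions while maintaining $|P_i|\ge \epsilon' n$ for each pair.
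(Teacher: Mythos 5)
Your proof is correct and follows essentially the same route as the paper: bound the set of nodes whose states can differ between $\alpha_i$ and $\alpha_{i+1}$ by a causal-influence argument starting from $S_{i+1}$ and growing by the fan-out each round, then take the complement as $P_i$. Your observation that the per-round growth factor is $d+1$ rather than $d$ (so the bound is $k(d+1)^t$ rather than the paper's $kd^t$) is a legitimate tightening of a minor slip in the paper; as you note, the extra $(1+1/d)^t = n^{o(1)}$ factor is absorbed by adjusting $k$ without affecting the union bound over $\lceil n/k\rceil$ executions or the $\Omega(\log_d n)$ conclusion of \Cref{thm:lowerbound}.
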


\begin{proof}
Consider any $i$.
We observe that the only difference between the initial configurations of $\alpha_i$ and $\alpha_{i+1}$ are the input values for nodes in $S_{i+1}$, and this distinction is known to $|S_{i+1}| \le k$ distinct nodes initially.
Since each node can communicate with at most $d$ other nodes in each round, \eqref{eq:k} tells us that the total number of nodes that can be causally influenced (cf.\ \cite{lamport1978ordering}) by any node in $S_{i+1}$ is at most 
 $k d^t \le (1 - \epsilon')n$ during the first $t$ rounds of the algorithm.
 Thus, there exists a set $P_{i}$ of at least $\epsilon' n$ nodes for which $\alpha_i$ and $\alpha_{i+1}$ are indistinguishable.
\end{proof}
In the remainder of the proof, 
we use \Cref{lem:indist1} and adapt the arguments of \cite{AH2010}.
Let $F$ be the event that algorithm $\cA$ fails by violating almost-everywhere  agreement in at least one $\alpha_i$ in the chain. Note that this is the case when two sets of nodes of size $\ge \epsilon' n$ decide on conflicting values in $\alpha_i$.

A crucial consequence of \Cref{lem:indist1} is that, for each pair $\alpha_i$ and $\alpha_{i+1}$, there is a large set of $\ge \epsilon' n$ nodes that decide on the same value in both executions. 
By the validity property, we know that the decision is $0$ in $\alpha_0$ and $1$ in $\alpha_{\lceil n/k \rceil}$.
Therefore, considering that $\cA$ is a Monte Carlo algorithm that is guaranteed to terminate within $t$ rounds, we have $\prob[ F ] = 1$.

Recall that we have assumed in contradiction that algorithm $\cA$ succeeds with probability at least $1 - O(\tfrac{1}{n})$ and hence, almost-everywhere  agreement can be violated with probability $O(\tfrac{1}{n})$ in any individual execution.
By taking a union bound over the $\lceil \tfrac{n}{k} \rceil + 1$ executions in the chain, we get
\begin{align*}
  1 = \prob[ F ] 
  \le \left( \left\lceil \frac{n}{k} \right\rceil + 1\right) O\left(\frac{1}{n}\right) 
  = O\left(\frac{d^t}{(1 - \epsilon')n}\right) + O\left(\frac{1}{n}\right)  
  =o(1),  
\end{align*}
where we have used the bound \eqref{eq:k} and 
where the final equality follows from the premise $t = o(\log_d n)$.
This is a contradiction and completes the proof of \Cref{thm:lowerbound}.

\section{A Time Lower Bound for Agreement under the Strongly Adaptive Blocking Adversary} 
  \label{sec:aelowerbound}

It is also possible to consider the \emph{strongly adaptive blocking adversary}, who observes the entire state of the current round including the coin flips performed by nodes, before choosing a set of $\le \epsilon n$ nodes that are blocked from sending/receiving communication during the round (cf.\ \Cref{sec:model}).\footnote{We follow the notation of \cite{kuhn-opodis15}, where the adversary that observes the current coin flips and network configuration before making its choice for the next round is called strongly adaptive.}
Since this adversary can simulate the strongly adaptive crash-failure adversary, it is easy to see that the $\tilde\Omega(\sqrt{n})$ lower bounds shown for the latter in \cite{barjoseph} also holds for the strongly adaptive {blocking} adversary.

However, by fully exploiting the adversary's power of adaptively blocking a possibly changing set of nodes, we can strengthen this result to $\Omega(n)$ rounds in \Cref{thm:lb_fullyadaptive}.
To this end, we adapt the lower bound proof of \cite{kerenAdaptive}, who show a lower bound of $\Omega(n^2)$ on the step complexity of reaching consensus in the asynchronous shared memory model under a strongly adaptive crash-failure adversary. 
We point out that in Footnote~1 of \cite{kerenAdaptive}, the authors already observed that their lower bound construction for the asynchronous crash-failure model can also be applied to the synchronous model with mobile failures, which corresponds to our setting of having a blocking adversary:
\begin{quote}
  ``Hiding processes in a layer can be described as a round-based model with mobile failures, where a process that fails in a certain round may still take steps in further rounds [Santoro and Widmayer 1989]. The equivalence between this model and the asynchronous model is discussed by Raynal and Roy [2005].''\footnote{The references mentioned are listed as \cite{DBLP:conf/stacs/SantoroW89} and \cite{DBLP:conf/prdc/RaynalR05} in our bibliography.}
\end{quote}
In the remainder of this section, we guide the reader through the modified lower bound proof and how to apply it to the strongly adaptive blocking adversary.

Note that here we are considering binary almost-everywhere agreement rather than binary consensus and hence we need to adapt the notion of a \emph{$v$-deciding adversary} of \cite{kerenAdaptive}, for consensus input value $v \in \{0,1\}$. 
We define a \emph{configuration in round $r$} to be a tuple $C = (C(u_1),\dots,C(u_n))$, where $C(u_i)$ denotes the local state of node $u_i$ at the beginning of the round, before performing its local computation and coin flips for round $r$.
The probability distribution from which a node $u_i$ samples its random bits depends on its local configuration $C(u_i)$, and hence we denote the product probability space of the local random bits of the nodes with respect to a configuration $C$ by $X^C$.
We say that an adversary who can block $f = \epsilon n$ nodes is \emph{$v$-deciding from configuration $C$ in round $r$} if the probability to reach almost-everywhere agreement on value $v$ is more than $1 - \epsilon_r$, where we define $\epsilon_r = \frac{1}{n^{3/2}} - \frac{r}{(n-f)^3}$, similarly to Section~3 in \cite{kerenAdaptive}.
Note that the probability is computed with respect to the coin flips drawn from $X^C$ in configuration $C$. 

We say that a configuration $C$ is \emph{$v$-potent with respect to a set of adversaries $\cA$}, if there is an adversary in the set $\cA$ that is $v$ deciding from $C$ (c.f.\ \cite{mosesRajsbaum}).  
We define a configuration $C$ to be \emph{$v$-valent with respect to $\cA$} (or simply \emph{univalent}) if $C$ is $v$-potent but not $(1-v)$-potent.
Configuration $C$ is \emph{bivalent} if it is both $0$-valent and $1$-valent.
On the other hand, if $C$ is neither $0$-valent nor $1$-valent, we say that $C$ is \emph{null-valent}, and we call $C$ \emph{non-univalent} if it is either bivalent or null-valent.

The proof of \cite{kerenAdaptive} constructs a sequence of undecided configurations.
First they show (c.f.\ Lemma~4.1 in \cite{kerenAdaptive}) that there is an initial input assignment of the nodes that results in a non-univalent initial configuration, which requires at most one node to fail by crashing. 
In our setting, this requires the adversary to permanently block at most one node.

Then, in the inductive step several cases are distinguished, assuming that the algorithm has reached an undecided configuration.
To this end, a given configuration is extended by a \emph{layer}, which is a collection of steps by a subset $S$ of the nodes (at most $1$ step per each node in $S$).
We can translate such a layer from the shared memory model to a synchronous round in our message passing model by instructing the adversary to block nodes that do not take steps in the layer.
\begin{compactenum}
\item If the current configuration $C$ is null-valent, it is shown that with probability $\ge 1 - O(1/n^3)$, it is possible to extend $C$ by a layer that results in another null-valent configuration. 
(Note that in our setting, a layer simply corresponds to a single round of the algorithm.)
This can be achieved by constructing a layer that does not contain a specific set $B$ of $O(\sqrt{n\log n})$ nodes (and hence these nodes do not take steps), which is admissible in the asynchronous model. 
In our synchronous message passing model, we can achieve the same effect by instructing the adversary to block the nodes in $B$ during the current round. 
A crucial difference to the crash failure adversary of \cite{barjoseph} is that our adversary does not necessarily need to block the nodes in $B$ during future rounds and hence extending a null-valent configuration does not reduce the total amount of nodes that the adversary can block in future rounds. 
Similarly, \cite{kerenAdaptive} achieve the same effect by leveraging the asynchrony of the scheduler to prevent nodes from taking a step in a given layer. 
\item If the current configuration $C$ is bivalent, Lemma~4.3 in \cite{kerenAdaptive} shows that, by just failing at most $1$ additional node $u$, it is possible to extend $C$ to a configuration that is either bivalent, null-valent, or a so called \emph{$v$-switching configuration}, which is a $v$-valent configuration that is reached by extending a $(1-v)$-potent configuration $C$ with a $(1-v)$-deciding adversary whose strategy depends on the sampled coin flips.
  In our setting, we can achieve the same effect by permanently blocking $u$.
\item Finally, in the case where a $(1-v)$-potent configuration $C$ was extended to a configuration $C'$ that is a $v$-switching configuration, Lemma~4.4 of \cite{kerenAdaptive} shows that with probability $1 - O(1/n^{3/2})$, it is sufficient to crash at most one additional node to extend $C'$ to a configuration that is a $(1-v)$-switching configuration or non-univalent. As in the previous case, we can emulate the adversary's strategy by permanently blocking one additional node.
\end{compactenum}
The detailed proof of \Cref{thm:lb_fullyadaptive} follows along the lines of the proof of Theorem~4.8 in \cite{kerenAdaptive}, where the above inductive argument is used to obtain a sequence of $\Omega(n)$ layers (with probability $1 - o(1)$) in which the algorithm remains undecided. This corresponds to a sequence of $\Omega(n)$ rounds in our setting. (As mentioned above, \cite{kerenAdaptive} already point out the possibility of applying their construction to the synchronous model with ``mobile'' failures, which is due to the equivalence of this model with the asynchronous crash-failure model argued in \cite{DBLP:conf/prdc/RaynalR05}. Since our construction does not require new techniques, we omit the formal details of the proof.)

\begin{theorem} \label{thm:lb_fullyadaptive}
Reaching binary almost-everywhere agreement among more than half of the nodes requires at least $\Omega(n)$ rounds with probability $1 - o(1)$ against the strongly adaptive adversary, who observes the entire state of the network including coin flips performed by the nodes, if, in each round, the adversary can block a set of up to $\epsilon n$ nodes, for some constant $\epsilon>0$.
\end{theorem}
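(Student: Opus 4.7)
The plan is to adapt the Attiya--Censor asynchronous shared-memory lower bound to our synchronous message-passing setting, exploiting the equivalence between the asynchronous crash-failure model and the synchronous model with mobile (blocking) failures. The translation is straightforward: a \emph{layer} in the asynchronous proof, i.e.\ a set of steps taken by a subset $S$ of nodes, corresponds to a round in our model in which the adversary blocks exactly the nodes in $V \setminus S$. Because the blocking adversary may choose a different set each round, it enjoys the same combinatorial flexibility as the asynchronous scheduler of Attiya and Censor, with the added freedom of being able to reuse previously blocked nodes later.

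First I would redefine the valency notions with respect to almost-everywhere rather than strict agreement, by calling an adversary $v$-deciding from a configuration $C$ in round $r$ whenever it achieves almost-everywhere agreement on $v$ with probability at least $1 - \epsilon_r$, using the same decreasing sequence $\epsilon_r = n^{-3/2} - r/(n-f)^3$ as in the cited work. The notions of $v$-potent, univalent, bivalent, and null-valent configurations then carry over verbatim. I would then construct a sequence of non-univalent configurations round by round. The base case supplies a non-univalent initial configuration by a suitable choice of input assignment combined with permanently blocking at most one node (analogue of Lemma~4.1 of Attiya--Censor). The inductive step splits into the three cases the authors identify: a null-valent configuration is extended to another null-valent one with probability $1 - O(1/n^3)$ by choosing the current round's blocked set so as to hide a judicious set $B$ of $O(\sqrt{n \log n})$ nodes; a bivalent configuration is extended, by permanently blocking one additional node, to a bivalent, null-valent, or $v$-switching configuration; and a $v$-switching configuration is pushed, with probability $1 - O(1/n^{3/2})$ and one more permanently blocked node, to a non-univalent or $(1-v)$-switching configuration. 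A union bound over $\Omega(n)$ rounds, combined with the observation that only the last two cases contribute to the cumulative count of permanently blocked nodes and each contributes at most one per occurrence, yields a sequence of $\Omega(n)$ rounds in which the algorithm stays undecided with probability $1 - o(1)$, while remaining within the budget of $\epsilon n$ permanent blocks.

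The main obstacle I expect is the careful adjustment of the three extension lemmas (Lemmas~4.1--4.4 of Attiya--Censor) so that they respect the almost-everywhere threshold rather than strict agreement. In particular, the indistinguishability arguments that reshape an adversary's coin-flip-dependent strategy after flipping a single sampled bit must be checked to preserve agreement up to $\epsilon n$ outliers, rather than full unanimity. Once it is verified that the constant slack introduced by the almost-everywhere tolerance can be absorbed into the sequence $\epsilon_r$ without breaking the induction, the remainder of the Attiya--Censor machinery transfers without further changes and yields the claimed $\Omega(n)$ lower bound.
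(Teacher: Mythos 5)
Your proposal follows essentially the same route as the paper: both adapt the Attiya--Censor layered valency argument to the synchronous blocking model by identifying layers with rounds (blocking the nodes that take no step), redefining $v$-deciding adversaries for almost-everywhere agreement with the same error sequence $\epsilon_r$, and running the same three-case induction (null-valent, bivalent, switching) with the same per-step failure probabilities to obtain $\Omega(n)$ undecided rounds with probability $1-o(1)$. The paper likewise leaves the verification that the extension lemmas tolerate the almost-everywhere slack at the level of a sketch, so your plan matches its proof in both structure and level of detail.
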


\section{Experimental Evaluation}\label{sec:experiments}

We have performed simulations of the $(k,\ell)$-majority algorithm for various network sizes and adversarial thresholds. 
As elaborated in more detail below, these results suggest that the algorithm reaches almost-everywhere agreement within $\le 2\log(n)$ rounds (on average) while being resilient against up to $n/15$ adversarial attacks per round.
Moreover, by just increasing the communication degree of each node from $6$ to $12$, i.e., executing the $(12,3)$-majority algorithm, it turns out that agreement is achievable as long as the adversary blocks no more than $n/5$ nodes per round.

\subsection{Setup}

The simulation was executed in the Erlang virtual machine and implemented in the programming language Elixir. 
We briefly describe how to simulate the round structure in this setting.\footnote{We emphasize that the goal of the simulation is to evaluate the theoretical bounds of the $(k,\ell)$-majority algorithm. In particular, we did not aim for an efficient real-world implementation of consensus.}
Each simulated node is executed as a separate (distributed) process that communicates with the other processes via message passing.
The late adversary is also modeled as a separately executing process who is informed by all participating nodes regarding their value at the start of the round, and then, at the end of the round, decides which nodes to block in the next round based on this information. 
In more detail, the adversary computes the current difference between nodes that held a $0$ respectively a $1$ (at the start of the round).
Then, the adversary sends a \texttt{blocking} respectively \texttt{unblocking} command (as a message) to each node, which informs a node about its status in the next round.
We point out that, according to the $(k,\ell)$-majority algorithm (see Section~\ref{sec:binary}), any node $u$ that is blocked discards its current value and adopts  value $\bot$ instead.
Moreover, node $u$ refrains from sending messages to other nodes during the next round and discards all received messages.
The execution starts in a balanced state w.r.t.\ the input values and is terminated if one of the following two conditions hold:
\begin{compactenum} 
  \item[(1)] the difference between the number of nodes supporting $0$ and the number of nodes supporting $1$ is $\ge (\frac{2}{3} - \epsilon)n$; 
  \item[(2)] at least $n/2$ nodes currently hold the reset value $\bot$.
\end{compactenum}
Case~(1) represents reaching almost-everywhere agreement, whereas we assume that the adversary has succeeded if Case~(2) holds and consider the trial as failed.

\subsection{Results}
For each parameter setting, we performed 1000 independent runs of the $(k,\ell)$-majority algorithm. 
The first set of results concern the $(6,3)$-majority algorithm and show that the algorithm succeeded to achieve almost-everywhere agreement in 100\% of the trials as long as $\epsilon \le 1/15$, i.e., when the adversary can block at most $n/15$ nodes in each round; see Figure~\ref{fig:sixthree}. 
The average number of rounds until termination is within $2\log(n)$ rounds, whereas the $95$th percentile is bounded by $3\log(n)$.
For $\epsilon \ge 1/14$, on the other hand, the success rate of the algorithm drops to approximately $80$\% and the 95th percentiles increase significantly. 
Since the algorithm fails when too many nodes hold the reset value $\bot$, which either occurs due to being blocked by the adversary or when a node does not receive sufficiently many messages from other nodes, we also investigated the impact of increasing the number of recipients that each node contacts to disseminate its value from $6$ to $12$.
Figure~\ref{fig:twelvethree} shows that the $(12,3)$-majority algorithm has a significantly higher resilience and achieved a.e.\ agreement in all trials for $\epsilon \le 1/5$.
It may be tempting to further increase the communication by instructing nodes to contact $24$ peers in each round, however, in our simulations this did not provide any improvements regarding the resilience threshold compared to the $(12,3)$-majority algorithm.

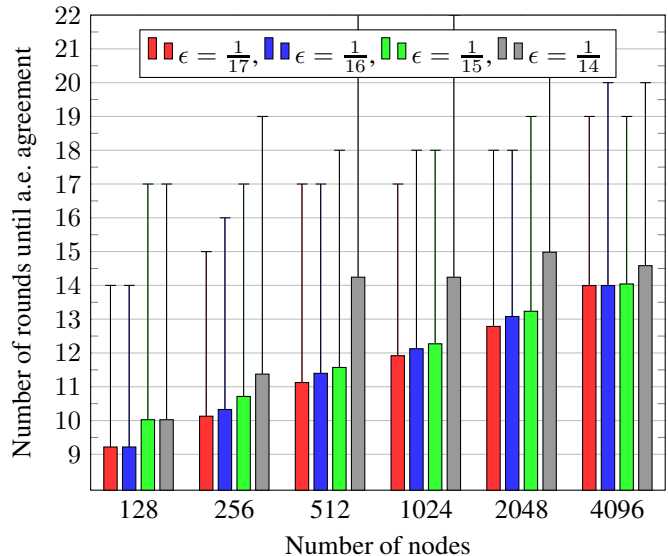
\begin{figure*}
  \centering
\begin{tikzpicture}
\begin{axis}[
  xlabel={Number of nodes},
  ylabel={Number of rounds until a.e.\ agreement},
  width=260pt,
  x tick label style={major tick length=0pt},
  xmode=log,
  log ticks with fixed point,
  xtick=data,
  xticklabels={128,256,512,1024,2048,4096},
  ytick={9,10,11,12,13,14,15,16,17,18,19,20,21,22},
  ymax=22,
  minor y tick num=1,
  ymajorgrids,
  legend style={at={(0.5,0.97)},
  legend entries={{$\epsilon=\tfrac{1}{17}$,},{$\epsilon=\tfrac{1}{16}$,},{$\epsilon=\tfrac{1}{15}$,},$\epsilon=\tfrac{1}{14}$},
  anchor=north,legend columns=-1},
  ybar,
  bar width=5pt,
]
\addplot[fill=red!80,
         error bars/.cd, 
         y dir=plus, 
         y explicit,
        ]
       plot coordinates { 
           (128,9.217) += (0,4.783)
           (256,10.128) += (0,4.872)
           (512,11.125) += (0,5.875)
           (1024,11.918) += (0,5.082)
           (2048,12.785) += (0,5.215)
           (4096,13.994) += (0,5.006)
       };
\addplot[fill=blue!80,
         error bars/.cd, 
         y dir=plus, 
         y explicit,
        ]
      plot coordinates { 
          (128,9.217) += (0,4.783)
          (256,10.327) += (0,5.673)
          (512,11.397)  += (0,5.603)
          (1024,12.125) += (0,5.875)
          (2048,13.079) += (0,4.921)
          (4096,13.994) += (0,6.006)
       };
\addplot[fill=green!80,
          error bars/.cd, 
          y dir=plus, 
          y explicit,
        ]
       plot coordinates { 
           (128,10.0243) += (0,6.9757)
           (256,10.712) += (0,6.288)
           (512,11.571) += (0,6.429)
           (1024,12.271) += (0,5.729)
           (2048,13.234) += (0,5.766)
           (4096,14.042) += (0,4.958)
       };
\addplot[fill=gray!80,
         error bars/.cd, 
         y dir=plus, 
         y explicit,
        ]
       plot coordinates { 
           (128,10.0243)  += (0,17-10.0243) 
           (256,11.374)   += (0,19-11.374)
           (512,14.243)   += (0,27-11.374)
           (1024,14.2417) += (0,22-14.2417)
           (2048,14.9819) += (0,21-14.9819)
           (4096,14.5827) += (0,20-14.5827)
       };
\end{axis}
\end{tikzpicture}
\caption{\small Running time of the $(6,3)$-Majority Algorithm when the adversary can block $\epsilon n$ nodes, for $\epsilon \in \{\tfrac{1}{17},\tfrac{1}{16},\tfrac{1}{15},\tfrac{1}{14}\}$. The filled bars represent the average number of rounds and the error-bars show the 95th percentiles of the 1000 trials for each parameter setting. Note that the algorithm succeeded in all trials for $\epsilon \in \{\tfrac{1}{17},\tfrac{1}{16},\tfrac{1}{15}\}$. For the case $\epsilon=\tfrac{1}{14}$ (shown as the grey bar), the success rate was reduced to $81$\% for $n=4096$, whereas the running time increased for smaller $n$; the (clipped) 95th-percentiles for $n=\{512,1024\}$ and $\epsilon=1/14$ are at $27$ and $22$ rounds, respectively. Almost all trials failed for larger values of $\epsilon$.}
\label{fig:sixthree}
\end{figure*}

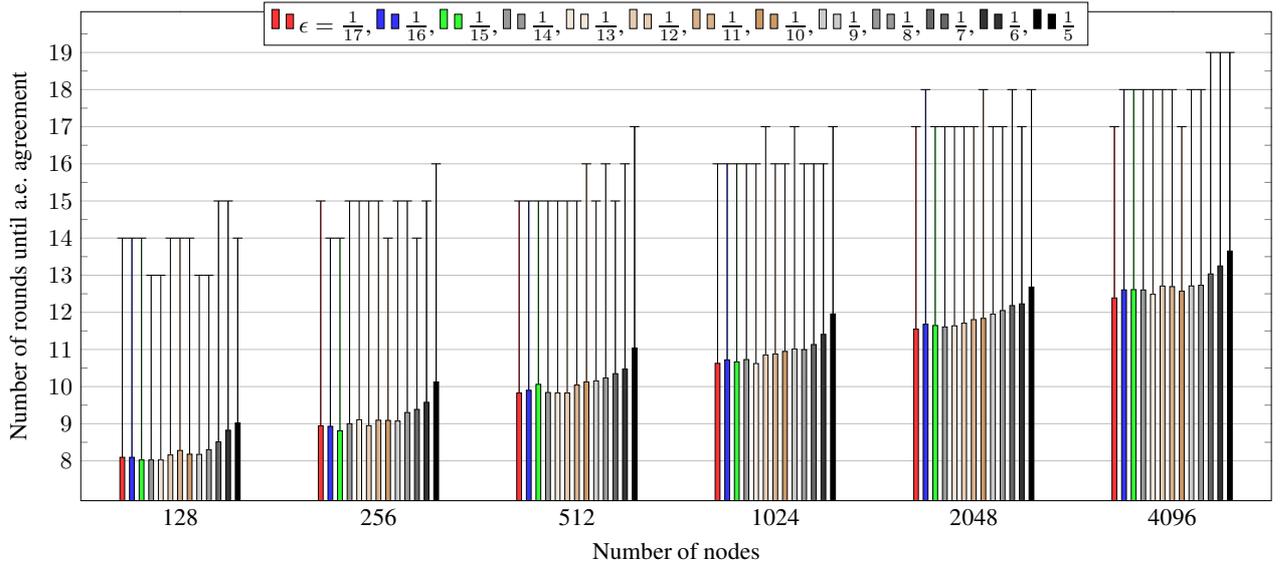
\begin{figure*}
  \centering
\begin{tikzpicture}[scale=.9,transform shape]
\begin{axis}[
  xlabel={Number of nodes},
  ylabel={Number of rounds until a.e.\ agreement},
  width=540,
  height=250,
  scaled ticks=false,
  x tick label style={major tick length=0pt},
  log basis x=2,
  xmode=log,
  log ticks with fixed point,
  ytick={8,9,10,11,12,13,14,15,16,17,18,19},
  xtick=data,
  xticklabels={128,256,512,1024,2048,4096},
  minor y tick num=1,
  ymajorgrids,
  legend style={at={(0.5,1.02)},
  legend entries={ {$\epsilon=\tfrac{1}{17}$,},{$\tfrac{1}{16}$,},{$\tfrac{1}{15}$,},{$\tfrac{1}{14}$,},{$\tfrac{1}{13}$,},{$\tfrac{1}{12}$,},{$\tfrac{1}{11}$,},{$\tfrac{1}{10}$,},{$\tfrac{1}{9}$,},{$\tfrac{1}{8}$,},{$\tfrac{1}{7}$,},{$\tfrac{1}{6}$,},{$\tfrac{1}{5}$}},
  anchor=north,legend columns=-1},
  ybar,
  bar width=2pt,
]
\addplot[fill=red!80,
         error bars/.cd, 
         y dir=plus, 
         y explicit,
        ]
       plot coordinates { 
           (128,8.095) += (0,14-8.095)
           (256,8.945) += (0,15-8.945)
           (512,9.828) += (0,15-9.828)
           (1024,10.626) += (0,16-10.626)
           (2048,11.548) += (0,17-11.548)
           (4096,12.387) += (0,17-12.387)
       };
\addplot[fill=blue!80,
         error bars/.cd, 
         y dir=plus, 
         y explicit,
        ]
      plot coordinates { 
           (128,8.095) += (0,14-8.095)
           (256,8.93) += (0,14-8.93)
           (512,9.901) += (0,15-9.901)
           (1024,10.72) += (0,16-10.72)
           (2048,11.68) += (0,18-11.68)
           (4096,12.605) += (0,18-12.605)
       };
\addplot[fill=green!80,
          error bars/.cd, 
          y dir=plus, 
          y explicit,
        ]
       plot coordinates { 
           (128,8.024) += (0,14-8.024)
           (256,8.808) += (0,14-8.808)
           (512,10.061) += (0,15-10.061)
           (1024,10.666) += (0,16-10.666)
           (2048,11.646) += (0,17-11.646)
           (4096,12.612) += (0,18-12.612)
       };
\addplot[fill=gray!80,
         error bars/.cd, 
         y dir=plus, 
         y explicit,
        ]
       plot coordinates { 
           (128,8.024) += (0,13-8.024)
           (256,9) += (0,15-9)
           (512,9.841) += (0,15-9.841)
           (1024,10.728) += (0,16-10.728)
           (2048,11.607) += (0,17-11.607)
           (4096,12.603) += (0,18-12.603)
       };
\addplot[fill=brown!20,
         error bars/.cd, 
         y dir=plus, 
         y explicit,
        ]
       plot coordinates { 
           (128,8.024) += (0,13-8.024)
           (256,9.107) += (0,15-9.107)
           (512,9.827) += (0,15-9.827)
           (1024,10.621) += (0,16-10.621)
           (2048,11.638) += (0,17-11.638)
           (4096,12.486) += (0,18-12.486)
       };
 \addplot[fill=brown!40,
          error bars/.cd, 
          y dir=plus, 
          y explicit,
         ]
        plot coordinates { 
            (128,8.159) += (0,14-8.159)
            (256,8.948) += (0,15-8.948)
            (512,9.827) += (0,15-9.827)
            (1024,10.851) += (0,17-10.851)
            (2048,11.707) += (0,17-11.707)
            (4096,12.708) += (0,18-12.708)
        };
 \addplot[fill=brown!60,
          error bars/.cd, 
          y dir=plus, 
          y explicit,
         ]
        plot coordinates { 
            (128,8.28) += (0,14-8.28)
            (256,9.095) += (0,15-9.095)
            (512,10.041) += (0,15-10.041)
            (1024,10.878) += (0,16-10.878)
            (2048,11.804) += (0,17-11.804)
            (4096,12.693) += (0,18-12.693)
        };
 \addplot[fill=brown!80,
          error bars/.cd, 
          y dir=plus, 
          y explicit,
         ]
        plot coordinates { 
            (128,8.18) += (0,14-8.18)
            (256,9.09) += (0,14-9.09)
            (512,10.126) += (0,16-10.126)
            (1024,10.949) += (0,16-10.949)
            (2048,11.838) += (0,18-11.838)
            (4096,12.572) += (0,17-12.572)
        };
        
 \addplot[fill=black!20,
          error bars/.cd, 
          y dir=plus, 
          y explicit,
         ]
        plot coordinates { 
            (128,8.175) += (0,13-8.175)
            (256,9.073) += (0,15-9.073)
            (512,10.152) += (0,15-10.152)
            (1024,11.012) += (0,17-11.012)
            (2048,11.95) += (0,17-11.95)
            (4096,12.711) += (0,18-12.711)
        };
 \addplot[fill=black!40,
          error bars/.cd, 
          y dir=plus, 
          y explicit,
         ]
        plot coordinates { 
            (128,8.301) += (0,13-8.301)
            (256,9.302) += (0,15-9.302)
            (512,10.233) += (0,16-10.233)
            (1024,10.996) += (0,16-10.996)
            (2048,12.049) += (0,17-12.049)
            (4096,12.729) += (0,18-12.729)
        };
 \addplot[fill=black!60,
          error bars/.cd, 
          y dir=plus, 
          y explicit,
         ]
        plot coordinates { 
            (128,8.511) += (0,15-8.511)
            (256,9.386) += (0,14-9.386)
            (512,10.348) += (0,15-10.348)
            (1024,11.133) += (0,16-11.133)
            (2048,12.182) += (0,18-12.182)
            (4096,13.034) += (0,19-13.034)
        };
 \addplot[fill=black!80,
          error bars/.cd, 
          y dir=plus, 
          y explicit,
         ]
        plot coordinates { 
            (128,8.825) += (0,15-8.825)
            (256,9.577) += (0,15-9.577)
            (512,10.472) += (0,16-10.472)
            (1024,11.407) += (0,16-11.407)
            (2048,12.228) += (0,17-12.228)
            (4096,13.249) += (0,19-13.249)
        };
 \addplot[fill=black,
          error bars/.cd, 
          y dir=plus, 
          y explicit,
         ]
        plot coordinates { 
            (128,9.02) += (0,14-9.02)
            (256,10.125) += (0,16-10.125)
            (512,11.041) += (0,17-11.041)
            (1024,11.956) += (0,17-11.956)
            (2048,12.681) += (0,18-12.681)
            (4096,13.651) += (0,19-13.651)
        };
\end{axis}
\end{tikzpicture}
\caption{\small Running time of the $(12,3)$-Majority Algorithm when the adversary can block $\epsilon n$ nodes, for $\epsilon \in \{\tfrac{1}{17},\tfrac{1}{16},\dots,\tfrac{1}{5}\}$. The filled bars represent the average number of rounds and the error bars show the 95th percentiles. The algorithm achieved a.e.\ agreement in all of the $1000$ trials that were executed for each choice of $\epsilon$. When $\epsilon\ge \tfrac{1}{4}$, the success rate deteriorates significantly and the algorithm achieves a.e.\ agreement in fewer than $1\%$ of the trials.}
\label{fig:twelvethree}
\end{figure*}

\section{Conclusion and Future Work}
We have initiated the study of reaching agreement in the late adversarial model, where the adversary has a view that is outdated by $1$ round.
Our results raise the question if increasing the ``lateness'' of the adversary to a larger number of rounds results in improvement of the time complexity of our algorithms.
Our lower bound gives a negative answer to this question.
It would be interesting to investigate the impact of lateness on the complexity of distributed algorithms also for other problems, since it is realistic to assume that typically an adversary will not have the most up-to-date information about a distributed system.

\bibliographystyle{plain}
\bibliography{references}

\end{document}